\pgfplotsset{compat=newest}
\definecolor{red}{rgb}{0.7,0.15,0.15}
\definecolor{green}{rgb}{0,0.5,0}
\definecolor{blue}{rgb}{0,0,0.7}
\makeatletter \@addtoreset{equation}{section}
\newtheorem{theorem}{Theorem}
\newtheorem{theorem2}{Theorem}[section]
\newtheorem*{assumption_sig*}{Assumption $\Ac_\sigma$}
\newtheorem*{assumption_lamb*}{Assumption $\Ac^\lambda$}
\newtheorem*{assumption_max*}{Assumption $\Ac^\text{max}$}
\newtheorem*{assumption_fix*}{Assumption $\Ac^\text{fix}$}
\newtheorem{corollary}[theorem2]{Corollary}
\newtheorem{proposition}[theorem2]{Proposition}
\newtheorem{remark}[theorem2]{Remark}
\newcommand{\comment}[1]{}
\def \C{\mathbb{C}}
\def \E{\mathbb{E}}
\def \F{\mathbb{F}}
\def \N{\mathbb{N}}
\def \P{\mathbb{P}}
\def \R{\mathbb{R}}
\def\Ac{{\cal A}}
\def\Cc{{\cal C}}
\def\Fc{{\cal F}}
\def\Lc{{\cal L}}
\def\Mc{{\cal M}}
\def\Nc{{\cal N}}
\title{A Bayesian viewpoint on the price formation process}
\author{Joffrey {\sc Derchu}\footnote{\'Ecole Polytechnique, CMAP, 91128, Palaiseau, France, joffrey.derchu@polytechnique.edu. This work benefits from the financial support of the ERC Grant 679836 Staqamof, the Chaires Analytics and Models for Regulation, Financial Risk, Deep finance and statistics, Machine learning and systematic methods. The author would like to thank Thibaut Mastrolia and Mathieu Rosenbaum for carefully reading this paper.}}
\begin{document}

\maketitle
\begin{abstract}
We introduce a simple framework in which market participants update their prior about an efficient price with a model-based learning process. We show that exponential intensities for the arrival of aggressive orders arise naturally in this setting. Our approach allows us to fully describe market dynamics in the case with Brownian efficient price and informed market takers. We are also able to revisit the emergence of market impact due to meta-order splitting, making several connections with existing literature.

\noindent{\bf Keywords: market microstructure, Bayesian filtering, price formation, Zakai SPDE, market impact}.
\end{abstract}

\section{Introduction}

The way information about an asset is conveyed to market participants and generates price dynamics is usually referred to as the price formation process. Its study is of both theoretical and practical interest. In particular it explains how economic trends are reflected in the market and is the key to understand and optimize trading costs due to market impact when designing trading algorithms, see \cite{almgc,lo}.
In this paper, we present a framework based on the simple idea that market participants have a model for an unobserved efficient price which should describe the fair value of the asset. They believe that market dynamics, by equating offer and demand, provide information on this price, and they try to estimate this efficient price by continuously updating their view about it using Bayesian updates. The goal of this paper is to derive the dynamics of the market participants' views on this price and of the mid-price in this framework. By using a filtering approach in continuous time to model the learning of the efficient price by the market participants, we are thus able to describe a general mechanism of price formation.\\

We consider a market with a single asset whose efficient price is non observable. This is classical in the study of price and liquidity dynamics, see for example \cite{delattre,gueantlehalle,jaisson,madhavan, stoikov}. In \cite{jaisson} for example, a ``fair price'' is used to link the spread and the impact of market orders by making a zero Profits and Loss assumption for the market makers. In our model there are three types of market participants:
\begin{itemize}
    \item Market observers: these market participants have a view on the efficient price and update it continuously according to the observed market dynamics. They can either stand for high-frequency traders who compute metrics in order to optimize some algorithms, or for softwares which provide a list of indicators for trading desks. In our framework they simply try to estimate the efficient price and they do not interact with the market, except for one particular observer, called the market maker. The market maker is a special observer, as he is responsible for setting the bid and ask prices, and may use his personal metrics to do so. There is only one market maker. We consider two different settings: either the bid and ask prices are fixed (independently of everything else), or they depend on the view of the market maker on the efficient price.
    \item Market takers: these market participants know the efficient price in our model, and trigger trades. In the following we refer to a trade at the ask price as a buy trade and a trade at the bid price as a sell trade. We make the assumption that those trades can be described by two jump processes (one for buy trades and one for sell trades) with intensities which are functions of the difference between the efficient price and the bid or ask price, so the market takers are somewhat opportunistic. In that sense our approach is close to that of \cite{delattre}.
    \item Meta-traders: these market participants trigger large sequences of trades (called meta-orders) according to some schedule which does not depend on the efficient price. We will only consider such participants in Section \ref{sec::meta}.
\end{itemize}
The market observers see the bid and ask prices continuously as well as the trades. For the sake of tractability we assume that they consider that all the trades are initiated by the opportunistic market takers. As a consequence the meta-orders will be indistinguishable from informed trades in Section \ref{sec::meta} and thus they will be treated as a perturbation. It is similar to the approach of \cite{jusselin}. We construct from the observed trades the learning process of a given observer, and we derive how each trade modifies his view on the efficient price. In particular we investigate how the trades lead to market impact. Time is continuous in our model, contrary to the Kyle model \cite{Kyle1985ContinuousAA} or the model of \cite{lillo} on market impact. This allows us in particular to describe how the views evolve in the intervals between the trades and how the bid and ask prices shape the views on the efficient price. Moreover our model uses neither a non-arbitrage argument as in \cite{lillo} or \cite{jaisson}, nor the fact that some agent optimizes a utility function.\\

Here, similarly to \cite{delattre}, we suppose that the market observers believe that the trades contain information on the efficient price: the smaller the difference between the efficient price and the bid or ask price, the more frequent the trades, but the trades' arrival remains random. Instead of estimating the functional linking the efficient price to the intensity of market orders and then recovering the efficient price by making the time horizon go to infinity as in \cite{delattre}, we assume that the observers know this functional and they try to estimate the efficient price online. Their knowledge about the efficient price takes the form of a probability distribution\footnote{We might also refer to this distribution by the names ``prior'' or ``posterior'' depending on the context.}, which represents the probability for the current efficient price to lie in some interval, given the initial knowledge and the information gathered by observing the market orders. Practical motivations for trying to estimate an efficient price include optimizing decision making by better predicting order flow, getting a view of what other participants deem a reasonable price, or anticipate the reaction of the market to the order flow. Also, the views on an efficient price reflect in some way the general consensus on the right price to trade at, so studying the learning process gives us insights into the way prices are formed.\\

In this framework, the shape of the function describing the arrival intensity of market orders is important. A classical hypothesis is to take a decreasing exponential function of the one-sided spread $d$\footnote{By this we mean the distance between the trade price (either the bid price or the ask price) and the efficient price.} of the form
$$\lambda(d)=\lambda_0e^{-ad}$$ for the intensity of arrival of market orders\footnote{Power-law distributions have also been proposed, see for example \cite{stat_prop_bouch}.}. This functional form has been particularly interesting to obtain explicit results in the optimal execution literature, see \cite{avel_stoik,cartea2015algorithmic,gueantlehalle} and in regulation, see \cite{laruelle_mifid2}. Our framework enables us to show that exponential functions are the only natural candidates for the intensity of arrival of market orders, as they appear to be the only possible functions which satisfy two particular properties for our learning process. On the one hand, they ensure that, information on the efficient price is brought to the market mostly through the timestamps of the transactions and their clustering patterns. On the other hand, the exponential intensity of orders decouples the mid-price and the spread in the dynamics of the learning process. While the mid-price becomes the price towards which our estimation of the efficient price tends to go, the spread only dictates how fast the observer accepts the mid-price.\\

Then by setting $\lambda$ as a decreasing exponential function, we are able to derive closed form formulas to describe the price formation process. Taking the particular example of a fixed efficient price, we show that between two trades, and due to the symmetry between buying and selling, an observer's estimation of the efficient price tends to the mid-price, with a characteristic time
$$t_1=\frac{e^{a\frac{S^a-S^b}{2}}}{2\lambda_0},$$
where $a$ and $\lambda_0$ are the parameters used previously in the definition of the intensity function, and $S^b$ and $S^a$ are the current bid and ask prices.
Although fixed bid and ask prices are unrealistic, this last result gives some insights into illiquid markets. In particular, to justify this approach, we give a stability result for the market between two trades: if the bid and ask prices are set symmetrically around the average of a symmetric prior, then the mid-price will not change between two trades.\\

We then consider an approximation of our learning process with dynamic bid and ask prices. In this approximation, if the initial prior is Gaussian, then the posterior remains Gaussian. Furthermore, we are able to derive explicit expressions for the mean and the variance of the posterior if the efficient price is constant or follows a Brownian motion with known volatility. The variance converges to a constant $\sigma_\infty^2$ given by
$$\sigma_\infty^2=\frac{\sigma\sqrt{t_1}}{a}$$
where $\sigma$ is the volatility of the efficient price. This variance can be understood as an observer's asymptotic confidence in his estimation of the efficient price. Our model also allows us to derive formulas for the market impact of a meta-order. While \cite{Kyle1985ContinuousAA} predicts an impact which increases linearly in the total traded volume, it has been largely disproved empirically for small time horizons, see for example \cite{Almgren2005DirectEO,bouchbona,wael,torre}. The square root law has been widely accepted to describe the impact as a function of the traded volume, see \cite{bershova,torre,sq_opt} for instance. However, other exponents, see \cite{Almgren2005DirectEO,ferraris,kissel,moro}, or even logarithmic functions, see \cite{Bouchaud2008HowMS,zarinelli}, have also been considered. Our model, though it is certainly stylized, helps better understand how certain impact shapes can emerge, and why this problem can be controversial. The main theoretical explanations for the shape of the market impact connect it to the persistence of the order flow, see \cite{bouchgefen,jaissonant,jusselin,longmem}, or the size distribution of meta-orders, see \cite{lillo}. For example, in \cite{lillo}, the authors link the size distribution of market orders to the impact by assuming that the market observers try to guess whether a meta-order is being executed and that the market makers make no overall profit. In \cite{jaissonant}, the author supposes that permanent impact is linear in the size of the meta-order and links the long memory exponent of the sign of market orders to the transient market impact.\\

Here we do not assume any distribution for the size of meta-orders nor any intricate strategy by the market makers. We simply look at how the estimation of the efficient price moves when trades happen and thus market impact is driven by the hidden estimates of the efficient price each market observer can have. The long memory of market dynamics is somehow replaced here by the modeling assumption that market orders follow a jump process with an intensity which depends on the efficient price. This is in a way similar to the approach of \cite{toth} where the shape of a hidden latent order book is responsible for market impact form. We will consider two cases: in one case we fix the bid and ask prices and in a second case we make the market maker an observer who has some rule to set the bid and ask prices given his own view on the efficient price. In the first case we define market impact as the change in the prior of an observer, and in the second case we define it as the change in the mid-price (which is directly linked to the change in the prior of the market maker). In both cases we find concave market impacts. For example, in the second case and if we assume that the spreads are small, then the expected impact on the mid-price of a meta-order executed between $t=0$ and $t=T$ with $\beta$ buy orders per second is
\begin{align*}
    \frac{\beta t_1}{a}e^{-\frac{a\sigma}{\sqrt{t_1}}t}(e^{\frac{a\sigma}{\sqrt{t_1}}(t\wedge T)}-1).
\end{align*}
This is linear in the trading speed and concave in time. It is also bounded when $T$ goes to infinity, which can be explained by the fact that the market takers still trade according to the difference between the efficient price, which they know, and the mid-price, which they observe. In this framework the efficient price is not impacted so at some point the meta-order is balanced out by the orders from the opportunistic market takers. We also show that under the same assumptions the impact becomes linear if no information about the efficient price is received while the meta-order is executed \textit{i.e.} if no trades are triggered by opportunistic market takers, and we link the slope of the impact to the asymptotic confidence $\sigma_\infty^2$.\\

For general spreads and for a fixed efficient price, we consider two limiting regimes. For a fast meta-order, the market impact is logarithmic in the traded volume, and for a slow meta-order it is approximately constant and equal to $\frac{\beta t_1}{a}$. This recovers the infinite slope at very short time, as in the case of the square-root law, see \cite{Bouchaud2008HowMS,toth}. For intermediate speeds we give a recursive formula where we show that the $\sinh$ of the impact is the key quantity to compute. Interestingly the market impact in our model is linked directly to the intensity function of the market orders: in most cases we can expect an $\textnormal{arcsinh}$ impact, which may appear similar to a square-root impact. Linear impact and logarithmic impact can appear in limiting cases.\\

Our paper considers only the case where the information on the efficient price is contained in the intensity of the arrival of the aggressive orders. As a consequence we only derive results regarding transient market impact, and the model does not account for market movements which are due to an optimization of quotes in response to a meta-order and a fixed prior, \textit{i.e.} adverse selection by the market maker related to the detection of the meta-order. The study of a model in which the market takers are themselves learning and thus create nontrivial market dynamics is left for further research.\\

The paper is organized as follows. We start by introducing the framework and the modelling choices in Section \ref{sec::model}. Then we consider the case of a fixed efficient price in Section \ref{sec::fixeff}. In Section \ref{sec::approx} we introduce an approximation which yields closed form dynamics. We study the impact of meta-orders in Section \ref{sec::meta}. Finally, some proofs are relegated to the appendix.

\section{Model description}\label{sec::model}
We consider a market participant who tries to estimate the efficient price of the asset $S$ which follows a one-dimensional process. 
The market participant is an observer, and he does Bayesian updates on his prior for $S$ (a probability distribution on $\R$) given a model on the dynamics of $S$ and the available information. We refer to \cite{lipcer2001statistics} for an introduction to Bayesian filtering, \cite{cvitanic2006filtering} or \cite{frey2001nonlinear} for financial applications and \cite{RUPNIKPOKLUKAR2006558} for an extension to jump processes. In our setting, the available information is made of the history of the bid and ask prices and of the trades.

\subsection{Framework}\label{subsec::frame}
We consider a filtered probability space $(\Omega,\F,\P)$ on which the dynamic of the efficient price $S$ is given by
\begin{align}\label{eq::model}
    dS_t = \mu_S(t,S_t)dt + \sigma_S(t,S_t)dW_t
\end{align}
where $W$ is a Brownian motion, $\mu_S$ and $\sigma_S$ are two real-valued continuous Lipschitz functions, which are known by the observer. In particular there is strong existence and uniqueness of $S$ given some initial value $S_0\in\R$. The initial efficient price $S_0$ follows some probability distribution $\pi^0$ on $\R$.
For any $t\geq 0$, define the operator $\Lc_t$ on $\Cc^2_c(\R)$ the set of twice differentiable functions with continuous first and second derivatives and with compact support by 
\begin{align*}
    \Lc_t f = \mu_S(t,.)\partial_x f + \frac{1}{2}\sigma_S(t,.)^2\partial^2_{xx} f
\end{align*}
for any $f\in \Cc^2_c(\R)$. Also define its adjoint operator $\Lc_t^*$ by $\Lc_t^* f = -\partial_x(\mu_S(t,.) f) + \partial^2_{xx}(\frac{1}{2}\sigma_S(t,.)^2 f)$.\\

Let the ask and bid prices $S^a$ and $S^b$ be two càdlàg processes to be specified later. Let $N^a$ and $N^b$ be two jump processes with unit jumps and compensators given by $\lambda(S^a_{t-}-S_t)$ and $\lambda(S_t-S^b_{t-})$ on the ask and bid side respectively. The intensity function $\lambda$ is a continuous, non-negative and decreasing function on $\R$ (see Section \ref{sec::exp_int} for more details on the choice of $\lambda$). The processes $N^a$ and $N^b$ are respectively the number of trades on the ask side and the bid side. Trades happen on the ask side with an intensity $\lambda(S^a_{t-}-S_t)$ which is decreasing with respect to the distance between the ask price and the efficient price. Similarly trades happen on the bid side with an intensity $\lambda(S_t-S^b_{t-})$ which is decreasing in the distance between the efficient price and the bid price. Let $(\Fc_t)_{t\geq 0}$ be the filtration associated with an observer, defined as the completion of the filtration generated by $S^a$, $S^b$, $N^a$ and $N^b$. Note that neither $W$ nor $S$ are observable. We denote by $\E$ the expectation under $\P$.

\begin{remark}
We do not consider the case where the observer uses erroneous dynamics in his estimation of the efficient price. However the filtration is generated only by $S^a$, $S^b$, $N^a$ and $N^b$, and most of our results aim at expressing the learning process using only those four processes (and not the law of $S$). So, even if the participants had erroneous models, most of our results could still be used to understand the market dynamics.
\end{remark}

\subsection{Filtering equation}
We consider an observer who tries to estimate $S$ from the filtration $(\Fc_t)_{t\geq 0}$. He adopts a purely Bayesian point of view, \textit{i.e.} he aims at computing $\E[f(S_t)|\Fc_t]$ for any bonded function $f$. At time $t=0$, he has a prior on $S$ given by the probability measure $\pi^0$ on $\R $. Then he updates this measure given the information he receives. We now recall the filtering equations associated to the filtration $(\Fc_t)_{t\geq 0}$, see for example \cite{RUPNIKPOKLUKAR2006558}.\\

We say that a process $\rho$ on the space of measures on $\R $ defined for any $f\in \Cc_c^2(\R  )$ by $\rho[f]_t=\int f(s)d\rho_t(s)$ is a solution of the Zakai equation if $\rho[f]_t$ is càdlàg and if
\begin{equation}\label{eq::zakai}
    \begin{split}
    d\rho[f]_t = & \rho[\Lc_t f]_t dt +  (\rho[f(.)\lambda(S_{t-}^a-.)]_{t-} - \rho[f]_{t-}) (dN^a_t-dt)\\ &+ (\rho[f(.)\lambda(.-S_{t-}^b)]_{t-} - \rho[f]_{t-}) (dN^b_t-dt).
    \end{split}
\end{equation}
for any $f\in \Cc_c^2(\R  )$. \\

Classical uniqueness results ensure that the posterior is fully determined by the solution of this equation if $S^a$ and $S^b$ are deterministic. We indeed have the following result.
\begin{proposition}\label{lem::uniq}
Suppose that $S^a$ and $S^b$ are deterministic and that 
$$0<\lambda_-\leq\lambda\leq\lambda_+$$
for some constants $\lambda_-$, $\lambda_+$, and $\lambda$ the intensity function defined in \eqref{subsec::frame}.
Then, for any $t\geq 0$ and $f\in \Cc_c^2(\R)$, $\rho[f]_t=\bar\E[f(S_t)|\Fc_t]$ is the unique solution to the Zakai equation \eqref{eq::zakai} with $\rho_0=\pi^0$. The expectation $\bar\E$ is taken under the measure $\bar\P$ given by the change of probability $\frac{d\bar\P}{d\P}|_t=\Gamma_t$ with
\begin{align*}
    \Gamma_t = e^{\int_0^t-\log(\lambda(S^a_{s-}-S_s))dN^a_s-\int_0^t(1-\lambda(S^a_{s-}-S_s))ds}e^{\int_0^t-\log(\lambda(S_s-S^b_{s-}))dN^b_s-\int_0^t(1-\lambda(S_s-S^b_{s-}))ds}.
\end{align*}
The law of $S_t$ given $\Fc_t$ is then described by $\pi[f]_t = \E[f(S_t)|\Fc_t]=\frac{\rho[f]_t}{\rho[1]_t}$.
\end{proposition}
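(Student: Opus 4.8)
The plan is to follow the classical reference-probability (change-of-measure) approach to nonlinear filtering with point-process observations, as in \cite{RUPNIKPOKLUKAR2006558}. The structural fact that makes everything work is that, because $S^a$ and $S^b$ are deterministic, the observation filtration $(\Fc_t)_{t\ge0}$ is generated solely by the two counting processes $N^a$ and $N^b$; this turns $(\Fc_t)$ into a point-process filtration enjoying the predictable representation property, which is exactly what is needed both for the derivation of \eqref{eq::zakai} and for its uniqueness.

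First I would check that $\Gamma$ is a true $\P$-martingale, not merely a local one, so that $\bar\P$ is a bona fide probability measure equivalent to $\P$; the two-sided bound $0<\lambda_-\le\lambda\le\lambda_+$ is used precisely here, to control the jumps and the compensator of this Dol\'eans--Dade exponential. Girsanov's theorem for marked point processes then shows that under $\bar\P$ the pair $(N^a,N^b)$ consists of independent standard Poisson processes (unit intensity, compensator $dt$), that the law of the signal $S$ is unchanged, and that $(N^a,N^b)$ is independent of $S$ --- the change of measure being purely of jump type, it leaves the Brownian part driving $S$ intact. The abstract Bayes rule then yields the Kallianpur--Striebel representation
\begin{align*}
\pi[f]_t=\E[f(S_t)\mid\Fc_t]=\frac{\bar\E[f(S_t)\,\Gamma_t^{-1}\mid\Fc_t]}{\bar\E[\Gamma_t^{-1}\mid\Fc_t]}=\frac{\rho[f]_t}{\rho[1]_t},
\end{align*}
with the unnormalized filter $\rho[f]_t=\bar\E[f(S_t)\,\Gamma_t^{-1}\mid\Fc_t]$; the lower bound $\lambda_-$ guarantees $\Gamma_t^{-1}>0$ and hence $\rho[1]_t>0$, so the ratio is well defined.

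Next I would derive \eqref{eq::zakai}. Writing $Z=\Gamma^{-1}$, It\^o's formula gives $df(S_t)=\Lc_t f(S_t)\,dt+\sigma_S(t,S_t)\partial_x f(S_t)\,dW_t$ and, for the pure-jump likelihood, $dZ_t=Z_{t-}[(\lambda(S^a_{t-}-S_t)-1)(dN^a_t-dt)+(\lambda(S_t-S^b_{t-})-1)(dN^b_t-dt)]$, which is a $\bar\P$-martingale. Because $S$ is continuous, $f(S)$ and $Z$ have no common jumps and their covariation vanishes, so the product rule produces the semimartingale decomposition of $f(S_t)Z_t$. It then remains to take the $\Fc_t$-optional projection under $\bar\P$: the stochastic integral against $dW$ projects to zero, because it is a $\bar\P$-martingale with no common jumps with $\tilde N^a=N^a-t$ and $\tilde N^b=N^b-t$, hence orthogonal to all $\Fc$-martingales by the Poisson representation property; the compensated-Poisson integrals project onto their $\Fc$-predictable versions, and using $\bar\E[f(S_s)\lambda(S^a_{s-}-S_s)Z_{s-}\mid\Fc_{s-}]=\rho[f\lambda(S^a_{s-}-\cdot)]_{s-}$ (and its bid analogue) one recovers exactly the ask and bid jump coefficients of \eqref{eq::zakai}, while the drift projects to $\rho[\Lc_s f]_s\,ds$. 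This simultaneously exhibits a solution and identifies it with the unnormalized filter.

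The main obstacle is the uniqueness statement, which must be handled in the weak, measure-valued formulation (tested against $f\in\Cc_c^2(\R)$). Here I would exploit that the coefficients of \eqref{eq::zakai} depend only on the deterministic $S^a,S^b$ and on $\rho$ itself (the hidden $S$ does not appear), so that, conditionally on a realization of $(N^a,N^b)$ with its finitely many jump times $\tau_1<\cdots<\tau_n$ on $[0,t]$, the equation is a deterministic evolution: on each inter-jump interval it is the linear measure-valued equation $\tfrac{d}{dt}\rho[f]_t=\rho[\Lc_t f]_t-\rho[f(\lambda(S^a_t-\cdot)+\lambda(\cdot-S^b_t)-2)]_t$, and at each $\tau_i$ it undergoes the explicit bounded update $\rho_{\tau_i}[f]=\rho_{\tau_i-}[f\,\lambda(S^a_{\tau_i-}-\cdot)]$ (or its bid analogue). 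For two solutions with the same initial datum $\pi^0$, their difference $\nu$ solves the homogeneous linear equation on each interval; pairing $\nu_t$ with the solution $\phi_t$ of the dual backward equation $\partial_t\phi+\Lc_t\phi-(\lambda(S^a_t-\cdot)+\lambda(\cdot-S^b_t)-2)\phi=0$ with terminal condition $g$ --- alternatively a direct Gronwall estimate in total-variation norm, both licensed by the boundedness of $\lambda$ and the Lipschitz regularity of $\mu_S,\sigma_S$ --- forces $\nu\equiv0$, and the jump maps are deterministic bijections. Composing over the finitely many intervals yields uniqueness, and together with the representation above this proves the proposition.
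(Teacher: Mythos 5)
You should first note that the paper does not actually prove Proposition \ref{lem::uniq}: its ``proof'' is a pointer to the filtering literature (Qiao; Ceci and co-authors; Mandrekar et al.), and your sketch is essentially a self-contained rendering of the reference-measure argument used in those works, so in spirit you are on the same route the paper delegates to. Two things in your write-up are genuinely good. First, the structure of the existence half (point-process Girsanov to make $N^a,N^b$ unit Poisson and independent of $S$ under $\bar\P$, Kallianpur--Striebel, It\^o product rule for $f(S_t)Z_t$ with $Z=\Gamma^{-1}$, vanishing of the $dW$-projection by orthogonality to the purely discontinuous $\Fc$-martingales, predictable projection of the compensated jump integrals) is the correct and standard one. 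Second, you implicitly repaired a typo in the statement: as printed, $\rho[f]_t=\bar\E[f(S_t)|\Fc_t]$ cannot be the unnormalized filter, since under $\bar\P$ the signal is independent of $\Fc_t$ and this quantity would be deterministic; your $\rho[f]_t=\bar\E[f(S_t)\,\Gamma_t^{-1}|\Fc_t]$ is the intended object, and it is what makes $\pi[f]_t=\rho[f]_t/\rho[1]_t$ meaningful.

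The genuine gap is in the uniqueness half. Your reduction to a deterministic measure-valued evolution between jump times, with explicit jump updates $\rho_{\tau_i}[f]=\rho_{\tau_i-}[f\,\lambda(S^a_{\tau_i-}-\cdot)]$, is legitimate (any solution is $\Fc$-adapted, hence a functional of the observed paths), but neither of the two arguments you offer to close it works under the paper's hypotheses. The Gronwall estimate in total variation fails outright: the map $\nu\mapsto\nu[\Lc_t f]$ involves two derivatives of $f$, so $\sup\{\nu[\Lc_t f]:\ f\in\Cc_c^2(\R),\ \|f\|_\infty\le 1\}$ is not controlled by the total variation of $\nu$, and no closed inequality is available. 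The duality argument is the right idea, but it requires a bounded $\Cc^{1,2}$ solution $\phi$ of $\partial_t\phi+\Lc_t\phi-c_t\phi=0$ with terminal data $g$, and Lipschitz continuity of $\mu_S,\sigma_S$ does not provide one: $\sigma_S$ may be degenerate (the paper later works with $\sigma_S=0$), in which case the backward equation is a degenerate/first-order PDE with no classical solution in general, and one must also reconcile compactly supported test functions with a dual solution that has no compact support. Closing this step needs real additional input --- regularization of the coefficients plus a stability argument, a Feynman--Kac/mild (semigroup) formulation of the duality, or the filtered martingale problem machinery of Kurtz--Ocone --- which is precisely the content of the references the paper cites in lieu of a proof. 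As it stands, your proposal proves existence and the Kallianpur--Striebel representation, but asserts rather than proves uniqueness.
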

\begin{proof}
The proof can be found in \cite{qiao2}, see also \cite{ceci,CLAUDI-2012,qiao,mandrekar}.
\end{proof}
The measure $\rho$ is called the unnormalized filter, as opposed to $\pi$, which we call the normalized filter. If the solution of the Zakai equation has a smooth density $\hat m$, an integration by parts shows that it is a solution to the following Zakai SPDE
\begin{equation}\label{eq::zakai_spde}
    \begin{split}
    d\hat m_t(x) = & (\Lc^*_t \hat m_t)(x)dt\\
    &+ \hat m_{t-}(x)(\lambda(S_{t-}^a-x) - 1) (dN^a_t-dt) \\
    &+ \hat m_{t-}(x)(\lambda(x-S_{t-}^b) - 1) (dN^b_t-dt).    
    \end{split}
\end{equation}

\begin{remark}
Similarly, under the same assumptions as in Proposition \ref{lem::uniq}, the normalized filter $\pi$ is the unique solution of the Kushner-Stratonovich equation: for any $f\in \Cc_c^2(\R  )$, $\pi[f]_t$ is càdlàg and
\begin{equation}\label{eq::ks}
    \begin{split}
    d\pi[f]_t = &\pi[\Lc_t f]_t dt +(-\pi[f(.)(\lambda(S_t^a-.)+\lambda(.-S_t^b))]_t +\pi[f]_t\pi[(\lambda(S_t^a-.)+\lambda(.-S_t^b))]_t)dt \\
    &+ \big(\frac{\pi[f(.)\lambda(S_{t-}^a-.)]_{t-}}{\pi[\lambda(S_{t-}^a-.)]_{t-}} - \pi[f]_{t-}\big) dN^a_t \\
    &+ \big(\frac{\pi[f(.)\lambda(.-S_{t-}^b)]_{t-}}{\pi[\lambda(.-S_{t-}^b)]_{t-}} - \pi[f]_{t-}\big) dN^b_t.
    \end{split}
\end{equation}
for any $f\in \Cc_c^2(\R  )$.
\end{remark}
\comment{Similarly, the normalized density $m$ is the solution to the Kushner-Stratonovich SPDE
\begin{equation}\label{eq::ks_spde}
    \begin{split}
    dm_t(x) = & (\Lc^*_t m)(t,x)dt\\
    &-m_t(x)(\int (\Lc^*_t m)(t,y)dy]dt \\
    &-m_t(x)(\lambda(S_t^a-x)+\lambda(x-S_t^b)))dt\\
    &+m_t(x)(\int (\lambda(S_t^a-y)+\lambda(y-S_t^b))m_t(y)dy)dt \\
    &+ m_{t-}(x)\big(\frac{\lambda(S_{t-}^a-x)}{\int\lambda(S_{t-}^a-y)m_{t-}(y)dy} - 1\big) dN^a_t \\
    &+ m_{t-}(x)\big(\frac{\lambda(x-S_{t-}^b)}{\int\lambda(y-S_{t-}^b)m_{t-}(y)dy} - 1\big) dN^b_t.    
    \end{split}
\end{equation}}
Most of the time we work with unnormalized densities as the Zakai SPDE \eqref{eq::zakai_spde} is linear. Also, if the initial prior has a density, a solution $\hat m$ of the Zakai SPDE \eqref{eq::zakai_spde} (if it exists) gives the density of the unique unnormalized filter. This is only a technical tool as we can retrieve the density $m$ of the probability distribution from the density $\hat m$ of the normalized filter by renormalizing $\hat m$: $m_t(x) = \frac{\hat m_t(x)}{\int \hat m_t(y)dy }$.\\

There are three distinct terms in \eqref{eq::zakai_spde}:
\begin{itemize}
    \item The first term $\Lc^*_t \hat m_t = -\mu_S \partial_x \hat m_t +\frac{1}{2}\sigma_S^2 \partial_{xx}^2\hat m_t$ takes into account the model of the observer: the density will diffuse and drift according to the model.
    \item The second term $\hat m_{t-}(x)(\lambda(S_{t-}^a-x) - 1) dN^a_t+\hat m_{t-}(x)(\lambda(x-S_{t-}^b) - 1) dN^b_t$ describes what happens after each trade. Each time a trade occurs at the ask price the density is multiplied by $\lambda(S_{t-}^a-x)$ so the new density puts more weight to the higher prices. Conversely each time a trade occurs at the bid price the density is multiplied by $\lambda(x-S_{t-}^b)$ so the new density puts more weight to the lower prices. 
    \item The third term $-\hat m_t(x)(\lambda(S_t^a-x) +\lambda(x-S_t^b) - 2) dt$ explains the behaviour of the density between two trades. The density is modified by some potential $x\mapsto\lambda(S_t^a-x) +\lambda(x-S_t^b)$ which has its extremum at $\frac{S^a_t+S^b_t}{2}$. Note that the $2\hat m_t(x)dt$ term is irrelevant from the point of view of a learning observer as adding a term of the form $\hat m_t(x)\phi_t$ in the Zakai SPDE \eqref{eq::zakai_spde}, with $\phi$ some c\`adl\`ag process which does not depend on $x$, modifies $\hat m_t(x)$ but not $\hat m_t(x)/\int \hat m_t(y)dy$.
\end{itemize} 
\comment{The Kushner-Stratonovich SPDE \eqref{eq::ks_spde} is essentially the same. The non-local terms serve only to ensure that the solution has the same total weight as the initial density $m_0$.
\begin{itemize}
    \item The $-m_t(x)(\int (\Lc^* m)(t,y)dy)dt+m_t(x)(\int (\lambda(S_t^a-y)+\lambda(y-S_t^b))m_t(y)dy)dt$ bit adds a multiplicative term which does not depend on $x$.
    \item When a trade happens, the density is renormalized after the update. If a trade happens at the ask price the density is multiplied by $\frac{\lambda(S_{t-}^a-x)}{\int\lambda(S_{t-}^a-y)m_{t-}(y)dy}$ instead of $\lambda(S_{t-}^a-x)$, and if a trade happens at the bid price the density is multiplied by $\frac{\lambda(x-S_{t-}^b)}{\int\lambda(y-S_{t-}^b)m_{t-}(y)dy}$ instead of $\lambda(x-S_{t-}^b)$. 
\end{itemize}
}
\begin{remark}
The boundedness assumption on $\lambda$ in Proposition \ref{lem::uniq} is only technical. If we take $\lambda$ to be a positive, continuous, decreasing and convex function, then clipping $\lambda$ makes the assumption trivial, and it has no impact on the financial interpretations or the numerical results if the clipping is done far enough. So most of the time we will only consider the Zakai SPDE, ignoring the boundedness condition.
\end{remark}

\begin{remark}
The Zakai equation describes the evolution of the posterior of each observer, if they suppose that the price process follows the dynamics given by \eqref{eq::model}. Each observer could have different functions $\mu_S$ and $\sigma_S$. It is possible to complexify the dynamics by making our observers learn some parameters on which $\mu_S$ and $\sigma_S$ could depend. Thus they could revise their views on the dynamics themselves.
\end{remark}

In the following, we will consider two possible cases:
\begin{itemize}
    \item $S^a$ and $S^b$ are fixed constants so an observer uses the Zakai filtering equation \eqref{eq::zakai_spde} to update his views on the efficient price. This toy model will help us understand some important features, and it will give insights into illiquid markets.
    \item Our observer is the market maker. He applies the same filtering equation \eqref{eq::zakai_spde} to update his prior through the observation of $N^a$ and $N^b$. He then uses his posterior to change his quotes. This filter, together with a rule for the update of the quotes given the posterior and the definition of $N^a$ and $N^b$ yield a fixed-point problem for $S^a$ and $S^b$ which we discuss in Proposition \ref{prop::stabil} and in Sections \ref{sec::approx} and \ref{sec::meta}.
\end{itemize}

\subsection{Exponential intensities as a consequence of microstructure}\label{sec::exp_int}

The intensity function $\lambda$ plays an important role. For financial reasons we consider continuous, strictly decreasing and convex functions. We introduce two new properties which arise naturally in our framework and we show that they are satisfied only by exponential functions.\\

The first property translates the idea that, given the pre-trade prior density $m_{t-}$, the jump in the filter happening at time $t$ because of a trade should not depend on the bid and ask prices $S^a_{t-}$ and $S^b_{t-}$. In more financial terms, it means that knowing only the value of a trading price is not sufficient to estimate the efficient price: an observer needs the timestamps and the clustering patterns of the trades. It can be translated into the following mathematical property.
\paragraph{Property (a)} $\lambda$ is a continuous, positive, strictly decreasing, exponentially bounded\footnote{By this we mean that $\lambda(x)<e^{c|x|}$ for some $c>0$.} and convex function. Also, the maps 
\begin{align*}
    \Mc^a_\lambda&\rightarrow \R^{\R^2}\\
    m&\mapsto\big((z,x)\mapsto \frac{\lambda(z-x)m(x)}{\int \lambda(z-y)m(y) dy}\big)
\end{align*} and 
\begin{align*}
    \Mc^b_\lambda&\rightarrow \R^{\R^2}\\
    m&\mapsto\big((z,x)\mapsto \frac{\lambda(x-z)m(x)}{\int \lambda(y-z)m(y) dy}\big)
\end{align*}
where $\Mc^a_\lambda = \{m> 0, \int m(x)=  1, 0<\int \lambda(y-x)m(x)dx<\infty, \forall y\in\R \}$ and $\Mc^b_\lambda = \{m> 0, \int m(x)=  1, 0<\int \lambda(x-y)m(x)dx<\infty, \forall y\in\R \}$, have their image in the set of functions which do not depend on $z$.\\

Note that $\Mc^a_\lambda$ and $\Mc^b_\lambda$ are always non-empty for $\lambda$ a continuous, strictly decreasing, exponentially bounded and convex function as they contain the Gaussian density functions.\\

The two maps in Property (a) describe the jump in the density of the prior when a trade happens, on the ask side or the bid side. Property (a) can be interpreted as follows: $m_t$ depends on $m_{t-}$ and on the sign of the trade but not on $S^a_{t-}$ or $S^b_{t-}$. In other words, for a given observer, a trade on the ask side (resp. on the bid side), by itself, has the same informational value whatever the spread. It does not mean that the bid and ask prices are useless for the observer, but that he needs to observe the dynamics of the trades: for example the time elapsed before the trade is important. In a way learning is intrinsically dynamic: the knowledge of the price of a trade only helps if we know the history of the previous bid and ask prices and of the trades.\\

The second property relates to the fact that in the absence of trades, the mid-price $\frac{S^a+S^b}{2}$ plays the role of the potential new estimator of the efficient price and the half-spread $\frac{S^a-S^b}{2}$ modulates the speed of learning this price. In more financial terms, it means that, if we look at the market at some given time with no specific information, the best available estimate of the efficient price is the mid-price. The spread only helps build confidence in the mid-price. It can be translated into the following mathematical property.
\paragraph{Property (b)} The function $\lambda$ is continuous, positive, strictly decreasing, exponentially bounded, convex and four times differentiable. Also, there exists $3$ functions $f:\R\rightarrow\R$, $g:\R\rightarrow\R$ and $h:\R\rightarrow\R$, such that
\begin{equation*}
    \begin{split}
    -(\lambda(s^a-x) +\lambda(x-s^b)-2) =\underbrace{h(s^a,s^b)}_{\textnormal{global constant}}\underbrace{-g(x-\frac{s^a+s^b}{2})}_{\textnormal{Potential with minimum at }\frac{s^a+s^b}{2}}f(\frac{s^a-s^b}{2})
    \end{split}
\end{equation*}
for all $s^a,s^b,x\in\R$.\\

Property (b) can also be interpreted in more physical terms. It means that the mid-price and the spread play two independent roles for the learning process: the spread appears as a time dilation parameter, while the mid-price is the location of the minimum of a potential. As a consequence, the spread serves only to determine the speed with which the mid-price is accepted as the best estimator of the efficient price when no trade happens. When $\lambda(x)=\lambda_0e^{-ax}$ for some $a>0$ and $\lambda_0> 0$, the term without jump in $d\hat m_t$ is actually
\begin{align*}
     &(\mathcal{L}^*_t\hat m_t)(x)dt-\lambda_0\hat m_t(x)(e^{-a(S^a_t-x)}+e^{-a(x-S^b_t)})dt    =(\mathcal{L}^*_t\hat m_t)(x)dt-2\lambda_0e^{-\frac{S^a_t-S^b_t}{2}}\hat m_t(x)\cosh(\frac{S^a_t+S^b_t}{2}-x)dt,
\end{align*}
so a high spread can be counterbalanced by a high base intensity $\lambda_0$.\\

Our result is the following.
\begin{proposition}\label{prop::lamb}
Property (a) and Property (b) hold if and only if $\lambda$ is of the form $\lambda(x)=\lambda_0e^{-a x}$ for some $\lambda_0>0$ and $a>0$.
\end{proposition}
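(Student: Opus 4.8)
The plan is to prove both implications, dispatching the ``if'' part by direct substitution and concentrating the effort on the converse.

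\textbf{The ``if'' direction.} I would insert $\lambda(x)=\lambda_0 e^{-ax}$ into each property. For Property (a), writing $\lambda(z-x)=\lambda_0 e^{-az}e^{ax}$ shows that the factor $\lambda_0 e^{-az}$ cancels between the numerator and the integral in the denominator, leaving $\frac{e^{ax}m(x)}{\int e^{ay}m(y)\,dy}$, which is visibly independent of $z$; the bid-side map is handled identically after a sign change. For Property (b), passing to the mid/half-spread variables $\bar s=\frac{s^a+s^b}{2}$ and $\delta=\frac{s^a-s^b}{2}$ gives $\lambda(s^a-x)+\lambda(x-s^b)=2\lambda_0 e^{-a\delta}\cosh\!\big(a(x-\bar s)\big)$, so the decomposition holds with $h\equiv 2$, $g(u)=\cosh(au)$ (minimal at $u=0$) and $f(\delta)=2\lambda_0 e^{-a\delta}$.

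\textbf{The ``only if'' direction.} The decisive input is Property (a). Fixing any admissible $m$ (a Gaussian lies in $\Mc^a_\lambda$), the requirement that $(z,x)\mapsto\frac{\lambda(z-x)m(x)}{\int\lambda(z-y)m(y)\,dy}$ be free of $z$ forces, after cancelling the common positive factor $m(x)$,
\[
\frac{\lambda(z_1-x)}{\lambda(z_2-x)}=\frac{\int\lambda(z_1-y)m(y)\,dy}{\int\lambda(z_2-y)m(y)\,dy}\qquad\text{for all }z_1,z_2,x,
\]
whose right-hand side does not involve $x$. Hence $\lambda(u)/\lambda(v)$ depends only on $u-v$. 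Setting $\phi=\log\lambda$ (legitimate since $\lambda>0$), this reads ``$\phi(u)-\phi(v)$ depends only on $u-v$'', and fixing $v=0$ turns it into the Cauchy equation $\psi(u+v)=\psi(u)+\psi(v)$ for $\psi=\phi-\phi(0)$. Continuity of $\lambda$ makes $\psi$ continuous, hence linear, so $\lambda(x)=\lambda_0 e^{cx}$; strict monotonicity pins $c=-a<0$, which is the claimed form.

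\textbf{Reconciling with Property (b), and the main obstacle.} To confirm the conjunction is genuinely sharp I would record what (b) yields on its own: in the same variables it is equivalent to the separated second difference $\lambda(\delta+w)+\lambda(\delta-w)-2\lambda(\delta)=G(w)f(\delta)$ with $G=g-g(0)$. Writing $\mu=\lambda''$ and differentiating in $\delta$ and $w$ leads to a d'Alembert-type relation $\mu(s+t)+\mu(s-t)=\gamma(t)\mu(s)$ with $\gamma(0)=2$; differentiating twice in $t$ at $t=0$ then gives $\mu''=\tfrac12\gamma''(0)\,\mu$. Convexity ($\mu\ge 0$) rules out the trigonometric and affine branches, while monotonicity, positivity and exponential boundedness discard the growing exponential and the polynomial, leaving $\lambda(x)=\alpha e^{-ax}+D$; since the additive constant $D$ cancels in the second difference it is invisible to (b) and is removed precisely by the ratio test of (a). The conceptual crux is this functional-equation analysis: converting the qualitative ``image is $z$-free'' hypothesis into Cauchy's equation and justifying the continuity step, and, along the (b)-route, solving the second-order equation for $\mu$ and eliminating every non-exponential branch using only the shape constraints. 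The remaining manipulations are routine.
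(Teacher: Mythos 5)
Your proposal is correct, and its decisive step is the same as the paper's: Property (a) alone forces the exponential form, via a Cauchy-type functional equation. The paper gets there by cancelling $m(x)$ to obtain $l(z-x)=f(z)l(-x)$, then $l(0)l(z-x)=l(z)l(-x)$, and invokes monotonicity; you take ratios, observe that $\lambda(u)/\lambda(v)$ depends only on $u-v$, pass to logarithms, and invoke continuity. These are cosmetic variants of one argument, and your ``if'' direction matches the paper's (which simply calls it straightforward/obvious).

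Where you genuinely diverge is in the role assigned to Property (b), and here your version is in fact the sharper one. The paper runs a parallel derivation meant to show that (b) \emph{alone} characterizes the exponential: it differentiates the separated identity twice, then twice more, reaches $l^{(4)}(y)=\frac{l^{(4)}(0)}{l''(0)}\,l''(y)$, and concludes ``which yields the result given that $l$ is positive and strictly decreasing.'' You reach the same second-order ODE for $\mu=\lambda''$ but correctly stop short of that conclusion: solving it and imposing positivity, monotonicity and convexity leaves the family $\lambda(x)=\alpha e^{-ax}+D$ with $D\ge 0$, and the additive constant is invisible to (b) because it cancels in the second difference $\lambda(\delta+w)+\lambda(\delta-w)-2\lambda(\delta)$. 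Indeed, a function such as $\lambda(x)=0.4\,e^{-x}+0.4$ is positive, strictly decreasing, convex, smooth, exponentially bounded, and satisfies the decomposition required by (b) with $g(u)=\cosh(u)$, $f(\delta)=0.8\,e^{-\delta}$ and $h\equiv 2-2D$, yet it is not a pure exponential — so the paper's concluding sentence for its (b) paragraph overlooks this family, whereas your ``ratio test of (a)'' is exactly what removes $D$ (as $z\to+\infty$ the jump map degenerates to $m$ itself, while as $z\to-\infty$ it tilts by $e^{ax}$). This discrepancy does not affect the proposition as stated, since in both proofs Property (a) alone carries the ``only if'' direction and (b) is then verified by substitution; but your bookkeeping of which property eliminates which degree of freedom is the more accurate account of why the \emph{conjunction} appears in the statement.
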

\begin{proof}
See Appendix \ref{app:proof_prop_lamb}.
\end{proof}

We introduce the following assumption which will be useful in the next sections.
\begin{assumption_lamb*}\label{assump::lamb}
We assume that $\lambda(x)=\lambda_0e^{-ax}$ for some $a>0$ and $\lambda_0> 0$. 
\end{assumption_lamb*}

For $\sigma\geq 0$, let $\Ac_\sigma$ be the following assumption, which will help us get explicit expressions.
\begin{assumption_sig*}\label{assump::sig}
We assume that
\begin{align*}
    \mu_S=0\hspace{2cm}
\end{align*}
and $\sigma_S=\sigma$.
\end{assumption_sig*}

\subsection{Example with a Gaussian prior at time $t-$}\label{subsubsec::ex}
Assume \hyperref[assump::lamb]{$\Ac^\lambda$} and \hyperref[assump::sig]{$\Ac_\sigma$} and suppose that right before time $t$, $m_{t-}$ is Gaussian with mean $\bar S$ and variance $\bar\sigma^2>0$.\footnote{For example this is the case of an observer with initial prior $m_0\sim\Nc(\bar S,\bar\sigma^2)$ who for some reason did not update his views before $t$.} Then
\begin{equation*}
    \begin{split}
    dm_t(x) = &\frac{\sigma^2}{2\bar\sigma^2}(-1+\frac{(x-\bar S)^2}{\bar\sigma^2})m_t(x)dt\\
    &+2\lambda_0 e^{-a\frac{S_t^a-S_t^b}{2}}( -m_t(x)\cosh(a(x-\frac{S^a_t+S^b_t}{2}))+e^{\frac{a^2\bar\sigma^2}{2}}\cosh(a(\bar S-\frac{S^a_t+S^b_t}{2})))dt \\
    &+ m_{t-}(x)\big(\frac{\lambda_0 e^{-a(S_{t-}^a-x)}}{\int\lambda_0 e^{-a(S_{t-}^a-y)}m_{t-}(y)dy} - 1\big) dN^a_t \\
    &+ m_{t-}(x)\big(\frac{\lambda_0 e^{-a(x-S_{t-}^b)}}{\int\lambda_0 e^{-a(y-S_{t-}^b)}m_{t-}(y)dy} - 1\big) dN^b_t.  
    \end{split}
\end{equation*}
A simple computation shows that if $m_{t-}$ is Gaussian right before a jump, the posterior is still Gaussian, with same variance but with a shifted mean $\bar S_{post} = \bar S + a\bar\sigma^2$. Unsurprisingly the estimation of the efficient price is revised upward after an aggressive buy trade. The jump is proportional to $\bar\sigma^2$: the less confident a trader in his estimation, the more he will change it when something happens. It is also proportional to the scale factor $a$ in the intensity function. Indeed, as $a$ grows larger, trades become less frequent, so each aggressive order will be that much more important to estimate the efficient price. Also, from the above equation we deduce that a Gaussian prior does not usually remain Gaussian if it is updated continuously, because of the $\cosh(a(x-\frac{S^a_t+S^b_t}{2})$ term.\\

At time $t$ the mean evolves locally as 
\begin{equation*}
    \begin{split}
    d(\int x m_t(x)dx) = &-2\lambda_0a\bar\sigma^2  e^{-a\frac{S_t^a-S_t^b}{2}+\frac{a^2\bar\sigma^2}{2}} \sinh(a(\bar S-\frac{S^a_t+S^b_t}{2}))dt \\
    &+ a\bar\sigma^2 (dN^a_t -dN^b_t).
    \end{split}
\end{equation*}
As we saw in Section \ref{sec::exp_int} it is revised upward or downward by $a\bar\sigma^2$ if a trade occurs. If no trade happens, the dynamics of the prior depend on the mid-price $\frac{S^a_t+S^b_t}{2}$. We interpret it as a tendency for an observer to believe that the bid and ask prices are set symmetrically around the efficient price if no trade happens, which makes him see the mid-price as a good estimate. For example, if our estimated mean is higher than the mid-price, then we think that there will be more aggressive buys than sells. If nothing happens, we tend to think that the number of buys will be closer to the number of sells, and we decrease our estimate. The spread $S_t^a-S_t^b$ plays also a role: the bigger the spread, the less information we learn by seeing no trade. As before, the value $a\bar\sigma^2$ measures our confidence in our estimates.\\

The mean square evolves locally as
\begin{equation*}
\begin{split}
    d(\int x^2 m_t(x)dx) = &\sigma^2dt-2\lambda_0 e^{-a\frac{S_t^a-S_t^b}{2}+\frac{a^2\bar\sigma^2}{2}}\Big((\bar\sigma^2+\bar S^2+a^2\bar\sigma^4)\cosh\big(a(\bar S-\frac{S^a_t+S^b_t}{2})\big) \\
    &+2a\bar S \sigma^2\sinh\big(a(\bar S-\frac{S^a_t+S^b_t}{2})\big)\Big)dt + 2\hat Sa\bar\sigma^2 (dN^a_t -dN^b_t)  .
\end{split}
\end{equation*}
so our variance evolves locally as 
\begin{equation*}
    d\big(\int x^2 m_t(x)dx-(\int x m_t(x)dx)^2\big) = \Big(\sigma^2-2\lambda_0 e^{-a\frac{S_t^a-S_t^b}{2}+\frac{a^2\bar\sigma^2}{2}}(\bar\sigma^2+\bar S^2+a^2\bar\sigma^4)\cosh\big(a(\bar S-\frac{S^a_t+S^b_t}{2})\big)\Big)dt
\end{equation*}
so it is locally strictly decreasing if it is large and locally strictly increasing if it is small. So we could expect some convergence of the variance (see Section \ref{sec::approx}). Also, the further our estimate is from the mid-price, the slower the variance changes. 

\section{Learning a fixed efficient price}\label{sec::fixeff}
In this section we assume \hyperref[assump::sig]{$\Ac_0$}, \textit{i.e.} we fix the efficient price. We study the learning process in this toy framework, which will give us important intuitions for the more general case. It is also adapted to illiquid markets in which movements in the efficient price are deemed small compared to the uncertainty on it.\\

First we observe that with a constant efficient price the Zakai equation on densities \eqref{eq::zakai_spde} has a simple solution. The proof of the following proposition is a direct application of Itô's formula.

\begin{proposition}\label{prop_simple}
Assume \hyperref[assump::sig]{$\Ac_0$} and suppose that the prior at time $0$ has a density $m_0$. Then the Zakai SPDE \eqref{eq::zakai_spde} has a unique solution $\hat m$ given by
\begin{align*}
    \hat m_t(x) = m_0(x)e^{-\int_0^t(\lambda(S^a_u-x)+\lambda(x-S^b_u)-2)du}e^{\int_0^t\ln{\lambda(S^a_{u-}-x)}dN^a_u+\ln{\lambda(x-S^b_{u-})}dN^b_u}.
\end{align*}
\end{proposition}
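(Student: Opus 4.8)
The plan is to solve the equation separately for each fixed $x$ and then verify the closed form directly. Under $\Ac_0$ we have $\mu_S=0$ and $\sigma_S=0$, so $\Lc^*_t\equiv 0$ and the diffusion term of \eqref{eq::zakai_spde} vanishes. The equation then no longer couples the values of $\hat m$ at distinct points, so for each fixed $x\in\R$ the scalar process $Y_t:=\hat m_t(x)$ solves the linear jump-SDE
\begin{align*}
dY_t = Y_{t-}\,dX_t,\qquad dX_t = (\lambda(S^a_{t-}-x)-1)(dN^a_t-dt)+(\lambda(x-S^b_{t-})-1)(dN^b_t-dt),
\end{align*}
with initial condition $Y_0=m_0(x)$. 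The task thus reduces to identifying the stochastic exponential of the finite-variation semimartingale $X$.

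First I would check that the proposed formula is a solution. Writing the candidate as $Y_t=m_0(x)e^{Z_t}$ with
\begin{align*}
Z_t=-\int_0^t(\lambda(S^a_u-x)+\lambda(x-S^b_u)-2)\,du+\int_0^t\ln\lambda(S^a_{u-}-x)\,dN^a_u+\int_0^t\ln\lambda(x-S^b_{u-})\,dN^b_u,
\end{align*}
I would apply the change-of-variables (Itô) formula for semimartingales with jumps to $e^{Z_t}$. Since $Z$ has no continuous martingale part there is no quadratic-variation correction: its continuous part yields the drift $-Y_t(\lambda(S^a_t-x)+\lambda(x-S^b_t)-2)\,dt$, while each jump of $N^a$ (resp.\ $N^b$) contributes $Y_{t-}(e^{\Delta Z_t}-1)=Y_{t-}(\lambda(S^a_{t-}-x)-1)$ (resp.\ $Y_{t-}(\lambda(x-S^b_{t-})-1)$), using $e^{\ln\lambda}=\lambda$. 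Regrouping these jump increments with the drift then reconstitutes exactly the two compensated $(dN-dt)$ terms of \eqref{eq::zakai_spde}, so the candidate solves the equation.

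For uniqueness I would observe that the equation for $Y$ is literally $dY_t=Y_{t-}\,dX_t$, whose unique solution with $Y_0=m_0(x)$ is $m_0(x)$ times the Doléans-Dade exponential $\Ec(X)$; expanding $\Ec(X)$ recovers the same closed form, so existence and uniqueness come together. Equivalently one may argue pathwise: between two consecutive jump times the equation is a linear ODE with a unique solution, and at each jump $Y$ is updated by the deterministic multiplicative factor $\lambda(S^a_{t-}-x)$ or $\lambda(x-S^b_{t-})$, which determines the whole trajectory. The only delicate point—bookkeeping rather than a genuine obstacle—is keeping the compensated form straight while passing through the exponential, i.e.\ verifying that the $e^{-\Delta X_s}$ factors in the Doléans-Dade product cancel the jump contributions hidden inside $e^{X_t}$, leaving precisely the continuous factor $e^{-\int_0^t(\lambda(S^a_u-x)+\lambda(x-S^b_u)-2)\,du}$ and the jump factor $\prod\lambda=e^{\int_0^t\ln\lambda(S^a_{u-}-x)\,dN^a_u+\int_0^t\ln\lambda(x-S^b_{u-})\,dN^b_u}$ of the stated formula.
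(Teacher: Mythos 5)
Your proposal is correct and matches the paper's approach: the paper's proof is simply stated as ``a direct application of Itô's formula,'' which is exactly your verification that the candidate $\hat m_t(x)$, viewed for each fixed $x$ as a stochastic exponential of the finite-variation semimartingale driving the equation (since $\Lc^*_t\equiv 0$ under $\Ac_0$), satisfies \eqref{eq::zakai_spde}. Your additional pathwise/Doléans-Dade uniqueness argument is a sound way of making precise the uniqueness claim that the paper leaves implicit.
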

We deduce the following corollary.
\begin{corollary}
If additionally \hyperref[assump::sig]{$\Ac^\lambda$} holds and the initial prior is Gaussian, then $\hat{m}_t$ is integrable in $x$ for each $t$.
\end{corollary}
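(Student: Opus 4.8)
The plan is to start from the explicit formula for $\hat m_t$ in Proposition \ref{prop_simple} and simply track the $x$-dependence of each factor once $\lambda$ is specialized to $\lambda(x)=\lambda_0 e^{-ax}$ under $\Ac^\lambda$. Substituting, the continuous part of the exponent becomes $-\int_0^t(\lambda_0 e^{-aS^a_u}e^{ax}+\lambda_0 e^{aS^b_u}e^{-ax}-2)\,du$, while the jump part contributes $\int_0^t(\ln\lambda_0-aS^a_{u-}+ax)\,dN^a_u+\int_0^t(\ln\lambda_0-ax+aS^b_{u-})\,dN^b_u$. Collecting the $x$-dependence, I would write $\hat m_t(x)=m_0(x)\,e^{2t}\,K(\omega)\,\exp\big(a(N^a_t-N^b_t)x-c_1 e^{ax}-c_2 e^{-ax}\big)$, where $c_1=\lambda_0\int_0^t e^{-aS^a_u}\,du$, $c_2=\lambda_0\int_0^t e^{aS^b_u}\,du$, and $K(\omega)=\exp\big((\ln\lambda_0)(N^a_t+N^b_t)-a\int_0^t S^a_{u-}\,dN^a_u+a\int_0^t S^b_{u-}\,dN^b_u\big)$ gathers all the $x$-independent contributions of the jumps.

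The heart of the argument is the observation that the factor $\exp(-c_1 e^{ax}-c_2 e^{-ax})$ decays \emph{doubly exponentially} both as $x\to+\infty$ and as $x\to-\infty$. To make this rigorous I first note that, since $S^a$ and $S^b$ are càdlàg, their restrictions to the compact interval $[0,t]$ are bounded; hence $e^{-aS^a_u}$ and $e^{aS^b_u}$ are bounded above and below by strictly positive constants on $[0,t]$, so that both $c_1$ and $c_2$ are finite and strictly positive. Consequently $\exp(-c_1 e^{ax}-c_2 e^{-ax})\to 0$ faster than any exponential of $x$ as $|x|\to\infty$.

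It then remains to check that the remaining factors grow at most exponentially in $x$, path by path. The Gaussian initial density $m_0$ is bounded, say $m_0\le M$; and on the full-probability event $\{N^a_t+N^b_t<\infty\}$, the quantity $K(\omega)$ is a finite constant (finitely many jumps, each occurring at a bounded value of $S^a$ or $S^b$), while the surviving $x$-term $e^{a(N^a_t-N^b_t)x}$ is a single exponential. Combining, $\hat m_t(x)\le M e^{2t}K(\omega)\exp\big(a(N^a_t-N^b_t)x-c_1 e^{ax}-c_2 e^{-ax}\big)$, and the right-hand side is integrable over $\R$ because the doubly-exponential decay dominates the single exponential at both ends. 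This yields $\int_\R \hat m_t(x)\,dx<\infty$ for each fixed $t$, $\P$-almost surely.

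The only genuine points requiring care—rather than a real obstacle—are the two facts invoked above: that $c_1,c_2\in(0,\infty)$, which rests on the boundedness of the càdlàg processes $S^a,S^b$ on $[0,t]$, and that $N^a_t,N^b_t<\infty$ almost surely, so that $K(\omega)$ and the exponent $a(N^a_t-N^b_t)$ are finite. Both follow from standard properties of càdlàg paths and of counting processes with locally bounded intensities (here the intensities $\lambda(S^a_u-S_0)$ and $\lambda(S_0-S^b_u)$ are bounded on $[0,t]$ under $\Ac_0$). It is worth emphasizing that the doubly-exponential decay alone already forces integrability, so the Gaussian hypothesis is used only to ensure that $m_0$ does not grow too fast; boundedness of $m_0$ is all that is really needed.
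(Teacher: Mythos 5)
Your proof is correct, and it follows the only sensible route: specialize the explicit solution of Proposition \ref{prop_simple} to $\lambda(x)=\lambda_0e^{-ax}$, so that $\hat m_t(x)=m_0(x)\,e^{2t}K(\omega)\exp\big(a(N^a_t-N^b_t)x-c_1e^{ax}-c_2e^{-ax}\big)$, and check the tails. The paper leaves the proof implicit, and the natural argument suggested by its Gaussian hypothesis is the dual of yours: since $c_1,c_2\geq 0$, simply bound $\exp(-c_1e^{ax}-c_2e^{-ax})\leq 1$ and use that a Gaussian times the single exponential $e^{a(N^a_t-N^b_t)x}$ is integrable — this needs no discussion of the strict positivity of $c_1,c_2$ and works uniformly in $t\geq 0$. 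You instead bound $m_0$ by its supremum and let the doubly exponential decay do all the work; this buys a genuinely weaker hypothesis (any bounded initial density suffices, as you note), at the price of requiring $c_1,c_2>0$, which holds only for $t>0$. Your argument therefore degenerates at $t=0$, where $c_1=c_2=0$; this is not a real gap since $\hat m_0=m_0$ is a probability density and hence integrable, but you should state it, because as written your final display gives no decay at all when $t=0$ and $N^a_t=N^b_t=0$. All the supporting facts you invoke — boundedness of càdlàg paths on compacts, a.s. finiteness of $N^a_t,N^b_t$ under bounded intensities — are correctly identified and correctly used.
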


We introduce the following assumption which allows us to derive convergence results of the posterior distribution to the real efficient price.

\begin{assumption_fix*}\label{assump::fix}
We assume that $S^a$ and $S^b$ are fixed constants.
\end{assumption_fix*}
We show later in Proposition \ref{prop::stabil} that this assumption makes sense when we look at what happens between two trades.\\
\comment{
The almost sure convergence of posterior distributions was first studied by \cite{doob} and later by \cite{Schwartz-1964} for example. We give a simple result for our framework, following the idea of \cite{Shalizi_2009}.

\begin{proposition}\label{prop::conv_post}
Assume \hyperref[assump::fix]{$\Ac^\text{fix}$} and \hyperref[assump::sig]{$\Ac_0$}. Suppose that the prior at time $0$ has a density $m_0$ with bounded support. Then the normalized filter has a density $m$ such that
\begin{align*}
    \frac{1}{t}\log(m_t(s))\underset{t\rightarrow+\infty}{\rightarrow}-h(s)
\end{align*}
a.s., for all $s$ such that $m_0(s)\neq 0$, with
\begin{align*}
    h(s) &= (\lambda(S^a-s)+\lambda(s-S^b))-(\lambda(S^a-S)+\lambda(S-S^b))\\
    &+\lambda(S^a-S)\log(\frac{\lambda(S^a-S)}{\lambda(S^a-s)})+\lambda(S-S^b)\log(\frac{\lambda(S-S^b)}{\lambda(s-S^b)}).
\end{align*}
\end{proposition}
\begin{proof}
See Appendix \ref{app:proof_conv_post}.
\end{proof}
As $h$ has a unique minimum at $S$, this result tells us that after a long time we expect an observer to have learned the efficient price if it is fixed and the observer knows it is fixed. Now we look at what happens between two trades.}

We have the following result:
\begin{theorem}\label{prop::dirac}
Assume \hyperref[assump::fix]{$\Ac^\text{fix}$}, \hyperref[assump::sig]{$\Ac_0$}, and that $\lambda$ is strictly decreasing and strictly convex. \comment{Then the probability measures which are steady states of the continuous part of the (measure valued) Kushner-Stratonovich equation are the sums of 2 Dirac distributions at two symmetrical points around $\frac{S^a+S^b}{2}$ (they are possibly the same points).}
If $m_0\in\Cc(\R)$ is the density of a probability measure on $\R$ with $m_0(\frac{S^a+S^b}{2})\neq 0$, then we have the following results.
\begin{enumerate}[(i)]
    \item Let $\tau=\inf\{t>0,\max(N^a_t,N^b_t)>0\}$ be the time of the first trade. There exists a.s. a unique solution $\hat m$ on $[0,\tau)$ to the Zakai SPDE with initial density $m_0$, and this solution is integrable. The renormalized probability density $u_t=\frac{\hat m_t}{\int \hat m_t(x)dx}$ is deterministic, and it tends in the sense of measures to a Dirac distribution at point $\frac{S^a+S^b}{2}$ as $t\rightarrow+\infty$. 
    \item In particular if \hyperref[assump::lamb]{$\Ac^\lambda$} holds, then
    \begin{align*}
        u_t(x)\underset{t\rightarrow+\infty}{\sim} \sqrt{\frac{t}{\pi t_1}}\frac{m_0(x)}{m_0(\frac{S^a+S^b}{2})}e^{-\frac{t}{t_1}(\cosh(x-\frac{S^a+S^b}{2})-1)},
    \end{align*}
    where $t_1 = \frac{e^{a\frac{S^a-S^b}{2}}}{2\lambda_0}$.
    \comment{\item Suppose for simplicity that $m_0$ is in $C^1$ with compact support $[m_-,m_+]$, $\lambda$ is in $C^2(\mathbb{R})$, with $\lambda``(x)\underset{x\rightarrow+\infty}{\longrightarrow}0$ and $\lambda(x)\underset{x\rightarrow-\infty}{\longrightarrow}+\infty$.
    \begin{itemize}
        \item If $\lambda(S^a-S)<1$ and $\lambda(S-S^b)<1$ and that $[m_-,m_+]$ contains the unique solution $\tilde x$ of 
        \begin{align*}
            \lambda'(S^a-x)(\lambda(S^a-S)-1) = \lambda'(x-S^b)(\lambda(S-S^b)-1).
        \end{align*}
        Then the maxima of $\mathbb{E}[\hat{m}_t(.)]$ converge to $\tilde x$.
        \item If $\lambda(S^a-S)>1$ and $\lambda(S-S^b)>1$. Let $\epsilon>0$, $a>0$, then for $t$ large enough, $\mathbb{E}[\hat{m}_t(.)]$ has at least two local maxima, 2 of which are at $m_-$ and $m_+$, is strictly decreasing on $[m_-,\tilde x-\epsilon]$ with a slope $\leq-a C_te^{t[\lambda(S^a-(\tilde x-\epsilon))(\lambda(S^a-S)-1)+\lambda((\tilde x-\epsilon)-S^b)(\lambda(S-S^b)-1)]}$, strictly increasing on $[\tilde x+\epsilon,m_+]$ with a slope $\geq a C_t e^{t[\lambda(S^a-(\tilde x-\epsilon))(\lambda(S^a-S)-1)+\lambda((\tilde x-\epsilon)-S^b)(\lambda(S-S^b)-1)]}$, and 
        \begin{align*}
            &\underset{x\in[\tilde x-\epsilon,\tilde x+\epsilon]}{\sup}\{\mathbb{E}[\hat{m}_t(x)]-\min(\mathbb{E}[\hat{m}_t(\tilde x-\epsilon)],\mathbb{E}[\hat{m}_t(\tilde x+\epsilon)])\}\\
            &\leq 2\epsilon (a+\max(m_0')-\min(m_0')) C_te^{t[\lambda(S^a-(\tilde x-\epsilon))(\lambda(S^a-S)-1)+\lambda((\tilde x-\epsilon)-S^b)(\lambda(S-S^b)-1)]}
        \end{align*}
    \end{itemize}}
     
\end{enumerate}
\end{theorem}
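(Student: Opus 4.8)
The plan is to reduce both assertions to a one–dimensional Laplace-type concentration analysis, exploiting the explicit representation of $\hat m_t$ furnished by Proposition \ref{prop_simple}. Under $\Ac_0$ that proposition gives the unique solution of the Zakai SPDE in closed form; on the stochastic interval $[0,\tau)$ there are no trades, so $dN^a_u=dN^b_u=0$, and since $S^a,S^b$ are the constants fixed by $\Ac^{\text{fix}}$ the formula collapses to
\[
\hat m_t(x)=m_0(x)\,e^{-t(V(x)-2)},\qquad V(x):=\lambda(S^a-x)+\lambda(x-S^b).
\]
This expression contains no randomness, which is exactly the claimed determinism of $u_t$ on $[0,\tau)$. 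I would then record two structural facts about $V$: as a sum of two strictly convex functions $V$ is strictly convex, and solving $V'(x)=-\lambda'(S^a-x)+\lambda'(x-S^b)=0$ via the strict monotonicity of $\lambda'$ forces $S^a-x=x-S^b$, i.e. the unique minimiser is $x^\star=\tfrac{S^a+S^b}{2}$; moreover, a strictly decreasing and strictly convex $\lambda$ satisfies $\lambda(y)\to+\infty$ as $y\to-\infty$ (a difference-quotient bound), so $\lambda(S^a-x)\to+\infty$ as $x\to+\infty$ and $\lambda(x-S^b)\to+\infty$ as $x\to-\infty$, giving $V(x)\to+\infty$ at both ends. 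Integrability is then immediate from $V\ge V(x^\star)$, since $\int\hat m_t\le e^{2t-tV(x^\star)}<\infty$.

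For the convergence in (i), the factors $e^{2t}$ and $e^{tV(x^\star)}$ cancel upon renormalisation, leaving
\[
u_t(x)=\frac{m_0(x)\,e^{-t(V(x)-V(x^\star))}}{\int m_0(y)\,e^{-t(V(y)-V(x^\star))}\,dy}.
\]
To show $u_t\Rightarrow\delta_{x^\star}$ I would prove that mass escapes every neighbourhood of $x^\star$: for fixed $\epsilon>0$, strict convexity together with $V\to+\infty$ gives a gap $\eta:=\inf_{|x-x^\star|\ge\epsilon}V-V(x^\star)>0$, so the numerator restricted to $\{|x-x^\star|\ge\epsilon\}$ is bounded by $e^{-t\eta}$, while continuity of $m_0$ and $m_0(x^\star)\neq0$ bound the denominator below by $c\,e^{-t\eta/2}$ (integrating over a small neighbourhood on which $V\le V(x^\star)+\eta/2$ and $m_0\ge m_0(x^\star)/2$). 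Hence $u_t(\{|x-x^\star|\ge\epsilon\})\le c^{-1}e^{-t\eta/2}\to0$; combined with the fact that $u_t$ are probability measures this yields weak convergence to $\delta_{x^\star}$.

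For (ii), under $\Ac^\lambda$ I would compute $V$ explicitly: with $y=x-x^\star$,
\[
V(x)=\lambda_0e^{-a(S^a-x)}+\lambda_0e^{-a(x-S^b)}=2\lambda_0e^{-a\frac{S^a-S^b}{2}}\cosh(ay)=\frac{1}{t_1}\cosh\!\big(a(x-x^\star)\big),
\]
so $V(x^\star)=1/t_1$ and $V''(x^\star)=a^2/t_1$. The pointwise equivalent then follows from a Laplace estimate of the normalising constant $Z_t=\int m_0 e^{-tV}$ alone, the numerator $m_0(x)e^{-tV(x)}$ being exact. Splitting $Z_t$ into a $\delta$-neighbourhood of $x^\star$, where I Taylor-expand $V$ to second order and replace $m_0(y)$ by $m_0(x^\star)$ by continuity, and its complement, which is exponentially smaller because $\cosh$ grows, gives $Z_t\sim m_0(x^\star)e^{-t/t_1}\sqrt{2\pi/(tV''(x^\star))}$; dividing reproduces the stated shape $u_t(x)\sim\frac{m_0(x)}{m_0(x^\star)}C\sqrt{t/t_1}\,e^{-\frac{t}{t_1}(\cosh(a(x-x^\star))-1)}$ with the Gaussian prefactor coming from $\sqrt{2\pi/(tV''(x^\star))}$.

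\textbf{Main obstacle.} The only genuinely delicate points are the two uniform estimates underlying the Laplace analysis: deducing $V\to+\infty$ and the strictly positive gap $\eta$ from the abstract hypotheses on $\lambda$ in part (i), and controlling the tail of $Z_t$ for a merely continuous, possibly heavy-tailed $m_0$ in part (ii). Both are handled by the elementary bound $e^{-tV(x)}\le e^{-tV(x^\star)}$ together with $\int m_0=1$, so that beyond continuity of $m_0$ at $x^\star$ (and $m_0(x^\star)\neq0$) no smoothness or compact-support assumption on the prior is required.
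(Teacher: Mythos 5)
Your proposal is correct and rests on the same foundation as the paper's proof---the closed form $\hat m_t(x)=m_0(x)\,e^{2t-tV(x)}$ from Proposition \ref{prop_simple}, with $V(x)=\lambda(S^a-x)+\lambda(x-S^b)$, followed by a Laplace-type concentration analysis---but the execution differs in both parts, and in part (i) your route is the more complete one. The paper only establishes pointwise statements about the density, namely $u_t(\tfrac{S^a+S^b}{2})\to+\infty$ and $u_t(y)\to0$ for $y\neq\tfrac{S^a+S^b}{2}$, using the monotonicity and symmetry of $\Phi_t=e^{-tV}$ on either side of the mid-price together with dominated convergence; pointwise decay of densities does not by itself give convergence in the sense of measures (mass could in principle escape), so your tail-mass estimate---the gap $\eta=\inf_{|x-x^\star|\ge\epsilon}V-V(x^\star)>0$ played against the lower bound $c\,e^{-t\eta/2}$ on the normalising integral---is exactly the step that turns that pointwise picture into the asserted weak convergence to $\delta_{x^\star}$. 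In part (ii) the paper does not Taylor-expand: it evaluates the normalising integral through the exact substitution $\cosh(\cdot)-1=u\,t_1/t$, reducing it to a Gamma-type integral, whereas you use the standard Laplace expansion $V(y)-V(x^\star)\approx\tfrac12V''(x^\star)(y-x^\star)^2$; both are legitimate and, carried out correctly, give the same asymptotics.

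One substantive point you should not gloss over: your (correct) computation yields the prefactor $\sqrt{tV''(x^\star)/(2\pi)}=a\sqrt{t/(2\pi t_1)}$ and the exponent $\cosh(a(x-x^\star))$, which does \emph{not} match the statement, whose prefactor is $\sqrt{t/(\pi t_1)}$ and whose $\cosh$ carries no factor $a$. The discrepancy is on the paper's side: the missing $a$ is a typo (compare the $\cosh(a(\cdot))$ formulas in Sections \ref{sec::approx} and \ref{sec::meta}, and note that your version is the one whose total mass is asymptotically one), and the spurious $\sqrt2$ traces to the slip in the appendix where $\int_0^{+\infty}e^{-u}/\sqrt{2u}\,du$ is evaluated as $\sqrt{\pi}/2$ instead of $\sqrt{\pi/2}$. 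Writing your prefactor as a generic $C\sqrt{t/t_1}$ and saying it ``reproduces the stated shape'' hides precisely this mismatch; state your constants and flag it. Two minor repairs: identifying the minimiser through $V'$ and the strict monotonicity of $\lambda'$ presupposes differentiability of $\lambda$, which is not among the hypotheses---use instead the symmetry $V(x^\star+z)=\lambda(\delta-z)+\lambda(\delta+z)=V(x^\star-z)$ (with $\delta$ the half-spread) combined with strict convexity of $V$; and the gap $\eta>0$ already follows from this convexity and unimodality, so the blow-up $V\to+\infty$, while true, is not needed there.
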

\begin{proof}
See Appendix \ref{app::proof_dirac}.
\end{proof}
\comment{The first part of this theorem shows that the only possible stationary solutions are symmetrical with respect to the mid-price. Indeed, as the bid and ask prices are symmetrical in our model, the absence of trades means that the bid and ask prices are each as far from the efficient price as the other.
The second part of this theorem adds an assumption on the initial prior.} This theorem implies that if the initial prior has a density whose support contains the mid-price, then necessarily an observer learns this mid-price during the intervals between the trades. This learning takes places with a characteristic time $t_1$. Note that $t_1$ is decreasing in the spread.\\

We now look at the case where the market maker learns the efficient price: we impose ex post that the mean of $m_t$ is the mid-price and we look for a fixed point. We prove a stability result: the market maker does not change his quotes if no trade occurs.

\begin{proposition}\label{prop::stabil}
Assume \hyperref[assump::lamb]{$\Ac^\lambda$} and \hyperref[assump::sig]{$\Ac_0$}, that $m_0$ is Gaussian, and suppose that the processes $S^a$ and $S^b$ satisfy $\frac{S^a-S^b}{2}=\delta>0$ and are such that the solution of the Zakai SPDE is a distribution with mean $\frac{S^a+S^b}{2}$ (\textit{i.e.} that the market maker chooses his mid-price as the mean of $m_t$, and that he takes a constant half-spread $\delta$). Then the quotes are stable, \textit{i.e.} $\frac{S^a_t+S^b_t}{2}=\frac{S^a_0+S^b_0}{2}$, for $t<\tau=\inf\{t>0,\max(N^a_t,N^b_t)>0\}$. Also, there exists a.s. a unique solution $\hat m$ on $[0,\tau)$ to the Zakai SPDE with initial density $m_0$, and this solution is integrable. The renormalized probability density $u_t=\frac{\hat m_t}{\int \hat m_t(x)dx}$ is deterministic, and it tends in the sense of measures to a Dirac distribution at point $\frac{S^a_0+S^b_0}{2}$ as $t\rightarrow+\infty$.
\end{proposition}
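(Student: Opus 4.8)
The plan is to reduce the whole statement to the stability of the quotes: once the mid-price is shown to be frozen on $[0,\tau)$, the claimed existence, uniqueness, integrability, determinism and the Dirac limit are exactly the conclusions of Theorem~\ref{prop::dirac} applied with the (now genuinely constant) quotes, which satisfy $\Ac^{\text{fix}}$, $\Ac_0$ and the strict decrease/convexity of $\lambda$ coming from $\Ac^\lambda$. So I would prove stability first and invoke that theorem at the end, placing the Dirac mass at $p_0=\tfrac{S^a_0+S^b_0}{2}$.

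First I would write the solution explicitly. Under $\Ac_0$ there is no diffusion and no jump before $\tau$, so Proposition~\ref{prop_simple} gives $\hat m_t(x)=m_0(x)\exp\!\big(2t-\int_0^t(\lambda(S^a_u-x)+\lambda(x-S^b_u))\,du\big)$. Writing $p_u=\tfrac{S^a_u+S^b_u}{2}$ and using $\Ac^\lambda$ with the constant half-spread $\delta$, one has $\lambda(S^a_u-x)+\lambda(x-S^b_u)=2\lambda_0 e^{-a\delta}\cosh(a(x-p_u))$, so up to an $x$-independent factor $\hat m_t(x)\propto m_0(x)\exp\!\big(-2\lambda_0 e^{-a\delta}\int_0^t\cosh(a(x-p_u))\,du\big)$. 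The self-consistency hypothesis is that the mean of the normalized density $u_t$ equals $p_t$ for every $t$; evaluating at $t=0$ forces $p_0$ to be the Gaussian mean $\mu_0$ of $m_0$, so $m_0$ is symmetric about $p_0$.

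Next I would verify that the constant trajectory $p_u\equiv p_0$ is self-consistent and then show it is the only one. For the constant trajectory the weight $\exp(-2\lambda_0 e^{-a\delta}\,t\,\cosh(a(x-p_0)))$ is even about $p_0$, hence $\hat m_t$ is symmetric about $p_0$ and its mean is $p_0=p_t$: stability holds for this solution. For uniqueness I would center the picture, setting $y=x-\mu_0$, $q_u=p_u-\mu_0$, $D_t=\int_0^t\sinh(aq_u)\,du$, $E_t=\int_0^t\cosh(aq_u)\,du$, and expand $\cosh(a(y-q_u))$ to rewrite the weight as $\exp(-K_t\cosh(a(y-\tilde c_t)))$ with $K_t=2\lambda_0 e^{-a\delta}\sqrt{E_t^2-D_t^2}$ and $\tilde c_t=\tfrac1a\mathrm{arctanh}(D_t/E_t)$. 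The constraint becomes $q_t=F(\tilde c_t,t)$, where $F(c,t)$ is the mean of the density proportional to $m_0(\mu_0+y)e^{-K_t\cosh(a(y-c))}$; since $m_0(\mu_0+\cdot)$ is even, $F$ is odd in its first argument, so $F(0,t)=0$.

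The main obstacle is turning this fixed-point relation into a Gr\"onwall estimate forcing $D\equiv0$. The crucial cancellation is that, to leading order, $\tilde c_t\approx D_t/(aE_t)$ while $F(c,t)\approx K_t\sinh(ac)\int y\,m_0(\mu_0+y)\sinh(ay)\,dy$ with $K_t\le 2\lambda_0 e^{-a\delta}E_t$, so the factor $E_t$ cancels and one gets $|q_t|\le C|D_t|$ with $C$ locally bounded on $[0,\tau)$; the delicate regime is $t\to0$, where both $\tilde c_t$ and $K_t$ degenerate but their product stays controlled (away from $0$ the bound is immediate, since $E_t\ge t$ and $F(\cdot,t)$ is locally Lipschitz). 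As $\dot D_t=\sinh(aq_t)$ and $D_0=0$ (because $q_0=0$), this gives $|\dot D_t|\le C'|D_t|$, so Gr\"onwall yields $D\equiv0$, hence $q\equiv0$ and $p_t\equiv p_0$: the quotes are stable. With the quotes now constant, parts (i)--(ii) of Theorem~\ref{prop::dirac} apply verbatim and deliver the remaining assertions, the limiting Dirac mass sitting at $p_0=\tfrac{S^a_0+S^b_0}{2}$.
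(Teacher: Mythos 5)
Your proposal is correct, and its skeleton matches the paper's: explicit Zakai solution from Proposition \ref{prop_simple}, the observation that the constraint at $t=0$ forces $m_0$ to be centered at the initial mid-price, verification that the constant trajectory is self-consistent by symmetry, uniqueness of the self-consistent mid-price, and finally the Dirac limit. Where you genuinely diverge is the uniqueness step. The paper integrates the Kushner--Stratonovich equation to get $dx_t = -\frac{dt}{t_1}\int(x_t-x)u_t(x_t-x)\big(\cosh(ax)-\int\cosh(ay)u_t(x_t-y)dy\big)dx$, plugs in the constant-history form $u_t(x_t-x)\propto m_0(x_t-x)e^{-\frac{t}{t_1}\cosh(ax)}$, and concludes by Cauchy--Lipschitz uniqueness for the resulting ODE; this is short, but it quietly replaces the true weight $e^{-\frac{1}{t_1}\int_0^t\cosh(a(x-x_u))du}$, which depends on the whole past trajectory $(x_u)_{u\le t}$, by one depending on $x_t$ alone. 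You instead keep the memory explicit through $D_t=\int_0^t\sinh(aq_u)du$ and $E_t=\int_0^t\cosh(aq_u)du$, recenter the weight via the hyperbolic addition formula as $e^{-K_t\cosh(a(y-\tilde c_t))}$, and exploit the exact identity $K_t\sinh(a\tilde c_t)=D_t/t_1$ together with oddness of the mean functional $F$ to get $|q_t|\le C|D_t|$, closing with Gr\"onwall. This buys a treatment of the path dependence that the paper's ODE reduction sidesteps, at the cost of extra estimates that should be written out in full: the uniform bound $|F(c,t)|\le C\,K_t|\sinh(ac)|$ for $c$ in a compact set (including the degenerate regime $K_t\to 0$ as $t\to 0$, where your expansion indeed shows $F(c,t)=K_t\sinh(ac)\,I(t)+O(K_t^2)$ with $I$ bounded), and a localization keeping $\tilde c_t$ compact, e.g.\ a bootstrap on $T^*=\inf\{t:|q_t|\ge 1\}$, which works since $|D_t|/E_t\le\tanh(a)$ there. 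Both proofs obtain the terminal statement the same way: you invoke Theorem \ref{prop::dirac} for the now-constant quotes, while the paper re-runs that theorem's argument after setting $x_t=x_0$; these are equivalent.
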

\begin{proof}
Note from Proposition \ref{prop_simple} that the unique solution $\hat m$ of the Zakai SPDE has an integrable continuous density for $t<\tau$, with moments of every order, so the renormalized distribution $m$ solves the Kushner-Stratonovich SPDE for $t<\tau$. Writing $x_t=\int yu_t(y)dy$, observe from \eqref{eq::zakai_spde} that
\begin{align*}
    &\hat{m}_t(x_t-x)= m_0(x_t-x)e^{2t-\frac{t}{t_1} \cosh(ax)}
\end{align*}
for $t<\tau$. Integrating the Kushner-Stratonovich SPDE, we see that
\begin{align*}
    dx_t &= -\frac{dt}{t_1}\int(x_t-x)u_t(x_t-x)(\cosh(ax)-\int\cosh(ay)u_t(x_t-y)dy)dx.
\end{align*}
Plugging 
\begin{align*}
    u_t(x_t-x)= \frac{m_0(x_t-x)e^{2t-\frac{t}{t_1} \cosh(ax)}}{\int m_0(x_t-y)e^{2t-\frac{t}{t_1} \cosh(ay)}dy}
\end{align*}
into the dynamics of $x_t$ we find that $x_t$ solves a continuous and locally Lipschitz ODE. As a consequence $x_t = x_0$ \textit{i.e.} $\frac{S^a_t+S^b_t}{2}=\frac{S^a_0+S^b_0}{2}$ is the only possible solution, for $t<\tau$. Plugging $x_t=x_0$ in the previous equation we obtain the convergence of $u_t$, conditional on $t<\tau$ to a Dirac distribution at $x_0$.
\end{proof}
This result completes Theorem \ref{prop::dirac} by giving the point of view of the market maker between trades. It shows that a market maker who quotes a constant spread has no incentive to move his bid and ask prices if his initial prior on the efficient price is Gaussian.

\section{Small spread approximation}\label{sec::approx}
Assume \hyperref[assump::lamb]{$\Ac^\lambda$} and \hyperref[assump::sig]{$\Ac_\sigma$} for some $\sigma\geq 0$. Let $m_0$ be the density of a $\Nc(x_0,\sigma_0^2)$, and suppose that $\frac{S_t^a-S_t^b}{2}=\delta\geq 0$ for all $t$. The Zakai SPDE is
\begin{equation*}
    \begin{split}
    du_t(x) = &\frac{\sigma^2}{2}\partial^2_{xx}u_t(x)dt-\frac{1}{t_1} \cosh(a(x-\frac{S^a_t+S^b_t}{2}))u_t(x)dt \\
    &+ u_{t-}(x)(\lambda_0 e^{-a(S_t^a-x)} - 1) dN^a_t + u_{t-}(x)(\lambda_0 e^{-a(x-S_t^b)} - 1) dN^b_t.
    \end{split}
\end{equation*}
We consider an approximation of this SPDE which we obtain by replacing the $\cosh$ term by the first two terms in its Taylor expansion:

\begin{equation}\label{eq::aprox_zakai}
    \begin{split}
    du_t(x) = &\frac{\sigma^2}{2}\partial^2_{xx}u_t(x)dt-\frac{1}{t_1}u_t(x)dt-\frac{1}{2t_1} a^2(x-\frac{S^a_t+S^b_t}{2})^2u_t(x)dt \\
    &+ u_{t-}(x)(\lambda_0 e^{-a(S_t^a-x)} - 1) dN^a_t + u_{t-}(x)(\lambda_0 e^{-a(x-S_t^b)} - 1) dN^b_t.  
    \end{split}
\end{equation}

This new equation approximates the learning process in the case where typical spreads are small compared to $1/a$ where $a$ is the scale parameter given by the intensity function in \hyperref[assump::lamb]{$\Ac^\lambda$}. Our goal here is not to make this approximation rigorous, but to use it to obtain simple explicit and reasonable results. Also, note that a solution to \eqref{eq::aprox_zakai} is not necessarily the density of a probability distribution.

\subsection{Fixed price}
Throughout this subsection we assume that \hyperref[assump::sig]{$\Ac_0$} is satisfied.
The proof of the following proposition is a direct application of Itô's formula.
\begin{proposition}\label{prop::fix_app}
There exists a unique solution to \eqref{eq::aprox_zakai} denoted by $u$ such that $u_0=m_0$ and given for any $t\geq 0$ by
\begin{align*}
    u_t(x) &= m_0(x)e^{-\frac{1}{t_1}\frac{1}{2}a^2\int_0^t(\frac{S^a_s+S^b_s}{2}-x)^2ds+ax(N^a_t-N^b_t)-\frac{t}{t_1}}.
\end{align*}
In particular, $u_t$ is (up to a multiplicative constant) the density of a Gaussian distribution $\Nc(x_t,\sigma_t^2)$ where
\begin{align*}
    x_t = \frac{\frac{x_0}{\sigma_0^2}+a^2\frac{1}{t_1}\int_0^t\frac{S^a_s+S^b_s}{2}ds+a(N^a_t-N^b_t)}{\frac{1}{\sigma_0^2}+a^2\frac{t}{t_1}}
\end{align*}
and
\begin{align*}
    \sigma_t^2 = \frac{1}{\frac{1}{\sigma_0^2}+a^2\frac{t}{t_1}}\underset{t\rightarrow+\infty}{\rightarrow}0.
\end{align*}
\end{proposition}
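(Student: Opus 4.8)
The plan is to exploit the fact that, under \hyperref[assump::sig]{$\Ac_0$} (so $\sigma_S=0$), the approximate SPDE \eqref{eq::aprox_zakai} contains no spatial derivative: every surviving term multiplies $u_t(x)$ pointwise. Consequently, for each fixed $x\in\R$, the map $t\mapsto u_t(x)$ satisfies a \emph{scalar} linear SDE driven by $N^a$ and $N^b$, with no coupling to the values $u_t(x')$ at other points. This reduces the statement to solving, for each $x$, an equation $du_t(x)=\alpha_t(x)u_t(x)\,dt+\beta^a_t(x)u_{t-}(x)\,dN^a_t+\beta^b_t(x)u_{t-}(x)\,dN^b_t$ with $\alpha_t(x)=-\frac{1}{t_1}-\frac{a^2}{2t_1}(x-\frac{S^a_t+S^b_t}{2})^2$, $\beta^a_t(x)=\lambda_0 e^{-a(S^a_t-x)}-1$ and $\beta^b_t(x)=\lambda_0 e^{-a(x-S^b_t)}-1$.

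First I would write down the Dol\'eans--Dade exponential of this scalar SDE: the continuous finite-variation part integrates to $\exp(\int_0^t\alpha_s(x)\,ds)$, while each jump of $N^a$ (resp. $N^b$) multiplies $u$ by $1+\beta^a_s(x)=\lambda_0 e^{-a(S^a_{s-}-x)}$ (resp. $\lambda_0 e^{-a(x-S^b_{s-})}$). Writing the jump product as $\exp(\int_0^t\ln(\lambda_0 e^{-a(S^a_{s-}-x)})\,dN^a_s+\int_0^t\ln(\lambda_0 e^{-a(x-S^b_{s-})})\,dN^b_s)$ and expanding the logarithms, the $x$-dependent part of the jump contribution collapses to $ax(N^a_t-N^b_t)$, the remaining terms being independent of $x$. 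Since $x$-independent multiplicative factors do not affect the normalized density (as noted after \eqref{eq::zakai_spde}), absorbing them recovers exactly the stated formula. I would confirm this by a direct application of It\^o's formula for finite-variation-plus-jump processes to $u_t(x)=m_0(x)e^{\Phi_t(x)}$, checking that the continuous part reproduces $\alpha_t(x)u_t(x)\,dt$ and that the jumps $e^{\pm ax}-1$ match the jump coefficients up to the harmless $x$-independent factors. Uniqueness is immediate from the pointwise-in-$x$ scalar linear structure.

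Specializing $m_0$ to the $\Nc(x_0,\sigma_0^2)$ density, the exponent becomes a quadratic in $x$: the term $-(x-x_0)^2/(2\sigma_0^2)$ from $m_0$, the term $-\frac{a^2}{2t_1}\int_0^t(x-\frac{S^a_s+S^b_s}{2})^2\,ds$ (quadratic after expansion), and the linear term $ax(N^a_t-N^b_t)$. Completing the square, I would read off the coefficient of $x^2$, namely $-\tfrac12(\frac{1}{\sigma_0^2}+\frac{a^2 t}{t_1})$, giving $\sigma_t^2=(\frac{1}{\sigma_0^2}+\frac{a^2 t}{t_1})^{-1}$, and the coefficient of $x$, namely $\frac{x_0}{\sigma_0^2}+\frac{a^2}{t_1}\int_0^t\frac{S^a_s+S^b_s}{2}\,ds+a(N^a_t-N^b_t)=x_t/\sigma_t^2$, which yields the claimed $x_t$. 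The negativity of the $x^2$ coefficient guarantees $u_t$ is integrable, so renormalization into a genuine Gaussian density is legitimate, and $\sigma_t^2\to 0$ as $t\to\infty$ is immediate since $\frac{a^2 t}{t_1}\to\infty$.

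The main obstacle, and really the only delicate bookkeeping, is the jump term: one must correctly identify the Dol\'eans--Dade product, separate its $x$-dependent part $ax(N^a_t-N^b_t)$ from the $x$-independent factors, and argue via the normalization convention that discarding the latter is legitimate, so that the stated formula is the solution in the relevant sense while still encoding the correct mean shift of $a$ per net buy trade. Everything else is routine expansion and completion of the square.
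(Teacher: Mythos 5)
Your proposal is correct and follows essentially the same route as the paper, whose proof is stated simply as ``a direct application of It\^o's formula'': you solve the pointwise-in-$x$ linear jump SDE via its Dol\'eans--Dade exponential (equivalently, verify the formula by It\^o) and then complete the square for the Gaussian $m_0$. Your explicit handling of the $x$-independent jump factors $\lambda_0 e^{-aS^a_{s-}}$ and $\lambda_0 e^{aS^b_{s-}}$ --- which the stated formula silently discards, legitimately so by the paper's normalization convention --- is a welcome precision the paper leaves implicit.
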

\begin{remark}
This solution should be compared to the exact solution
\begin{align*}
    \hat m_t(x) &= m_0(x)e^{-\frac{1}{t_1}\int_0^t\cosh(a(\frac{S^a_s+S^b_s}{2}-x))ds+ax(N^a_t-N^b_t)}.
\end{align*}
One advantage of the approximation is that it helps us derive explicit formulas for the moments and the dynamics of the posterior.
\end{remark}
We deduce the following results by specifying $S^a$ and $S^b$.
\begin{corollary}
Under \hyperref[assump::fix]{$\Ac^\text{fix}$} we have
\begin{align*}
    x_t\underset{t\rightarrow+\infty}{\rightarrow}\frac{a^2\frac{1}{t_1}\frac{S^a+S^b}{2}+a\frac{1}{t_1}\sinh(-a(\frac{S^a+S^b}{2}-S_0))}{a^2\frac{1}{t_1}},
\end{align*}
a.s.. If instead we impose $N^a=N^b=0$, then
\begin{align*}
    &x_t\underset{t\rightarrow+\infty}{\rightarrow}\frac{S^a+S^b}{2}.
\end{align*}
If $\frac{S^a_t+S^b_t}{2}=x_t$ with fixed half-spread $\frac{S^a_t-S^b_t}{2}=\delta>0$\footnote{This means that the particular observer which learning process we are considering is the market maker, and that he chooses the mid-price as $x_t$ with a constant half-spread $\delta$.}, then $x_t$ is the unique solution to
\begin{align*}
    x_t=x_0+\int_0^t \frac{a}{\frac{1}{\sigma_0^2}+a^2\frac{s}{t_1}}\big(dN^a_s-dN^b_s\big).
\end{align*}
\end{corollary}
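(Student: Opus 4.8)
The plan is to read all three limits straight off the closed-form expression for $x_t$ supplied by Proposition \ref{prop::fix_app}, namely
$$x_t = \frac{\frac{x_0}{\sigma_0^2} + a^2\frac{1}{t_1}\int_0^t\frac{S^a_s+S^b_s}{2}\,ds + a(N^a_t - N^b_t)}{\frac{1}{\sigma_0^2}+a^2\frac{t}{t_1}}.$$
Each of the three claims then follows by specifying the inputs $S^a,S^b,N^a,N^b$ and studying the large-$t$ behaviour; no new machinery is needed beyond this formula together with \hyperref[assump::lamb]{$\Ac^\lambda$} and the value $t_1 = e^{a(S^a-S^b)/2}/(2\lambda_0)$.

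For the first statement, under \hyperref[assump::fix]{$\Ac^\text{fix}$} the prices $S^a,S^b$ are constant, so under \hyperref[assump::sig]{$\Ac_0$} the efficient price is the constant $S_0$ and the compensators $\lambda(S^a-S_0)$ and $\lambda(S_0-S^b)$ are deterministic constants. Hence $N^a$ and $N^b$ are homogeneous Poisson processes, and I would invoke the strong law of large numbers to get $N^a_t/t\to\lambda(S^a-S_0)$ and $N^b_t/t\to\lambda(S_0-S^b)$ almost surely. Dividing numerator and denominator of the formula by $t$, the $x_0/\sigma_0^2$ and $1/\sigma_0^2$ contributions vanish, $\int_0^t\frac{S^a+S^b}{2}ds = t\frac{S^a+S^b}{2}$, and one is left with $\frac{S^a+S^b}{2}+\frac{t_1}{a}\big(\lambda(S^a-S_0)-\lambda(S_0-S^b)\big)$. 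The remaining step is purely algebraic: inserting $\lambda(x)=\lambda_0 e^{-ax}$ and $t_1=e^{a(S^a-S^b)/2}/(2\lambda_0)$, and factoring through the distance $\frac{S^a+S^b}{2}-S_0$ between the mid-price and the efficient price, collapses $\frac{t_1}{a}\big(\lambda(S^a-S_0)-\lambda(S_0-S^b)\big)$ into $\frac{1}{a}\sinh\!\big(-a(\frac{S^a+S^b}{2}-S_0)\big)$, which is exactly the stated limit. The second claim is the same computation with the jump term switched off: setting $N^a=N^b=0$ in the same constant-price setting and again dividing by $t$ leaves only $\frac{S^a+S^b}{2}$, recovering the between-trades convergence to the mid-price already seen in Theorem \ref{prop::dirac}.

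For the third statement I would impose the market-maker fixed-point condition $\frac{S^a_s+S^b_s}{2}=x_s$ (constant half-spread $\delta$, hence constant $t_1=e^{a\delta}/(2\lambda_0)$) inside the same formula, replacing $\int_0^t\frac{S^a_s+S^b_s}{2}ds$ by $\int_0^t x_s\,ds$. Clearing the denominator gives
$$\big(\tfrac{1}{\sigma_0^2}+a^2\tfrac{t}{t_1}\big)x_t = \tfrac{x_0}{\sigma_0^2} + a^2\tfrac{1}{t_1}\int_0^t x_s\,ds + a(N^a_t-N^b_t).$$
Since the prefactor $\frac{1}{\sigma_0^2}+a^2\frac{t}{t_1}$ is deterministic, continuous and of bounded variation, integration by parts carries no jump correction; differentiating both sides, the two $a^2\frac{1}{t_1}x_t\,dt$ terms cancel, leaving $dx_t=\frac{a}{\frac{1}{\sigma_0^2}+a^2 t/t_1}(dN^a_t-dN^b_t)$, and integrating back produces the claimed integral equation. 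Uniqueness is immediate once the paths of $N^a,N^b$ are fixed, since the equation then determines $x_t$ explicitly as a sum over the jump times weighted by the deterministic coefficient $a/(\frac{1}{\sigma_0^2}+a^2 s/t_1)$.

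The only genuinely delicate points are, for the first claim, verifying that the intensities are truly constant so the Poisson law of large numbers applies verbatim, and, for the third, the self-referential use of Proposition \ref{prop::fix_app}: its formula is derived for exogenous $S^a,S^b$, so one must read it along the realized path $S^a_t=x_t+\delta$, $S^b_t=x_t-\delta$ and confirm that this substitution is legitimate (it is, because the right-hand side depends on $S^a,S^b$ only through the realized path). Neither is a real obstacle, and the bulk of the work is the routine $\sinh$ simplification, so I expect the argument to be short.
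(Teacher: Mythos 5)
Your proposal is correct and follows essentially the same route as the paper: the paper declares the first two limits ``obvious'' (the intended argument being exactly your conditional-Poisson strong law of large numbers plus the $\sinh$ algebra, which checks out since $\lambda(S^a-S_0)-\lambda(S_0-S^b)=\frac{1}{t_1}\sinh\bigl(-a(\tfrac{S^a+S^b}{2}-S_0)\bigr)$), and for the market-maker case the paper likewise differentiates the closed-form expression from Proposition \ref{prop::fix_app}, imposes $\frac{S^a_t+S^b_t}{2}=x_t$ so that the drift cancels, and integrates. Your only cosmetic difference is imposing the fixed-point condition before differentiating rather than after, which is the same manipulation, and your remarks on the pathwise legitimacy of the substitution and on uniqueness are sound.
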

\begin{proof}
The proof of the first point is obvious. We only prove the second point. From Proposition \ref{prop::fix_app} we deduce that
\begin{align*}
    dx_t = \sigma_t^2\frac{a^2}{t_1}(\frac{S^a_t+S^b_t}{2}-x_t)dt+\sigma_t^2a(dN^a_t-dN^b_t).
\end{align*}
Imposing $\frac{S^a_t+S^b_t}{2}=x_t$ and integrating we get that $x_t$ is a solution to
\begin{align*}
    x_t=x_0+\int_0^t \frac{a}{\frac{1}{\sigma_0^2}+a^2\frac{s}{t_1}}\big(dN^a_s-dN^b_s\big).
\end{align*}
\end{proof}
\comment{\begin{remark}
Under \hyperref[assump::fix]{$\Ac^\text{fix}$} and if $a(\frac{S^a+S^b}{2}-S_0)<<1$ then the limit of $x_t$ is approximately $S_0$. If $\frac{S^a_t+S^b_t}{2}=x_t$ and if $a(x_0-S_0)<<1$, we have
\begin{align*}
    \E[x_t] \simeq x_0+\int_0^t \frac{a}{\frac{1}{\sigma_0^2}+a^2\frac{s}{t_1}}(S_0-\E[x_t])
\end{align*}
so
\begin{align*}
    \E[x_t] \simeq \frac{\frac{t_1}{\sigma_0^2a^2}}{\frac{t_1}{\sigma_0^2a^2}+t}x_0+\frac{t}{\frac{t_1}{\sigma_0^2a^2}+t}S_0.
\end{align*}
\end{remark}}


\subsection{Brownian efficient price}
Assume now \hyperref[assump::sig]{$\Ac_\sigma$} with $\sigma>0$ instead of \hyperref[assump::sig]{$\Ac_0$}. We have the following result.

\begin{theorem}\label{prop::brow_approx}
Let $T_0>0$ and suppose that the market maker sets his quotes so that the mid-price $\frac{S^a_t+S^b_t}{2}$ is almost surely bounded on $[0,T_0]$ and constant between two jumps of $N^a$ or $N^b$ (this holds for example if $\frac{S^a_t+S^b_t}{2}=x_t$). There exists a unique positive solution $u$ to the approximate SPDE \eqref{eq::aprox_zakai} on $[0,T_0]\times\R$ such that $u_0=m_0$. It is given, for any $t\geq 0$ and up to a multiplicative constant\footnote{Recall that the solution of the approximate SPDE \eqref{eq::aprox_zakai} is not necessarily the density of a probability distribution.}, by the density of $\Nc(x_t,\sigma_t^2)$ where
\begin{align}\label{eq::sigmat}
\sigma_t^2 =
\begin{cases}
    \frac{\sigma \sqrt{t_1}}{a}\sqrt{1+\exp(-\frac{a\sigma}{2\sqrt{t_1}}t+C_0^+)} &\text{ if }\sigma_0^2\geq\frac{\sigma \sqrt{t_1}}{a}\\
    \frac{\sigma \sqrt{t_1}}{a}\sqrt{1-\exp(-\frac{a\sigma}{2\sqrt{t_1}}t+C_0^-)}&\text{ otherwise}
\end{cases}
\end{align}
with $C_0^+=\log(\sigma_0^4\frac{a^2}{\sigma^2 t_1}-1)$, $C_0^-=\log(1-\sigma_0^4\frac{a^2}{\sigma^2 t_1})$ and 
\begin{align}\label{eq::xt}
    dx_t = \sigma_t^2\frac{a^2}{t_1}(\frac{S^a_t+S^b_t}{2}-x_t)dt+\sigma_t^2a(dN^a_t-dN^b_t).
\end{align}
In particular, $\sigma_t^2$ converges to $\sigma^2_\infty=\frac{\sigma \sqrt{t_1}}{a}>0$ and a.s. $x_t$ does not converge.
\end{theorem}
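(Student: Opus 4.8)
The plan is to exploit the fact that the linear SPDE \eqref{eq::aprox_zakai} preserves Gaussianity: its continuous part is a heat operator perturbed by a quadratic-in-$x$ potential (an Ornstein--Uhlenbeck/harmonic generator, up to a multiplicative constant), while each jump multiplies the current density by a fixed exponential $\lambda_0 e^{-a(S^a_{t-}-x)}$ or $\lambda_0 e^{-a(x-S^b_{t-})}$; all of these operations map an (unnormalised) Gaussian to an (unnormalised) Gaussian. I would therefore look for a solution of the form $u_t(x)=C_t\exp\big(-\tfrac{(x-x_t)^2}{2\sigma_t^2}\big)$ and determine the three parameters $C_t,x_t,\sigma_t^2$ by substitution, exactly as in the fixed-price Proposition \ref{prop::fix_app}, the only new feature being the diffusion term $\tfrac{\sigma^2}{2}\partial^2_{xx}u_t$.

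Substituting the ansatz into the continuous part of \eqref{eq::aprox_zakai} and matching the coefficients of $(x-x_t)^2$, $(x-x_t)$ and $1$, I would read off the dynamics of the parameters. The crucial one is the variance: matching the quadratic coefficient yields the autonomous Riccati ODE
\[
    \dot\sigma_t^2=\sigma^2-\frac{a^2}{t_1}\sigma_t^4 ,
\]
in which neither $x_t$ nor the mid-price enters (the potential $-\tfrac{a^2}{2t_1}(x-\tfrac{S^a+S^b}{2})^2$ has the same quadratic coefficient wherever it is centred). Moreover the jumps tilt the Gaussian by $e^{\pm ax}$, which shift the mean by $\pm a\sigma_{t-}^2$ but leave the variance unchanged, so $\sigma_t^2$ is deterministic and continuous and solves this ODE on all of $[0,\infty)$. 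Its unique positive equilibrium is $\sigma_\infty^2=\tfrac{\sigma\sqrt{t_1}}{a}$; integrating by separation of variables, distinguishing $\sigma_0^2\ge\sigma_\infty^2$ from $\sigma_0^2<\sigma_\infty^2$, gives the closed form \eqref{eq::sigmat} and monotone convergence $\sigma_t^2\to\sigma_\infty^2$. Matching the linear coefficient gives the drift $\sigma_t^2\tfrac{a^2}{t_1}(\tfrac{S^a_t+S^b_t}{2}-x_t)$, and the exponential tilt at a trade gives the jump $\pm a\sigma_t^2$, which together are exactly \eqref{eq::xt}.

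For existence I would simply verify, via Itô's formula applied between and across jumps (as in Proposition \ref{prop::fix_app}), that the Gaussian built from the solutions of the variance ODE and of \eqref{eq::xt} indeed solves \eqref{eq::aprox_zakai} with $u_0=m_0$; positivity is immediate. The hypothesis that $\tfrac{S^a_t+S^b_t}{2}$ is bounded on $[0,T_0]$ and constant between jumps is what keeps the jump coefficients $\lambda_0 e^{-a(S^a_{t-}-x)}$ controlled and makes this verification legitimate on each finite horizon; uniqueness then follows from the linearity of \eqref{eq::aprox_zakai}, e.g. by the integrating-factor reduction to the constant-price computation of Propositions \ref{prop_simple} and \ref{prop::fix_app}. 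Letting $T_0$ be arbitrary yields the solution on $[0,\infty)$.

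Finally, for the non-convergence of $x_t$ I would specialise to the feedback $\tfrac{S^a_t+S^b_t}{2}=x_t$, for which the drift in \eqref{eq::xt} vanishes and $x_t$ becomes the pure-jump process $dx_t=a\sigma_t^2\,(dN^a_t-dN^b_t)$. Since $\sigma_t^2\to\sigma_\infty^2>0$, every trade eventually moves $x_t$ by at least $\tfrac12 a\sigma_\infty^2$ in absolute value; as the trade intensities stay bounded away from $0$ and their time-integral diverges almost surely, there are infinitely many such trades, so the increments of $x_t$ do not tend to $0$ and $x_t$ cannot be Cauchy. I expect the main obstacle to be precisely this last step---controlling the random intensities of $N^a,N^b$ (driven by the unbounded Brownian price $S$) well enough to guarantee infinitely many jumps of non-vanishing size almost surely---together with the care needed to close the Gaussian ansatz across jumps under the self-referential choice $\tfrac{S^a_t+S^b_t}{2}=x_t$.
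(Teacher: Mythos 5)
Your construction of the solution follows essentially the paper's own route: the same Gaussian ansatz $u_t(x)=\gamma_t e^{-(x-x_t)^2/(2\sigma_t^2)}$, the same coefficient matching giving $\partial_t(\sigma_t^2)=\sigma^2-\frac{a^2}{t_1}\sigma_t^4$ (the paper writes it as $\partial_t\sigma_t=\frac{\sigma^2}{2}\frac{1}{\sigma_t}-\frac{a^2}{2t_1}\sigma_t^3$) together with the drift in \eqref{eq::xt}, the same observation that a trade tilts the Gaussian by $e^{\pm ax}$, shifting the mean by $\pm a\sigma_{t-}^2$ and leaving the variance unchanged, and the same pasting of the pathwise PDE solution across the a.s.\ finitely many jumps on $[0,T_0]$, which is exactly where the hypothesis that the mid-price is constant between jumps is used.

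The genuine gap is the uniqueness step. You assert that uniqueness ``follows from the linearity of \eqref{eq::aprox_zakai}, e.g.\ by the integrating-factor reduction to the constant-price computation of Propositions \ref{prop_simple} and \ref{prop::fix_app}''. Neither half of this works once $\sigma>0$. The reduction to Propositions \ref{prop_simple} and \ref{prop::fix_app} is available only because, for $\sigma_S=0$, the equation contains no $\partial^2_{xx}$ term and is therefore an ODE in $t$ for each fixed $x$, solvable by a pointwise integrating factor; the diffusion term couples the $x$-values and no such reduction exists. And linearity alone never yields uniqueness for parabolic equations on all of $\R$: the difference of two solutions solves the homogeneous equation, but the homogeneous heat equation already admits nonzero Tychonov-type solutions with zero initial data, and the difference of two positive solutions has no sign, so the positivity of each solution cannot be exploited naively. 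This is precisely why the theorem restricts to \emph{positive} solutions and why the paper invokes the classical uniqueness theory for positive solutions of parabolic Cauchy problems with unbounded coefficients, as in \cite{Aronson1967UniquenessOP} (unbounded coefficients being unavoidable here because of the quadratic potential $-\frac{a^2}{2t_1}(x-\frac{S^a_t+S^b_t}{2})^2$). Without a Widder/Aronson-type result your argument does not establish uniqueness. Two smaller remarks: your claim that separation of variables for $\dot w=\sigma^2-\frac{a^2}{t_1}w^2$ ``gives'' \eqref{eq::sigmat} should be checked rather than asserted, since that Riccati equation relaxes to $\sigma_\infty^2=\frac{\sigma\sqrt{t_1}}{a}$ along $\tanh$/$\coth$ profiles at rate $\frac{2a\sigma}{\sqrt{t_1}}$, and reconciling this with the displayed formula is a computation you must actually carry out; and the final non-convergence of $x_t$, which you rightly flag as delicate (infinitely many jumps of size bounded away from $0$ a.s.), is not proved in the paper's own argument either, so your sketch there is an addition rather than a deviation.
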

\begin{proof}
We take an ansatz $u_t = \gamma_te^{-\frac{(x-x_t)^2}{2\sigma_t^2}}$ with $\sigma_t> 0$ and we look at the PDE which arises from \eqref{eq::aprox_zakai}, seen pathwise between $t=0$ and $t=\tau\wedge T_0$ where $\tau=\inf\{t>0,\max(N^a_t,N^b_t)>0\}$. Using \eqref{eq::aprox_zakai} and identifying the quadratic, linear and constant terms in $x$, we find that a solution of this PDE on $[0,\tau\wedge T_0)\times\R$ is given by a solution of the system
\begin{align*}
    &\partial_t\sigma_t = \frac{\sigma^2}{2}\frac{1}{\sigma_t}-\frac{a^2}{2t_1}\sigma_t^3\\
    &\partial_tx_t-\frac{2x_t}{\sigma_t}\partial_t\sigma_t=-x_t\frac{\sigma^2}{\sigma_t^2}+\sigma_t^2\frac{a^2}{t_1}\frac{S_t^a+S_t^b}{2}\\
    &\partial_t\gamma_t=\gamma_t\big(\frac{-1}{t_1}+\partial_t\big(\frac{x_t^2}{2\sigma_t^2}\big)-\frac{1}{2}\frac{a^2}{t_1}(\frac{S_t^a+S_t^b}{2})^2+\frac{\sigma^2}{2}(-\frac{1}{\sigma_t^2}+\frac{x_t^2}{\sigma_t^4})\big).
    \comment{&\partial_t\gamma_t=\gamma_t\big(\frac{-1}{t_1}+\frac{a^2}{t_1}x_t(\frac{3}{2}x_t+\frac{S_t^a+S_t^b}{2})-\frac{1}{2}\frac{a^2}{t_1}(\frac{S_t^a+S_t^b}{2})^2-\frac{\sigma^2}{2\sigma_t^2}\big)}
    \end{align*}
Given $x_0$, $\sigma_0$ and $\gamma_0=\frac{1}{\sqrt{2\pi\sigma_0^2}}$ a solution of this system exists and is unique. In particular $\sigma_t$ and $x_t$ are given by the \eqref{eq::sigmat} and \eqref{eq::xt} and $\gamma_te^{-\frac{(x-x_t)^2}{2\sigma_t^2}}$ is bounded on $[0,\tau\wedge T_0)\times\R$. From classical results on the positive solutions of the Cauchy problem with unbounded coefficients, as in \cite{Aronson1967UniquenessOP}, we deduce that this is the unique positive solution of \eqref{eq::aprox_zakai} on $[0,\tau\wedge T_0)\times\R$. If $\tau<T_0$ the solution stays Gaussian (up to a constant) after a jump at $\tau$. For example if $dN^a_\tau=1$ we have
\begin{align*}
    u_\tau(x) = u_{\tau-}(x)\lambda_0e^{-a(S^a_{\tau-}-x)}=\gamma_{\tau-}e^{-\frac{(x-x_{\tau-})^2}{2\sigma_{\tau-}^2}}\lambda_0e^{-a(S^a_{\tau-}-x)}=\gamma_{\tau}e^{-\frac{(x-x_{\tau})^2}{2\sigma_{\tau}^2}}
\end{align*}
with $\sigma_\tau=\sigma_{\tau-}$, $x_\tau=x_{\tau-}+a\sigma_{\tau-}^2$ and $\gamma_\tau=\gamma_{\tau_-}\lambda_0e^{-aS^a_{\tau-}}e^{-\frac{x_{\tau-}^2-(x_{\tau-}+a\sigma_{\tau-}^2)^2}{2\sigma_{\tau-}^2}}$.
Because there is almost surely a finite number of jumps on $[0,T_0]$, we deduce that a.s. we can extend this solution to $[0,T_0]\times\R$. In particular $\sigma_t$ must solve the first equation for all $t$, and $x_t$ satisfies
\begin{align*}
    dx_t=\big(\frac{2x_t}{\sigma_t}\partial_t\sigma_t-x_t\frac{\sigma^2}{\sigma_t^2}+\sigma_t^2\frac{a^2}{t_1}\frac{S^a_t+S^b_t}{2}\big)dt+\sigma_t^2 a(dN^a_t-dN^b_t).
\end{align*}
\end{proof}

This theorem shows that there is an asymptotic confidence in our estimation of the efficient price which comes from a balance between the diffusive behaviour of the price and the information gained by observing the trades. More precisely, the variance of our posterior is asymptotically constant and depends linearly on the (supposed) volatility of the efficient price. As a consequence in the approximation there is a permanent regime in which an observer's estimation of the price $x_t$ mean-reverts towards the mid-price at a speed $\frac{t_1}{a^2\sigma_\infty^2}$ and jumps by a fixed amount $\sigma^2_\infty a$ when a trade happens.

\begin{corollary}
If the market maker chooses the mid-price so that $\frac{S^a_t+S^b_t}{2}=x_t$ with fixed half-spread $\delta=\frac{S^a-S^b}{2}$, then the only solution is given by
\begin{align*}
    x_t=x_0+\int_0^t \sigma_s^2 a(dN^a_s-dN^b_s).
\end{align*}
\end{corollary}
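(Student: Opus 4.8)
The plan is to reduce the statement directly to equation \eqref{eq::xt} of Theorem \ref{prop::brow_approx}, exactly as was done in the fixed-price corollary, by substituting the self-consistency rule $\frac{S^a_t+S^b_t}{2}=x_t$ to annihilate the drift. First I would recall from the proof of Theorem \ref{prop::brow_approx} that the variance $\sigma_t^2$ is governed by the autonomous first equation of the system there, namely $\partial_t\sigma_t=\frac{\sigma^2}{2}\frac{1}{\sigma_t}-\frac{a^2}{2t_1}\sigma_t^3$, which does not involve the mid-price at all. Hence $\sigma_t^2$ is a fixed deterministic function, given explicitly by \eqref{eq::sigmat}, bounded on $[0,T_0]$ and converging to $\sigma_\infty^2=\frac{\sigma\sqrt{t_1}}{a}$, irrespective of how the market maker chooses his quotes.

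With $\sigma_t^2$ thus determined, the mean obeys \eqref{eq::xt}, that is $dx_t=\sigma_t^2\frac{a^2}{t_1}(\frac{S^a_t+S^b_t}{2}-x_t)dt+\sigma_t^2a(dN^a_t-dN^b_t)$. Imposing the fixed-point rule $\frac{S^a_t+S^b_t}{2}=x_t$ makes the bracketed drift coefficient vanish identically, so the continuous part of the dynamics disappears and one is left with the pure-jump equation $dx_t=\sigma_t^2a(dN^a_t-dN^b_t)$. Integrating from $0$ yields the claimed formula $x_t=x_0+\int_0^t\sigma_s^2a(dN^a_s-dN^b_s)$, where it is harmless to write $\sigma_s^2$ rather than $\sigma_{s-}^2$ since $\sigma$ is continuous (the variance does not jump, as recorded in the theorem's proof via $\sigma_\tau=\sigma_{\tau-}$).

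The point that genuinely needs care — and the main obstacle — is the circularity built into the fixed-point condition: the quotes $S^a,S^b$ are set from $x_t$, while the filter producing $x_t$ reads the observations $N^a,N^b$ whose intensities in turn depend on those quotes. I would dispel this by observing that, because $\sigma_t^2$ decouples, the construction above is explicit and forward in time, hence automatically unique. Under the fixed-point condition the drift vanishes, so $x_t$ is constant between successive jumps of $N^a$ or $N^b$ and at each jump moves by the already-known amount $\pm\sigma_{t-}^2a$; since there are a.s.\ only finitely many jumps on the compact interval $[0,T_0]$, the process $x_t$ is a.s.\ bounded there. Consequently the mid-price $\frac{S^a_t+S^b_t}{2}=x_t$ satisfies precisely the hypotheses of Theorem \ref{prop::brow_approx} (bounded on $[0,T_0]$ and constant between jumps), so that theorem's existence-and-uniqueness conclusion applies and identifies the displayed $x_t$ as the only solution. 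Once the decoupling of $\sigma_t^2$ is noted, the remainder is the same one-line drift cancellation used in the fixed-price case.
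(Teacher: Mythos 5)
Your proof is correct and takes essentially the same route as the paper: substitute the fixed-point rule $\frac{S^a_t+S^b_t}{2}=x_t$ into \eqref{eq::xt} so that the drift cancels, then integrate the remaining pure-jump equation, exactly as the paper does explicitly for the analogous fixed-price corollary. Your extra verification that the resulting mid-price is bounded on $[0,T_0]$ and constant between jumps (so Theorem \ref{prop::brow_approx} applies and gives uniqueness) is the same consistency point the paper records parenthetically in the statement of that theorem.
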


\section{Impact of meta-orders on the posterior}\label{sec::meta}
We consider the case where, during the first $T\in\R\cup\{+\infty\}$ seconds, an additional market taker sends $\beta$ deterministic orders (buy orders if $\beta>0$ and sell orders if $\beta<0$) per second. In the following we suppose $\beta>0$ and we make the same assumptions as in Section \ref{sec::approx}: we assume \hyperref[assump::lamb]{$\Ac^\lambda$} and \hyperref[assump::sig]{$\Ac_\sigma$} for some $\sigma\geq 0$. The meta-order is treated as a perturbation of the process $N^a$. We assume that the observer does not make the difference between trades from a meta-order and trades from opportunistic market takers, and that he still uses the same filtering equation as in the case without meta-order. As a consequence we model the meta-orders by an additional term in the observed process $N^a$: $N^a_t$ is replaced by $N^a_t+N^\beta_t$ where $N^\beta$ is the deterministic c\`adl\`ag pure jump process defined by $N^\beta_0=0$ and $\Delta N^\beta_t=1$ if $t\beta\in\N^*$ and $t\leq T$, and $\Delta N^\beta_t=0$ otherwise. We use the same notations as in Section \ref{sec::approx}. Again, let $m_0$ be the density of a Gaussian distribution $\Nc(x_0,\sigma_0^2)$, and suppose that $\frac{S_t^a-S_t^b}{2}=\delta\geq 0$ for all $t$.

\subsection{Small spread approximation}
In the approximation introduced in Section \ref{sec::approx}, the filtering equation \eqref{eq::aprox_zakai} becomes
\begin{equation}\label{eq::aprox_zakai_meta}
    \begin{split}
    du_t(x) = &\frac{\sigma^2}{2}\partial^2_{xx}u_t(x)dt-\frac{1}{t_1}u_t(x)dt-\frac{1}{2t_1} a^2(x-\frac{S^a_t+S^b_t}{2})^2u_t(x)dt \\
    &+ u_{t-}(x)(\lambda_0 e^{-a(S_t^a-x)} - 1) (dN^a_t+dN^\beta_t)\\
    &+ u_{t-}(x)(\lambda_0 e^{-a(x-S_t^b)} - 1) dN^b_t.
    \end{split}
\end{equation}

From Proposition \ref{prop::fix_app}, Theorem \ref{prop::dirac} and Theorem \ref{prop::brow_approx} we deduce the following result. 
\begin{proposition} 
There exists a unique solution to \eqref{eq::aprox_zakai_meta} denoted by $u$ such that $u_0=m_0$. In particular, $u_t$ is (up to a multiplicative constant) the density of a Gaussian $\Nc(x_t,\sigma_t^2)$ where
\begin{align*}
    dx_t = \sigma_t^2\frac{a^2}{t_1}(\frac{S^a_t+S^b_t}{2}-x_t)dt+\sigma_t^2a(dN^a_t-dN^b_t+dN^\beta_t)
\end{align*}
and $\sigma_t$ is the same as in Proposition \ref{prop::fix_app} if $\sigma=0$ or Theorem \ref{prop::dirac} if $\sigma>0$.
\end{proposition}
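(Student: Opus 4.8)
The plan is to observe that equation \eqref{eq::aprox_zakai_meta} differs from the meta-order-free equation \eqref{eq::aprox_zakai} only in that the ask-side driving measure $dN^a_t$ is replaced by $dN^a_t+dN^\beta_t$. Since $N^\beta$ is deterministic with jump times contained in $\{k/\beta : k\in\N^*,\, k/\beta\le T\}$, it contributes only finitely many additional jump times on any bounded horizon $[0,T_0]$, and each such jump sits on the ask side. I would therefore reproduce the argument used in the proof of Theorem \ref{prop::brow_approx} (resp.\ Proposition \ref{prop::fix_app} when $\sigma=0$), treating the $N^\beta$-jumps exactly as $N^a$-jumps.

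First I would analyse the behaviour between jumps. On any interval containing no jump of $N^a$, $N^b$ or $N^\beta$, equation \eqref{eq::aprox_zakai_meta} coincides with the continuous part of \eqref{eq::aprox_zakai}. Inserting the Gaussian ansatz $u_t=\gamma_t e^{-(x-x_t)^2/(2\sigma_t^2)}$ and matching the quadratic, linear and constant terms in $x$ yields, as in Theorem \ref{prop::brow_approx}, the decoupled system
\begin{align*}
    \partial_t\sigma_t = \frac{\sigma^2}{2\sigma_t}-\frac{a^2}{2t_1}\sigma_t^3,
\end{align*}
together with the linear equation for $x_t$ and the scalar equation for $\gamma_t$. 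Existence, uniqueness and positivity of this local solution follow from the Aronson-type result already invoked in Theorem \ref{prop::brow_approx}; when $\sigma=0$ I would instead quote Proposition \ref{prop::fix_app}, for which $\sigma_t^2=(1/\sigma_0^2+a^2t/t_1)^{-1}$.

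Next I would treat the jumps. At a jump of $N^a$ or $N^\beta$ the solution is multiplied by $\lambda_0 e^{-a(S^a_{t-}-x)}$, and at a jump of $N^b$ by $\lambda_0 e^{-a(x-S^b_{t-})}$. Multiplying $e^{-(x-x_{t-})^2/(2\sigma_{t-}^2)}$ by $e^{\pm a x}$ preserves the Gaussian form with unchanged variance $\sigma_t=\sigma_{t-}$ and mean shifted by $\pm a\sigma_{t-}^2$, the prefactor $\gamma$ updating accordingly; this is exactly the jump computation carried out in Theorem \ref{prop::brow_approx}. The decisive consequence is that the variance is insensitive to jumps, so $\sigma_t$ is identical to the meta-order-free case, while the mean now gains an extra $+a\sigma_{t-}^2$ at each ask-side jump, producing the stated dynamics
\begin{align*}
    dx_t = \sigma_t^2\frac{a^2}{t_1}\Big(\frac{S^a_t+S^b_t}{2}-x_t\Big)dt+\sigma_t^2a\,(dN^a_t-dN^b_t+dN^\beta_t).
\end{align*}

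Finally I would concatenate these pieces across the jump times to obtain a unique positive solution on $[0,T_0]$. The one point that genuinely needs care — the main obstacle — is to guarantee a.s.\ finitely many jumps on $[0,T_0]$: $N^\beta$ has finitely many deterministic jumps by construction, and $N^a$, $N^b$ have a.s.\ finitely many jumps because their intensities are locally bounded under the assumed a.s.\ boundedness of the mid-price and the path regularity of the (Brownian) efficient price. I would also remark that a.s.\ $N^\beta$ shares no jump time with $N^a$ or $N^b$, since the latter have no fixed atoms; in any case a simultaneous jump would only compose two Gaussian-preserving multiplications and cause no difficulty. Uniqueness on each inter-jump interval then patches to uniqueness on the whole of $[0,T_0]$, and letting $T_0$ be arbitrary concludes.
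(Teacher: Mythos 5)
Your proposal is correct and follows essentially the same route as the paper, which simply deduces the result from Proposition \ref{prop::fix_app} and Theorem \ref{prop::brow_approx} by treating the deterministic jumps of $N^\beta$ exactly as ask-side jumps: between jumps nothing changes, and at jumps the variance is untouched while the mean shifts by $\sigma_{t-}^2a$, which is precisely your argument. Note also that your reading is the intended one: the statement's citation of Theorem \ref{prop::dirac} for $\sigma>0$ is evidently a typo for Theorem \ref{prop::brow_approx}, which is what both you and the paper's surrounding text actually invoke.
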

We see that adding a meta-order does not change the dynamics of the variance of the posterior: only $x_t$ is impacted.\\

If we also impose that $\frac{S^a_t+S^b_t}{2}=x_t$ (with fixed half-spread $\delta=\frac{S^a-S^b}{2}$) then the solution is given by
\begin{align*}
    x_t=x_0+\int_0^t \sigma_s^2 a(dN^a_s-dN^b_s+dN^\beta_t).
\end{align*}
If in particular opportunistic market takers do not react to the meta-order ($N^a_t-N^b_t=0$ for $t\leq T$) then we have the following result.
\begin{corollary}\label{cor::imp}
If in addition $\frac{S^a_t+S^b_t}{2}=x_t$ (with fixed half-spread $\delta=\frac{S^a-S^b}{2}$) and $N^a_t-N^b_t=0$ for $t\leq T$, then, if $\sigma=0$, then
\begin{align*}
    x_t-x_0=\int_0^t\sigma_s^2 a dN^\beta_t\underset{\beta\rightarrow +\infty}{\sim}\frac{\beta t_1}{a}\log(1+\frac{a^2\sigma_0^2}{t_1}t) 
\end{align*}
for $t\leq T$. If $\sigma>0$, then
\begin{align*}
    x_t-x_0=\int_0^t\sigma_s^2 adN^\beta_t\underset{\beta\rightarrow +\infty}{\sim}&\frac{4\beta t_1}{a}\Bigg(-\sqrt{1\pm e^{-\frac{a\sigma}{2\sqrt{t_1}}t+C_0^\pm}}+\sqrt{1\pm e^{C_0^\pm}}\\
    &+\frac{1}{2}\log\left|\frac{1+\sqrt{1\pm e^{-\frac{a\sigma}{2\sqrt{t_1}}t+C_0^\pm}}}{1-\sqrt{1\pm e^{-\frac{a\sigma}{2\sqrt{t_1}}t+C_0^\pm}}}\right|-\frac{1}{2}\log\left|\frac{1+\sqrt{1\pm e^{C_0^\pm}}}{1-\sqrt{1\pm e^{C_0^\pm}}}\right|\Bigg)\\
    \underset{t\rightarrow +\infty}{\sim}&\beta\sqrt{t_1}\sigma t.
\end{align*}
\end{corollary}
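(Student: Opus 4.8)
The plan is to reduce the two asymptotic equivalences to one elementary analytic fact (a Riemann-sum limit) followed by two explicit antiderivative computations. Under the stated hypotheses the drift in the dynamics of $x_t$ from the preceding proposition vanishes, since the mid-price equals $x_t$, and the $dN^a_t-dN^b_t$ term vanishes as well, since $N^a_t-N^b_t=0$ on $[0,T]$; hence for $t\le T$ one has the exact identity
$$x_t-x_0=\int_0^t\sigma_s^2\,a\,dN^\beta_s.$$
Because $N^\beta$ is deterministic and jumps by one precisely at the times $k/\beta$ with $k\in\N^*$, $k/\beta\le t$, and because $\sigma_s^2$ is continuous (a jump leaves the variance unchanged), this integral is the finite sum $a\sum_{k=1}^{\lfloor\beta t\rfloor}\sigma_{k/\beta}^2$. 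Moreover $\sigma_s^2$ is not affected by the meta-order, so it is the deterministic, continuous and bounded function supplied by Proposition~\ref{prop::fix_app} when $\sigma=0$ and by Theorem~\ref{prop::brow_approx} when $\sigma>0$.

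First I would pass to the continuum limit in $\beta$. Writing $a\sum_{k=1}^{\lfloor\beta t\rfloor}\sigma_{k/\beta}^2=a\beta\cdot\tfrac1\beta\sum_{k=1}^{\lfloor\beta t\rfloor}\sigma_{k/\beta}^2$ and using that $s\mapsto\sigma_s^2$ is continuous and bounded on $[0,t]$, the Riemann sum $\tfrac1\beta\sum_{k=1}^{\lfloor\beta t\rfloor}\sigma_{k/\beta}^2$ converges to $\int_0^t\sigma_s^2\,ds$ as $\beta\to+\infty$. Dividing the exact sum by $a\beta\int_0^t\sigma_s^2\,ds$, the ratio tends to $1$, which is precisely the asymptotic equivalence $x_t-x_0\underset{\beta\to+\infty}{\sim}a\beta\int_0^t\sigma_s^2\,ds$. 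This is the only genuinely analytic step, and it is immediate given the explicit smooth form of $\sigma_s^2$.

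It then remains to evaluate $\int_0^t\sigma_s^2\,ds$ in each case. For $\sigma=0$, substituting $\sigma_s^2=(\sigma_0^{-2}+a^2s/t_1)^{-1}$ gives $\int_0^t\sigma_s^2\,ds=\tfrac{t_1}{a^2}\log\!\big(1+\tfrac{a^2\sigma_0^2}{t_1}t\big)$, and multiplying by $a\beta$ yields the first formula. For $\sigma>0$, with $\sigma_s^2=\tfrac{\sigma\sqrt{t_1}}{a}\sqrt{1\pm e^{-\frac{a\sigma}{2\sqrt{t_1}}s+C_0^\pm}}$, I would substitute $w=e^{-\frac{a\sigma}{2\sqrt{t_1}}s+C_0^\pm}$ and use the antiderivative
$$\int\frac{\sqrt{1\pm w}}{w}\,dw=2\sqrt{1\pm w}-\log\left|\frac{1+\sqrt{1\pm w}}{1-\sqrt{1\pm w}}\right|,$$
obtained in turn by the substitution $u=\sqrt{1\pm w}$. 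Evaluating between the endpoints $w=e^{-\frac{a\sigma}{2\sqrt{t_1}}t+C_0^\pm}$ and $w=e^{C_0^\pm}$ and collecting the prefactor $a\beta\cdot\tfrac{\sigma\sqrt{t_1}}{a}\cdot\tfrac{2\sqrt{t_1}}{a\sigma}=\tfrac{2\beta t_1}{a}$ (the extra factor $2$ from the $2\sqrt{1\pm w}$ term producing the overall constant $\tfrac{4\beta t_1}{a}$ and the coefficient $\tfrac12$ on the logarithms) reproduces the stated bracket. For the final equivalence as $t\to+\infty$ (understood with $T=+\infty$), I would let $e^{-\frac{a\sigma}{2\sqrt{t_1}}t+C_0^\pm}\to0$, so that every $\sqrt{1\pm e^{\,\cdot\,}}\to1$ and the only unbounded contribution is $\tfrac12\log\!\big|\tfrac{1+\sqrt{1\pm e^{\cdot}}}{1-\sqrt{1\pm e^{\cdot}}}\big|$, whose denominator vanishes like $e^{-\frac{a\sigma}{2\sqrt{t_1}}t}$; this term grows like $\tfrac12\cdot\tfrac{a\sigma}{2\sqrt{t_1}}t$, and after multiplication by $\tfrac{4\beta t_1}{a}$ one obtains the leading term $\beta\sqrt{t_1}\,\sigma\,t$.

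The computations are individually routine; the main obstacle is purely bookkeeping, namely keeping the two $\pm$ branches, the limits of integration, and the signs inside the absolute-value logarithm mutually consistent through both substitutions, so that the final constants come out correctly. By contrast the interchange of the $\beta\to+\infty$ limit with the integration needs no special care here, since $\sigma_s^2$ is explicit, continuous and bounded.
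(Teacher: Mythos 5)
Your proposal is correct and follows exactly the route the paper intends (the corollary is stated there without a written proof, as a consequence of the preceding proposition): the hypotheses kill the drift and the $dN^a-dN^b$ term, the meta-order sum $a\sum_{k\le\lfloor\beta t\rfloor}\sigma_{k/\beta}^2$ is a Riemann sum converging to $a\beta\int_0^t\sigma_s^2\,ds$ since the variance is the deterministic continuous function of Proposition \ref{prop::fix_app} (resp. Theorem \ref{prop::brow_approx}) unaffected by jumps, and the two explicit integrations reproduce the stated formulas, including the prefactor $\tfrac{4\beta t_1}{a}$ and the $t\to+\infty$ equivalent $\beta\sqrt{t_1}\sigma t$. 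No gaps; your computations check out on both $\pm$ branches.
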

So, believing that the efficient price follows a Brownian motion yields linear market impact if no information is given by the other market participants. In the extreme case where the observer does not expect the efficient price to move, logarithmic impact can appear. Hence our model does not reproduce the square-root law in the approximation for markets with small spreads. In this case logarithmic impact is a consequence of low volatility $\sigma$ of the efficient price.\\

Let us now consider the average behaviour of $x_t$. We see that
\begin{align}\label{eq::app_esp}
    d\E[x_t|S_0] \simeq -\frac{\sigma_s^2 a^2}{t_1}(\E[x_t|S_0]-S_0)dt+\sigma_t^2a dN^\beta_t,
\end{align}
in the small spread approximation, so 
\begin{align*}
    \E[x_t|S_0]\simeq \underbrace{x_0+(S_0-x_0)\int_0^t\frac{\sigma_s^2 a^2}{t_1}e^{-\int_s^t\frac{\sigma_u^2 a^2}{t_1}du}ds}_{\text{average learning: }A_t}+\underbrace{\int_0^t\sigma_s^2 a e^{-\int_s^t\frac{\sigma_u^2 a^2}{t_1}du}dN^\beta_s}_{\text{average impact: }B_t},
\end{align*}
where $A_t$ represents the average learning of the efficient price, and does not depend on the meta-order, and $B_t$ is a term of average impact caused by the meta-order.
We simulate the Zakai equation with $\frac{S^a_t+S^b_t}{2}=x_t$ and fixed half-spread $\delta=\frac{S^a-S^b}{2}$. A meta-order of size $25$ is executed. We plot the posterior's mean (minus $S_0$) as well as the approximate formula for $\E[x_t|S_0]-S_0$ given by \eqref{eq::app_esp} in the small spread approximation. We take the following parameters:
\begin{itemize}
    \item $\lambda_0=50$
    \item $\delta=0.1$
    \item $m_0\sim \Nc(S_0,\sigma_0^2)$ with $\sigma_0=0.05$, so that $A_t=x_0$
    \item $\sigma = 0.06$.
\end{itemize}
We take $a=5$ (small spreads compared to the characteristic scale $1/a$ of the intensity function) in Figure \ref{fig:impact5} and $a=20$ (large spreads compared to the characteristic scale $1/a$ of the intensity function) in Figure \ref{fig:impact20}. We vary the speed $\beta$ of the meta-order. The full lines correspond to a simulation of $x_t$, while the dashed lines correspond to \eqref{eq::app_esp}.

\begin{figure}[h]
    \centering
    \includegraphics[width = 18cm]{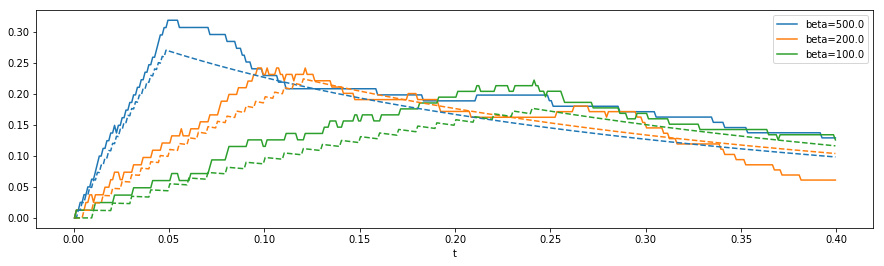}
    \caption{Simulation of the price impact with respect to time, for $a=5$ and three choices of $\beta$. The dotted plot is the theoretical mean impact in the small spread approximation.}
    \label{fig:impact5}
\end{figure}

\begin{figure}[h]
    \centering
    \includegraphics[width = 18cm]{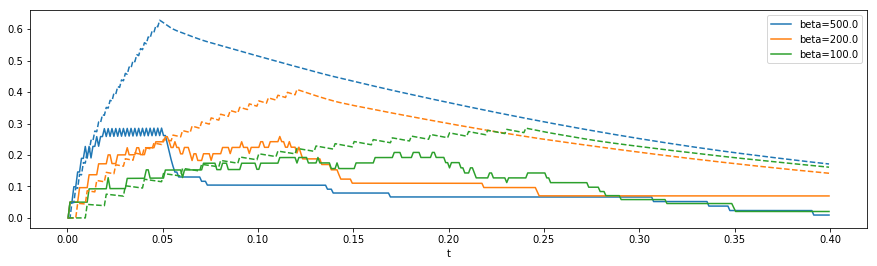}
    \caption{Simulation of the price impact with respect to time, for $a=20$ and three choices of $\beta$. The dotted plot is the theoretical mean impact in the small spread approximation.}
    \label{fig:impact20}
\end{figure}
As expected, we observe that the small-spread approximation works well only for a small $a$, and that in this case the impact is roughly linear in time.\\

Next we note that $A_t$ has closed form formulas if $\sigma=0$ or if $\sigma>0$ and $\sigma_0=\sigma_\infty$. The term $B_t$ can also be expressed in a simple way in the limit $\beta\rightarrow+\infty$.
\begin{proposition}\label{avimp}
For $t\geq 0$, we have
\begin{align*}
    A_t=
    \begin{cases}
        \frac{\frac{t_1}{\sigma_0^2a^2}}{\frac{t_1}{\sigma_0^2a^2}+t}x_0+\frac{t}{\frac{t_1}{\sigma_0^2a^2}+t}S_0&\text{if }\sigma=0,\\
        x_0+(S_0-x_0)\int_0^t\frac{\sigma_s^2 a^2}{t_1}e^{-(t-s)\frac{a\sigma}{\sqrt{t_1}}}ds\underset{\text{if }\sigma_0=\sigma_\infty}{=}S_0+(x_0-S_0)e^{-\sigma_\infty^2\frac{a^2}{t_1}t}& \text{if }\sigma>0,
    \end{cases}
\end{align*}
and 
\begin{align*}
    B_t \underset{\beta\rightarrow +\infty}{\sim} \begin{cases}
        \frac{t\wedge T}{\frac{t_1}{\sigma_0^2a^2}+t}\frac{\beta t_1}{a}& \text{if }\sigma=0,\\
        \int_0^{t\wedge T}\sigma_s^2 a\beta e^{-(t-s)\frac{a\sigma}{\sqrt{t_1}}}dt\underset{\text{if }\sigma_0=\sigma_\infty}{=}\frac{\beta t_1}{a}e^{-\sigma_\infty^2\frac{a^2}{t_1}t}(e^{\sigma_\infty^2\frac{a^2}{t_1}(t\wedge T)}-1)& \text{if }\sigma>0.
    \end{cases}
\end{align*}
\end{proposition}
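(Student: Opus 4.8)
The plan is to substitute the explicit variance dynamics established earlier and reduce everything to elementary integrals. The key observation is that both $A_t$ and $B_t$ are built from the same integrating factor $e^{-\int_s^t \frac{\sigma_u^2 a^2}{t_1}\,du}$, so I would first evaluate this factor in each regime. When $\sigma=0$, Proposition \ref{prop::fix_app} gives $\sigma_s^2=\big(\tfrac{1}{\sigma_0^2}+a^2\tfrac{s}{t_1}\big)^{-1}$; writing $c=\tfrac{a^2\sigma_0^2}{t_1}$ one has $\tfrac{\sigma_u^2 a^2}{t_1}=\tfrac{c}{1+cu}$, hence $\int_s^t\tfrac{\sigma_u^2 a^2}{t_1}\,du=\log\tfrac{1+ct}{1+cs}$ and the integrating factor is $\tfrac{1+cs}{1+ct}$. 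When $\sigma>0$ and $\sigma_0=\sigma_\infty$, Theorem \ref{prop::brow_approx} gives $\sigma_s^2\equiv\sigma_\infty^2$, so $\tfrac{\sigma_u^2a^2}{t_1}\equiv k$ with $k=\sigma_\infty^2\tfrac{a^2}{t_1}=\tfrac{a\sigma}{\sqrt{t_1}}$, and the integrating factor is $e^{-k(t-s)}$.

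For $A_t$ I would plug these into its defining integral. In the constant-volatility case the $s$-integral is $\int_0^t k\,e^{-k(t-s)}\,ds=1-e^{-kt}$, which gives $A_t=S_0+(x_0-S_0)e^{-kt}$. In the $\sigma=0$ case the factor $1+cs$ in the integrating factor cancels the $\tfrac{1}{1+cs}$ arising from $\tfrac{\sigma_s^2a^2}{t_1}$, so the integrand is the constant $\tfrac{c}{1+ct}$ in $s$; then $\int_0^t\tfrac{c}{1+ct}\,ds=\tfrac{ct}{1+ct}$, and rewriting $\tfrac1c=\tfrac{t_1}{\sigma_0^2a^2}$ yields the stated convex combination of $x_0$ and $S_0$.

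For $B_t$ the only substantive step is the $\beta\to+\infty$ asymptotics. Since $N^\beta$ jumps by one at the equally spaced times $s=j/\beta$, $j\in\N^*$, $s\le T$, the stochastic integral $B_t=\int_0^t f(s)\,dN^\beta_s$ with $f(s)=\sigma_s^2 a\,e^{-\int_s^t\frac{\sigma_u^2a^2}{t_1}du}$ is exactly the Riemann sum $\sum_{1\le j\le \beta(t\wedge T)}f(j/\beta)$ of mesh $1/\beta$. As $f$ is continuous and bounded on the compact $[0,t\wedge T]$, this sum is asymptotic to $\beta\int_0^{t\wedge T}f(s)\,ds$. It then remains to evaluate that integral. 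For $\sigma=0$ the factor $1+cs$ cancels again, $f(s)=\tfrac{\sigma_0^2 a}{1+ct}$ is constant, and $\beta\int_0^{t\wedge T}f=\tfrac{t\wedge T}{\frac{t_1}{\sigma_0^2a^2}+t}\tfrac{\beta t_1}{a}$. For $\sigma>0$ with $\sigma_0=\sigma_\infty$, one gets $\beta\sigma_\infty^2 a\,e^{-kt}\int_0^{t\wedge T}e^{ks}\,ds=\tfrac{\beta t_1}{a}e^{-kt}\big(e^{k(t\wedge T)}-1\big)$, using $\tfrac{\sigma_\infty^2 a}{k}=\tfrac{t_1}{a}$.

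The algebra is routine; the single point requiring a word of care is the Riemann-sum limit for $B_t$. This is immediate from the continuity and boundedness of $f$ on $[0,t\wedge T]$: the gap between $\sum_{j}f(j/\beta)$ and $\beta\int_0^{t\wedge T}f$ is $O(1)$ against the leading $O(\beta)$ term, which is exactly what the equivalence $\sim$ records.
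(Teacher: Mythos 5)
Your computation is correct and is exactly the argument the paper leaves implicit (the proposition is stated without proof, as a routine consequence of the formulas for $\sigma_t^2$ from Proposition \ref{prop::fix_app} and Theorem \ref{prop::brow_approx}): evaluate the integrating factor $e^{-\int_s^t\sigma_u^2a^2/t_1\,du}$ in each regime, compute the resulting elementary integrals for $A_t$, and replace the jump sum defining $B_t$ by $\beta$ times the corresponding integral as $\beta\to+\infty$. The cancellation of $1+cs$ in the $\sigma=0$ case and the Riemann-sum (or geometric-sum) asymptotics for $B_t$ are handled correctly, so nothing is missing.
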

We see that as long as the market takers keep sending orders with an intensity that depends on the efficient price, we can expect the mid-price to reflect the efficient price (via $A_t$) plus some impact (via $B_t$). The expected mid-price $x_t$ is the sum of two terms: one term $A_t$ which is left unchanged by a meta-order and which denotes the learning of the efficient price allowed by the information sent by the market takers, and another term $B_t$ which reflects the impact of the meta-order on the mid-price and behaves as a concave function of time. Interestingly if we look only at the filtering equation, then imposing $N^a-N^b=0$ has two different interpretations: it can mean that the time period we are looking at is small or that the buys equal the sells \textit{i.e.} that the market takers trade equally on both sides, \textit{i.e.} that on this sample path $S_t$ is close to $x_t$. So it reflects the case in which the market takers learn similarly to the market maker. In that case the impact becomes linear in the long run, except if the market maker is convinced that the efficient price does not move, in which case the impact is logarithmic. Note that the two choices we make for the market takers (averaging trades with some fixed $S$ or taking $N^a-N^b=0$) are actually two extreme cases where the price formation process is driven either by market takers who are price arbitragers, or by the market maker. The study of a transition between those two cases, which should take into account the views of the market takers, is left for further study.

\subsection{No approximation}
Now we do not approximate the Zakai SPDE by \eqref{eq::aprox_zakai} anymore. We assume \hyperref[assump::sig]{$\Ac_0$}. In the presence of a meta-order, our (perturbed) filtering equation \eqref{eq::zakai_spde} becomes 
\begin{equation}\label{eq::zakai_meta}
    \begin{split}
    du_t(x) = &-\frac{1}{t_1}\cosh(a(x-\frac{S^a_t+S^b_t}{2}))u_t(x)dt \\
    &+ u_{t-}(x)(\lambda_0 e^{-a(S_t^a-x)} - 1) (dN^a_t+dN^\beta_t)\\
    &+ u_{t-}(x)(\lambda_0 e^{-a(x-S_t^b)} - 1) dN^b_t.  
    \end{split}
\end{equation}
The posterior does not remain Gaussian. In Section \ref{sec::approx}, we made the market maker take for his mid-price the mean of the Gaussian posterior, \textit{i.e.} the likelihood maximizer. Here the mean and the likelihood maximizer are not necessarily equal, but the likelihood maximizer is easier to manipulate and more natural. If it is well defined, we denote it by $\hat x_t$:
\begin{align*}
    \hat x_t = \text{argmax}\,u_t.
\end{align*}
We will see that the filtering equation \eqref{eq::zakai_meta} leads to some new interesting dynamics for $\hat x_t$, both in the case with fixed mid-price and in the case with a mid-price equal to $\hat{x}_t$. In particular it naturally gives rise to an ``arcsinh impact law''.

\subsubsection{Fixed prices}
As in Section \ref{sec::approx} we start with the simple case of fixed bid and ask prices. We have the following result on $\hat x_t$, which says that we can expect an impact in $\textnormal{arcsinh}(\beta)$ as the starting confidence (the standard deviation $\sigma_0$ of the initial prior) becomes small ($\sigma_0\rightarrow+\infty$) or the quantity of information becomes large ($t\rightarrow +\infty$).
\begin{proposition}
Assume \hyperref[assump::fix]{$\Ac^\text{fix}$}. 
Then, $\hat x_t$ is well defined for all $t\geq 0$, and, a.s., 
\begin{align*}
    \hat x_t\underset{t\rightarrow+\infty}{\longrightarrow}\begin{cases}
    S&\textnormal{ if }T<+\infty,\\
    \frac{S^a+S^b}{2}+\frac{1}{a} \textnormal{arcsinh}(\sinh(a(S-\frac{S^a+S^b}{2}))+\beta t_1)&\textnormal{ if }T=\infty.
    \end{cases}
\end{align*}
Also, for all $t\geq 0$, a.s.,
\begin{align*}
    \hat x_t \underset{\sigma_0\rightarrow+\infty}{\longrightarrow}\frac{S^a+S^b}{2}+\frac{1}{a}\textnormal{arcsinh}\big(\frac{t_1}{t}(N^a_t-N^b_t+\lfloor\beta (t\wedge T)\rfloor)\big).
\end{align*}
\end{proposition}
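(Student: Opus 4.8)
The plan is to solve the filtering equation \eqref{eq::zakai_meta} in closed form — which is possible because under \hyperref[assump::fix]{$\Ac^\text{fix}$} and \hyperref[assump::sig]{$\Ac_0$} both the mid-price $M:=\frac{S^a+S^b}{2}$ and the efficient price $S$ are constant — and then to read off $\hat x_t$ from a first-order condition. For each fixed $x$, \eqref{eq::zakai_meta} is a linear SDE in $u_t(x)$ driven by $N^a+N^\beta$ and $N^b$ with constant coefficients, so its Doléans-Dade exponential (exactly as in Proposition \ref{prop_simple}, with the extra $N^\beta$ jumps folded into the ask side) is
\begin{align*}
u_t(x)=m_0(x)\,e^{-\frac{t}{t_1}\cosh(a(x-M))}\,\big(\lambda_0 e^{-a(S^a-x)}\big)^{N^a_t+N^\beta_t}\big(\lambda_0 e^{-a(x-S^b)}\big)^{N^b_t}.
\end{align*}
With $m_0$ the density of $\Nc(x_0,\sigma_0^2)$, the $x$-dependence of the logarithm is
\begin{align*}
\log u_t(x)=-\frac{(x-x_0)^2}{2\sigma_0^2}-\frac{t}{t_1}\cosh(a(x-M))+ax\,(N^a_t-N^b_t+N^\beta_t)+\mathrm{const}.
\end{align*}
Because $-(x-x_0)^2$ and $-\cosh$ are strictly concave, $\partial^2_{xx}\log u_t<0$ everywhere, so $u_t$ has a unique maximiser $\hat x_t$ for every $t\ge 0$, characterised by
\begin{align*}
F_{t,\sigma_0}(\hat x_t):=\frac{t}{t_1}\,a\sinh(a(\hat x_t-M))+\frac{\hat x_t-x_0}{\sigma_0^2}=a\,(N^a_t-N^b_t+N^\beta_t).
\end{align*}

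For the $t\to+\infty$ limit I would use that, under \hyperref[assump::sig]{$\Ac_0$}, the intensities $\lambda(S^a-S)$ and $\lambda(S-S^b)$ are constant, so $N^a$ and $N^b$ are homogeneous Poisson processes; the strong law of large numbers gives that $N^a_t/t$ and $N^b_t/t$ converge a.s.\ to these rates, and a short computation using $t_1=\frac{e^{a\delta}}{2\lambda_0}$ (with $\delta=\frac{S^a-S^b}{2}$) yields $\frac{N^a_t-N^b_t}{t}\to\frac{1}{t_1}\sinh(a(S-M))$ a.s. Since $N^\beta_t=\lfloor\beta(t\wedge T)\rfloor$, one has $N^\beta_t/t\to\beta\,\mathbf{1}_{\{T=\infty\}}$, whence $(N^a_t-N^b_t+N^\beta_t)/t\to\ell:=\frac{1}{t_1}\sinh(a(S-M))+\beta\,\mathbf{1}_{\{T=\infty\}}$ a.s. Dividing the first-order condition by $t$ then removes the term $\frac{\hat x_t-x_0}{\sigma_0^2 t}$ \emph{provided} $\hat x_t$ stays bounded, which I establish from the strict monotonicity of $x\mapsto F_{t,\sigma_0}(x)$ by inserting the constant test points $M+\frac1a\mathrm{arcsinh}(t_1\ell\pm1)$: the superlinear $\sinh$ term makes $F_{t,\sigma_0}$ over- and under-shoot $a(N^a_t-N^b_t+N^\beta_t)$ for large $t$, trapping $\hat x_t$ in a compact set. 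Passing to the limit and inverting $\sinh$ gives $\sinh(a(\hat x_\infty-M))=t_1\ell$, i.e.\ $\hat x_t\to S$ when $T<\infty$ and $\hat x_t\to M+\frac1a\mathrm{arcsinh}(\sinh(a(S-M))+\beta t_1)$ when $T=\infty$.

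For the $\sigma_0\to+\infty$ regime at fixed $t>0$ the quadratic penalty vanishes: $F_{t,\sigma_0}\to G_t:=\frac{t}{t_1}a\sinh(a(\,\cdot\,-M))$ locally uniformly, and $G_t$ is a strictly increasing bijection of $\R$. As $\hat x_t=F_{t,\sigma_0}^{-1}\big(a(N^a_t-N^b_t+N^\beta_t)\big)$ and the right-hand side does not depend on $\sigma_0$, continuity of the inverse gives $\hat x_t\to G_t^{-1}\big(a(N^a_t-N^b_t+N^\beta_t)\big)=M+\frac1a\mathrm{arcsinh}\big(\frac{t_1}{t}(N^a_t-N^b_t+\lfloor\beta(t\wedge T)\rfloor)\big)$, as claimed.

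I expect the only genuinely delicate step to be the $t\to+\infty$ analysis, specifically ruling out $\hat x_t\to\pm\infty$ so that $\frac{\hat x_t-x_0}{\sigma_0^2 t}\to0$ may be used legitimately before dividing by $t$; this is exactly where the strict monotonicity of $F_{t,\sigma_0}$ and the superlinear growth of $x\mapsto\sinh(a(x-M))$ enter. The remaining ingredients — the Doléans-Dade solution, the strict log-concavity, and the Poisson law of large numbers — are routine.
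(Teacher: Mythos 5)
Your proposal is correct, and it reaches the result by a genuinely different technical route than the paper, although both share the same skeleton: the closed-form Dol\'eans-Dade solution of the perturbed Zakai equation (Proposition \ref{prop_simple} with the $N^\beta$ jumps folded into the ask side), strict log-concavity plus coercivity of $\log u_t$ giving a unique maximiser, and the Poisson law of large numbers identifying $\frac{N^a_t-N^b_t}{t}\to\frac{1}{t_1}\sinh\big(a(S-\frac{S^a+S^b}{2})\big)$. The difference is in how the limit of the maximisers is taken. The paper never differentiates: it normalises the exponent by $t$, observes that the resulting family of strictly convex functions is level-bounded uniformly in $t$ and converges uniformly on compact sets, hence epi-converges, and then invokes classical variational-analysis results (\cite{varanalysis}) to conclude that the argmins converge to the argmin of the limit function, which is computed as the arcsinh expression. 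You instead pass to the first-order condition $F_{t,\sigma_0}(\hat x_t)=a(N^a_t-N^b_t+N^\beta_t)$, use strict monotonicity of $F_{t,\sigma_0}$ together with the test points $\frac{S^a+S^b}{2}+\frac1a\mathrm{arcsinh}(t_1\ell\pm1)$ to trap $\hat x_t$ in a compact set (which legitimises discarding the $\frac{\hat x_t-x_0}{\sigma_0^2 t}$ term), and then take limits in the equation and invert $\sinh$. Your route is more elementary and self-contained — no external epi-convergence machinery is needed, precisely because the problem is one-dimensional and the limiting stationarity equation is explicitly solvable — and it treats the $\sigma_0\to+\infty$ regime in full, which the paper dispatches as ``without loss of generality.'' The paper's route is more robust: epi-convergence would survive a non-smooth objective or a first-order condition that cannot be inverted in closed form, and it handles both limits with one lemma. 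Two small points to make airtight in your write-up: first, under $\Ac_0$ the efficient price $S=S_0$ is a random constant drawn from $\pi^0$, so $N^a$ and $N^b$ are homogeneous Poisson only conditionally on $S_0$; the strong law should be applied conditionally, which still yields the unconditional almost-sure statement (the paper's proof carries the same implicit conditioning). Second, in the $\sigma_0\to+\infty$ step, the claim that $F_{t,\sigma_0}^{-1}\to G_t^{-1}$ pointwise deserves one line: since $F_{t,\sigma_0}=G_t+\frac{\cdot-x_0}{\sigma_0^2}$ with both terms increasing, the point $F_{t,\sigma_0}^{-1}(y)$ is pinned between $x_0$ and $G_t^{-1}(y)$, hence bounded, and the conclusion follows from continuity of $G_t^{-1}$.
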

\begin{proof}
We prove without loss of generality the convergence in $t$. Using Proposition \ref{prop_simple} we get
\begin{align*}
    \hat x_t &= \text{argsup}\{e^{-\frac{(x-x_0)^2}{2\sigma_0^2}-\frac{t}{t_1}\cosh(a(\frac{S^a+S^b}{2}-x))+ax(N^a_t-N^b_t)+ax\lfloor\beta (t\wedge T)\rfloor}\}\\
    &= \text{arginf}\{\frac{(x-x_0)^2}{2t\sigma_0^2}+\frac{1}{t_1}\cosh(a(\frac{S^a+S^b}{2}-x))-ax\frac{N^a_t-N^b_t}{t}-ax \frac{\lfloor\beta (t\wedge T)\rfloor}{t}\}.
\end{align*}
The function $x \mapsto \frac{(x-x_0)^2}{2\sigma_0^2}+\frac{t}{t_1}\cosh(a(\frac{S^a+S^b}{2}-x))$ is strictly convex for every $t$ and goes to $-\infty$ as $x\rightarrow \pm \infty$ so $\hat x_t$ is well defined and is an argmax. Almost surely the function $x\mapsto \frac{(x-x_0)^2}{2t\sigma_0^2}+\frac{1}{t_1}\cosh(a(\frac{S^a+S^b}{2}-x))-ax\frac{N^a_t-N^b_t}{t}-ax \frac{\lfloor\beta (t\wedge T)\rfloor}{t}$ is level-bounded in $x$ uniformly in $t$ and converges uniformly in $x$ on every compact set of $\R$ so epiconverges towards the function $x\mapsto \frac{1}{t_1}\cosh(a(\frac{S^a+S^b}{2}-x))-\frac{a}{t_1}\sinh(a(S-\frac{S^a+S^b}{2}))x-ax\beta\mathbf{1}_{T=+\infty}$ as $t\rightarrow +\infty$. As a consequence of classical results \cite{varanalysis}, the argmin of the first function converges to the argmin of the second function, which is 
\begin{align*}
    \frac{S^a+S^b}{2}+\frac{1}{a}\text{arcsinh}(\sinh(a(S-\frac{S^a+S^b}{2})))=S
\end{align*}
for $T<+\infty$ and 
\begin{align*}
    \hat x_t\rightarrow \frac{S^a+S^b}{2}+\frac{1}{a}\text{arcsinh}(\sinh(a(S-\frac{S^a+S^b}{2}))+\beta t_1)
\end{align*}
otherwise.
\end{proof}

The boundedness of the impact can be explained by the fact that the market marker's quotes remain constant, so at some point there is a balance between the increase in the estimation of $S$ due to the meta-order and the decrease towards the mid-price set by the market maker between the trades.

\subsubsection{With a market maker}
We now look for a solution in the case where the market maker sets his mid-price as $\hat x_t$. We have $\hat x_0=x_0$ and, for any $t>0$, $\hat x_t$ should solve
\begin{align}\label{argmax}
    \hat x_t = \textnormal{argmax}\big\{-\frac{(x-x_0)^2}{2t\sigma_0^2}-\frac{1}{t_1}\frac{1}{t}\int_0^t\cosh(a(\hat x_s-x))ds+ax\frac{N^a_t-N^b_t}{t}+ax\frac{\lfloor\beta (t\wedge T)\rfloor}{t}\big\}.
\end{align}
We will obtain explicit formulas for the impact in two limiting regimes. If $\beta t_1$ is large (\textit{i.e.} there are few opportunistic trades compared to meta-orders), we obtain an impact which is logarithmic in the traded volume. In the case of a very slow meta-order and under some approximations we find constant impact. In the intermediate regime the impact is closer to an arcsinh of the traded volume, at least for small volumes.\\

We start by showing that $\hat x_t$ is necessarily constant between two jumps.
\begin{proposition}
There is a unique c\`adl\`ag finite variation process $\hat x_t$ solution to \eqref{argmax} with $\hat x_0=x_0$. It is given by a pure jump process with jumps happening only when $dN^a_t\neq 0$ or $dN^b_t\neq 0$ or $\{t|\beta|\in\N^*,t\leq T\}$ and the jump magnitudes $\Delta \hat x_t=\hat x_t-\hat x_{t-}$ solve
\begin{align*}
    0 = &-\frac{\hat x_{t-}+\Delta\hat x_t-x_0}{\sigma_0^2}-\frac{a}{2t_1}\big(e^{a(\hat x_{t-}+\Delta\hat x_t)}\int_0^te^{-a\hat x_s}ds-e^{-a(\hat x_{t-}+\Delta\hat x_t)}\int_0^te^{a\hat x_s}ds\big)\\
    &+a\int_0^t(dN^a_s-dN^b_s+dN^\beta_t)
\end{align*}
if $dN^a_t\neq 0$ or $dN^b_t\neq 0$ or $\{t\beta\in\N^*,t\leq T\}$.
\end{proposition}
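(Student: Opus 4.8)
The plan is to reduce \eqref{argmax} to its first-order condition and exploit strict concavity. Since multiplying the maximand in \eqref{argmax} by $t>0$ does not change the argmax, for $t>0$ the maximizer $\hat x_t$ maximizes
\[
F_t(x) = -\frac{(x-x_0)^2}{2\sigma_0^2} - \frac{1}{t_1}\int_0^t \cosh(a(\hat x_s - x))\,ds + ax\,(N^a_t - N^b_t + \lfloor\beta(t\wedge T)\rfloor).
\]
Its second derivative equals $-\frac{1}{\sigma_0^2} - \frac{a^2}{t_1}\int_0^t\cosh(a(\hat x_s - x))\,ds \le -\frac{1}{\sigma_0^2} - \frac{a^2 t}{t_1} < 0$, so $F_t$ is strictly concave and coercive and has a unique maximizer. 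The first crucial observation is that \eqref{argmax} is in fact \emph{causal}: because the diagonal $s=t$ is Lebesgue-negligible, the integral does not see the value $\hat x_t$, so the maximizer at time $t$ is a function of the past path $(\hat x_s)_{s<t}$ only, which turns the apparent fixed-point problem into a forward construction. Writing $I^\pm_t = \int_0^t e^{\pm a\hat x_s}\,ds$, the condition $F_t'(\hat x_t)=0$ reads
\[
-\frac{\hat x_t - x_0}{\sigma_0^2} + \frac{a}{2t_1}\bigl(e^{-a\hat x_t}I^+_t - e^{a\hat x_t}I^-_t\bigr) + a\,(N^a_t - N^b_t + \lfloor\beta(t\wedge T)\rfloor) = 0,
\]
which, upon setting $\hat x_t = \hat x_{t-}+\Delta\hat x_t$ and using $I^\pm_t = I^\pm_{t-}$ (the integrals being continuous), is exactly the displayed jump equation.

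Next I would prove that $\hat x$ is constant on any interval $(\tau_k,\tau_{k+1})$ between two consecutive jump times of $N^a$, $N^b$ or $N^\beta$. On such an interval the integer term $M := N^a_t - N^b_t + \lfloor\beta(t\wedge T)\rfloor$ is frozen, and $(I^+,I^-)$ solves the autonomous system $\dot I^\pm_t = e^{\pm a X(I^+_t,I^-_t)}$, where $X(p,q)$ is the unique solution of the first-order condition for the frozen $M$. Since the $x$-derivative of that condition equals $-\frac{1}{\sigma_0^2} - \frac{a^2}{2t_1}(e^{-ax}p + e^{ax}q)$, bounded away from zero, the implicit function theorem makes $X$ a $C^1$, hence locally Lipschitz, function of $(p,q)$, and the Cauchy--Lipschitz theorem gives local existence and uniqueness for $(I^+,I^-)$. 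The decisive computation is that the constant candidate $\hat x_t\equiv c$, with $c$ the maximizer determined by $(I^+_{\tau_k},I^-_{\tau_k})$, \emph{is} a solution: substituting $I^\pm_t = I^\pm_{\tau_k}+(t-\tau_k)e^{\pm a c}$ into the first-order condition, the two $(t-\tau_k)$ contributions cancel and the equation collapses to the one defining $c$. By uniqueness of the ODE this constant is the only solution, so $\hat x$ does not move between jumps.

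I would then assemble the global solution by induction on the jump times. Starting from $\hat x_0 = x_0$, forced by the blow-up of $-\frac{(x-x_0)^2}{2t\sigma_0^2}$ as $t\to 0^+$, if $\hat x$ is already determined on $[0,\tau_k]$ then $I^\pm_{\tau_k}$ are known, the constancy argument extends $\hat x$ uniquely and continuously over $[\tau_k,\tau_{k+1})$, and at $\tau_{k+1}$ the updated value of $M$ together with $I^\pm_{\tau_{k+1}} = I^\pm_{\tau_{k+1}-}$ pins down $\hat x_{\tau_{k+1}}$ uniquely through the strictly concave first-order condition; this is precisely the stated jump equation. Because $N^a$, $N^b$ and $N^\beta$ almost surely have finitely many jumps on every compact interval, $\hat x$ is a pure jump process whose jump set is contained in theirs, hence càdlàg and of finite variation, and the construction yields existence and uniqueness simultaneously.

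The main obstacle is the self-referential structure of \eqref{argmax}, the maximizer at time $t$ being defined through an integral of the unknown path itself. The two ingredients that defuse it are the causality remark, which makes the problem genuinely forward in time, and the Lipschitz reformulation as an ODE for $(I^+,I^-)$, which upgrades the pointwise uniqueness of the maximizer into uniqueness of the whole trajectory and turns the cancellation of the $(t-\tau_k)$ terms, the algebraic source of the constancy-between-jumps phenomenon, into a rigorous statement rather than a formal one.
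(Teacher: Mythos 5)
Your proposal is correct, and it reaches the stated characterization by a genuinely different route than the paper. Both arguments start the same way, by writing the strictly concave first-order condition for \eqref{argmax}, which is the paper's equation \eqref{eq::genhatx} and which, evaluated at $\hat x_t=\hat x_{t-}+\Delta\hat x_t$ with the continuous integrals $\int_0^te^{\pm a\hat x_s}ds$ unchanged across the jump, is exactly the displayed jump equation. The divergence is in what comes next. The paper treats \eqref{eq::genhatx} as an identity between finite-variation processes and differentiates it in time, splitting the differential into continuous and jump parts; since the coefficient multiplying the continuous differential $d\hat x_t$, namely $-\frac{1}{\sigma_0^2}-\frac{a^2}{2t_1}\big(e^{a\hat x_{t-}}\int_0^te^{-a\hat x_s}ds+e^{-a\hat x_{t-}}\int_0^te^{a\hat x_s}ds\big)$, is strictly negative, the continuous part must vanish, so $\hat x$ is a pure jump process jumping only when $N^a$, $N^b$ or $N^\beta$ jump. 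You instead exploit the causality of the problem and the fact that (thanks to the exponential form of $\lambda$) the first-order condition depends on the past path only through the two scalars $I^\pm_t=\int_0^te^{\pm a\hat x_s}ds$, recast the dynamics as an autonomous ODE for $(I^+,I^-)$ with a locally Lipschitz field via the implicit function theorem, verify by the cancellation of the $(t-\tau_k)$ terms that the constant path solves it, and invoke Cauchy--Lipschitz plus induction over jump times. Each approach buys something: the paper's computation is three lines and delivers the jump equation immediately, but it is a pure necessity argument --- it presupposes a finite-variation solution and never verifies that the pure jump process so characterized actually solves \eqref{argmax}; your forward construction is longer but settles existence and uniqueness simultaneously (strict concavity makes the first-order condition sufficient, so the process you build really is the argmax), replaces the formal manipulation of differentials by a standard ODE uniqueness theorem, and shows in passing that any measurable solution is automatically continuous between jump times.
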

\begin{proof}
Given \eqref{argmax} $\hat x_t$ solves
\begin{align}\label{eq::genhatx}
    0 = -\frac{\hat x_t-x_0}{\sigma_0^2}-\frac{a}{2t_1}\big(e^{a\hat x_t}\int_0^te^{-a\hat x_s}ds-e^{-a\hat x_t}\int_0^te^{a\hat x_s}ds\big)+a\int_0^t(dN^a_s-dN^b_s+dN^\beta_t).
\end{align}
Differentiating we get, by denoting by $d\hat x_t$ and $\Delta\hat x_t$ the continuous and jump differentials,
\begin{align*}
    0= &-\frac{d\hat x_t+\Delta \hat x_t}{\sigma_0^2}-\frac{a^2}{2t_1}\big(e^{a\hat x_{t-}}\int_0^te^{-a\hat x_s}ds+e^{-a\hat x_{t-}}\int_0^te^{a\hat x_s}ds\big)d\hat x_t\\
    &-\frac{a}{2t_1}\big(e^{a(\hat x_{t-}+\Delta\hat x_t)}\int_0^te^{-a\hat x_s}ds-e^{-a(\hat x_{t-}+\Delta\hat x_t)}\int_0^te^{a\hat x_s}ds\big)\\
    &+\frac{a}{2t_1}\big(e^{a\hat x_{t-}}\int_0^te^{-a\hat x_s}ds-e^{-a\hat x_{t-}}\int_0^te^{a\hat x_s}ds\big)\\
    &+a(dN^a_t-dN^b_t+dN^\beta_t).
\end{align*}
So identifying the jump and continuous parts we get the desired result.
\end{proof}
This extends Proposition \ref{prop::stabil} as it shows that the market maker has no incentive to change his mid-price if there is no trade and if the mid-price maximizes the density function of his prior. Also, it gives an explicit recursive equation to compute the size of the jumps in the mid-price. Surprisingly this equation is one-dimensional: we do not need to make computations on density functions to solve it. 

\paragraph{Fast meta-order} In the limit where the learning is fast and the initial prior has large variance we get the following corollary when we look at the event $\{N^a_t=N^b_t\text{ for all }t<T\}$.

\begin{corollary}
Assume $\{N^a_t=N^b_t\text{ for all }t<T\}$. There is a unique c\`adl\`ag finite variation process $\hat x^0_t$ solution to \eqref{argmax} with $\hat x^0_0=x_0$. On $[0,T]$ it is a deterministic pure jump-process with jumps happening only if $t|\beta|\in\N^*$ and $0\leq t\leq T$, and the jump magnitudes $\Delta \hat x^0_t=\hat x^0_t-\hat x^0_{t-}$ solve
\begin{align*}
    0 = &-\frac{\hat x^0_{t-}+\Delta\hat x^0_t-x_0}{\sigma_0^2}-\frac{a}{2t_1}\big(e^{a(\hat x^0_{t-}+\Delta\hat x^0_t)}\int_0^te^{-a\hat x^0_s}ds-e^{-a(\hat x^0_{t-}+\Delta\hat x^0_t)}\int_0^te^{a\hat x^0_s}ds\big)+a\lfloor\beta t\rfloor.
\end{align*}
Also, we have 
\begin{align*}
    \hat x^0_T-x_0 \sim \frac{1}{a}\log(2\lfloor\beta T\rfloor\beta t_1)
\end{align*}
as $\frac{a^2\sigma_0^2}{\beta t_1}\rightarrow+\infty$ and $\beta t_1\rightarrow+\infty$ with $a$ and $\beta$ fixed.
\end{corollary}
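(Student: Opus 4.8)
The plan is to reduce the fixed-point characterisation \eqref{eq::genhatx} to a scalar recursion indexed by the meta-order jumps and then read off its large-$\beta t_1$ behaviour. Existence, uniqueness and the pure-jump structure of $\hat x^0$ follow immediately from the preceding proposition once we restrict to the event $\{N^a_t=N^b_t\text{ for all }t<T\}$: there the only surviving jumps of the driving terms occur at the deterministic times $t=k/\beta$, $k=1,\dots,\lfloor\beta T\rfloor$, so $\hat x^0$ is deterministic and piecewise constant, and the jump equation in the statement is exactly \eqref{eq::genhatx} with $\int_0^t(dN^a_s-dN^b_s)=0$ and $N^\beta_t=\lfloor\beta t\rfloor$. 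I write $y_k=\hat x^0_{k/\beta}$ for the value just after the $k$-th jump (so $y_0=x_0$); since the $\sinh$-differences are shift invariant, after a harmless shift I may assume $x_0=0$.

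Next I would evaluate \eqref{eq::genhatx} at the jump times. As $\hat x^0_s=y_k$ on $[k/\beta,(k+1)/\beta)$, the integrals become Riemann sums $\int_0^{n/\beta}e^{\mp a\hat x^0_s}\,ds=\frac1\beta\sum_{k=0}^{n-1}e^{\mp a y_k}$, and taking $t=n/\beta$ gives
\begin{align*}
    \frac{t_1\beta}{a\sigma_0^2}\,y_n+\sum_{k=0}^{n-1}\sinh\!\big(a(y_n-y_k)\big)=\beta t_1\,n .
\end{align*}
The left-hand side is strictly increasing in $y_n$ and sweeps all of $\R$, re-confirming that the recursion, hence $\hat x^0$, is well defined. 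Setting $z_k=e^{a y_k}$ (so $z_0=1$), $P_n=\sum_{k<n}z_k^{-1}$ and $Q_n=\sum_{k<n}z_k$, the $\sinh$ sum linearises into the quadratic relation
\begin{align*}
    z_nP_n-\frac{Q_n}{z_n}=2\beta t_1\,n-\frac{2t_1\beta}{a^2\sigma_0^2}\log z_n .
\end{align*}

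Because $a$, $\beta$ and $T$ are fixed, $N:=\lfloor\beta T\rfloor$ is a fixed finite integer, so I would conclude by a finite induction on $n=1,\dots,N$ showing $z_n\sim 2\beta t_1\,n$ as $\beta t_1\to\infty$. Assuming $z_k\sim 2\beta t_1\,k$ for $k<n$ gives $P_n=1+\sum_{1\le k<n}z_k^{-1}\to1$ and $Q_n/z_n^2\to0$, so both subdominant contributions, namely $Q_n/z_n$ and the prior correction, are negligible against $2\beta t_1\,n$; the dominant balance then forces $z_n\sim 2\beta t_1\,n$. Evaluating at $n=N$ and undoing the change of variables yields $\hat x^0_T-x_0=y_N=\frac1a\log z_N\sim\frac1a\log(2\lfloor\beta T\rfloor\beta t_1)$, since $T$ lies in $[N/\beta,(N+1)/\beta)$ so no jump occurs between $N/\beta$ and $T$.

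The main obstacle is the bookkeeping that justifies discarding the prior term. The ratio of $\frac{2t_1\beta}{a^2\sigma_0^2}\log z_n$ to the right-hand side $2\beta t_1\,n$ equals $\frac{\log z_n}{a^2\sigma_0^2\,n}$, in which the factor $\beta t_1$ cancels; with $\log z_n\sim\log(2\beta t_1\,n)$ and $a^2\sigma_0^2$ dominating $\beta t_1$ (and a fortiori $\log(\beta t_1)$), it is precisely the hypothesis $\frac{a^2\sigma_0^2}{\beta t_1}\to\infty$ that makes this ratio vanish. Once this is in place, each inductive step is a routine asymptotic solution of the quadratic, and the finiteness of $N$ removes any need for uniform-in-$n$ control of the error terms.
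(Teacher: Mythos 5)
Your proposal is correct and shares the paper's skeleton: restrict to $\{N^a_t=N^b_t\}$, evaluate \eqref{eq::genhatx} at the jump times $t=k/\beta$ to obtain the scalar recursion \eqref{eq::rec_impact}, then extract the asymptotics by dominant balance under $\beta t_1\to+\infty$ and $a^2\sigma_0^2/\beta t_1\to+\infty$. The genuine difference is in how that balance is run. The paper keeps the $\sinh$ form, treats only $k=1$ and $k=2$ explicitly, and controls the cross term $\sinh(a(\hat x^0_{2/\beta}-\hat x^0_{1/\beta}))$ by contradiction (if it were unbounded it would still be dominated by the leading term, since their ratio is $e^{-a(\hat x^0_{1/\beta}-x_0)}\to0$, forcing the $\frac{1}{a}\log(4\beta t_1)$ asymptotics and contradicting the unboundedness), and then asserts that general $k$ is analogous. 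Your substitution $z_k=e^{ay_k}$ with the aggregate sums $P_n$ and $Q_n$ packages all cross terms simultaneously and turns the claim into a genuine finite induction over $n\le N$, which is cleaner and in fact more complete than the paper's two-step presentation. One step needs tightening, though: ``$Q_n/z_n^2\to0$'' (and the use of $\log z_n\sim\log(2\beta t_1 n)$ when discarding the prior term) does not follow from the induction hypothesis alone, since both statements involve $z_n$ itself. You need two short a priori facts extracted from the fixed-point relation: first, $z_n\to\infty$ (otherwise, along a subsequence, the left side $z_nP_n-Q_n/z_n+\frac{2\beta t_1}{a^2\sigma_0^2}\log z_n$ would stay bounded above while the right side $2\beta t_1 n$ diverges); second, $z_n=O(\beta t_1 n)$ (drop the nonnegative prior term, use $P_n\ge 1$, and compare $z_n$ with $\sqrt{Q_n}$ to absorb $Q_n/z_n$). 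With these two lines inserted before the dominant balance, your induction closes; this is exactly the difficulty the paper's contradiction argument addresses, surfacing in a different place in your bookkeeping.
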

\begin{proof}
As we work on $\{N^a_t=N^b_t\text{ for all }t<T\}$ we can build $\hat x_t^0$ recursively. For $k\in\{0,...,\lfloor T\beta\rfloor\}$,
\begin{align*}
    ka = \frac{\hat x^0_{\frac{k}{\beta}}-x_0}{\sigma_0^2}+\frac{a}{t_1}\int_0^{\frac{k}{\beta}}\sinh(a(\hat x^0_{\frac{k}{\beta}}-\hat x^0_s))ds
\end{align*}
so
\begin{align}\label{eq::rec_impact}
    k = \frac{\hat x^0_{\frac{k}{\beta}}-x_0}{a\sigma_0^2}+\frac{1}{\beta t_1}\sum_{i=0}^{k-1}\sinh(a(\hat x^0_{\frac{k}{\beta}}-\hat x^0_{\frac{i}{\beta}})).
\end{align}
We get a recursive sequence with a fixed number of steps. We show the convergence for the first two steps ($k=1$ and $k=2$). The same argument can be used for any $k\in\{0,...,\lfloor T\beta\rfloor\}$.
First
\begin{align*}
    1 = \frac{\hat x^0_{\frac{1}{\beta}}-x_0}{a\sigma_0^2}+\frac{1}{\beta t_1}\sinh(a(\hat x^0_{\frac{1}{\beta}}-x_0))
\end{align*}
and as $\frac{a^2\sigma_0^2}{\beta t_1}\rightarrow+\infty$
\begin{align*}
    1 = \frac{1}{\beta t_1}\sinh(a(\hat x^0_{\frac{1}{\beta}}-x_0))+o(1),
\end{align*}
and using $\beta t_1\rightarrow+\infty$ we get
\begin{align*}
    1\sim\frac{1}{2\beta t_1}e^{(a(\hat x^0_{\frac{1}{\beta}}-x_0))}
\end{align*}
hence 
\begin{align*}
    \hat x^0_{\frac{1}{\beta}}-x_0= \frac{1}{a}\log(2\beta t_1)+o(1).
\end{align*}
Then 
\begin{align*}
    2 = \frac{\hat x^0_{\frac{2}{\beta}}-x_0}{a\sigma_0^2}+\frac{1}{\beta t_1}\sinh(a(\hat x^0_{\frac{2}{\beta}}-\hat x^0_{\frac{1}{\beta}}))+\frac{1}{\beta t_1}\sinh(a(\hat x^0_{\frac{2}{\beta}}-x_0)).
\end{align*}
By the same arguments
\begin{align*}
    2=\frac{1}{2\beta t_1}e^{a(\hat x^0_{\frac{2}{\beta}}-x_0)}+\frac{1}{\beta t_1}\sinh(a(\hat x^0_{\frac{2}{\beta}}-\hat x^0_{\frac{1}{\beta}}))+o(1)
\end{align*}
as $\beta t_1 \rightarrow+\infty$. If $a(\hat x^0_{\frac{2}{\beta}}-\hat x^0_{\frac{1}{\beta}})$ is not bounded as $\beta t_1 \rightarrow+\infty$, then along some sequence such that $(\beta t_1)_k\underset{k\rightarrow +\infty}{\rightarrow}+\infty$ and $a(\hat x^0_{\frac{2}{\beta}}-\hat x^0_{\frac{1}{\beta}})\underset{k\rightarrow +\infty}{\rightarrow}+\infty$ we have
\begin{align*}
    2=\frac{1}{2(\beta t_1)_k}e^{a(\hat x^0_{\frac{2}{\beta}}-x_0)}+\frac{1}{2(\beta t_1)_k}e^{a(\hat x^0_{\frac{2}{\beta}}-\hat x^0_{\frac{1}{\beta}})}+o(1),
\end{align*}
but 
\begin{align*}
    \frac{e^{a(\hat x^0_{\frac{2}{\beta}}-\hat x^0_{\frac{1}{\beta}})}}{e^{a(\hat x^0_{\frac{2}{\beta}}-x_0)}}=e^{-a(\hat x^0_{\frac{1}{\beta}}-x_0)}\underset{k\rightarrow+\infty}{\rightarrow} 0,
\end{align*}
so the second term is negligible and 
\begin{align*}
    \hat x^0_{\frac{2}{\beta}}-x_0= \frac{1}{a}\log(4\beta t_1)+o(1),
\end{align*}
which contradicts that $a(\hat x^0_{\frac{2}{\beta}}-\hat x^0_{\frac{1}{\beta}})\underset{k\rightarrow +\infty}{\rightarrow}+\infty$. So $a(\hat x^0_{\frac{2}{\beta}}-\hat x^0_{\frac{1}{\beta}})$ is bounded as $\beta t_1 \rightarrow+\infty$. As a consequence
\begin{align*}
    2=\frac{1}{2\beta t_1}e^{a(\hat x^0_{\frac{2}{\beta}}-x_0)}+o(1),
\end{align*}
which again yields 
\begin{align*}
    \hat x^0_{\frac{2}{\beta}}-x_0\sim \frac{1}{a}\log(4\beta t_1).
\end{align*}
\end{proof}

\comment{
In the limit case without prior (or the case where the original prior has been forgotten) we find a similar result when we replace $\lfloor\beta (t\wedge T)\rfloor$ by $\beta (t\wedge T)$. We look for a process $y$ defined for $t>0$, which is solution to
\begin{align*}
    y_t = \text{argmax}\{-\frac{1}{t_1}\frac{1}{t}\int_0^t\cosh(a( y_s-x))ds+ax\frac{N^a_t-N^b_t}{t}+ax\frac{\beta (t\wedge T)}{t}\}.
\end{align*}
In particular $\int_0^t\cosh(a( y_s-x))ds$ must be real for any $t,x$, so $y$ must verify $e^{ay}$ and $e^{-ay}$ must be integrable on all interval $[0,M]$ for $M>0$.

\begin{proposition}
On the event $\{N^a_t=N^b_t\text{ for all }t<T\}$, the solutions $y$ to the above equation which verify that $e^{ay}$ and $e^{-ay}$ are integrable on all interval $[0,M]$ for $M>0$, are given by
\begin{align*}
    y_t = \frac{1-\frac{1}{\frac{1}{2}+\beta t_1+\frac{1}{2}\sqrt{1+4(\beta t_1)^2}}}{a}\log(\frac{t}{t_0})\underset{\beta\rightarrow 0}{\rightarrow}\frac{\beta t_1}{a}\log(\frac{t}{t_0})
\end{align*}
for some $t_0>0$, and all $0<t\leq T$.
\end{proposition}
\begin{proof}
Differentiating the above functional for some $0<t$, we find that
\begin{align*}
    -\frac{1}{2t_1}\int_0^t[e^{a(y_t-y_s)}-e^{-a(y_t-y_s)}]ds+N^a_t-N^b_t+\beta (t\wedge T)=0.
\end{align*}
Taking $t<T$ and $N^a_t-N^b_t=0$, we see that $e^{ay}$ is $C^1$. Also, solving in $e^{ay_t}$ and $e^{-ay_t}$, we get
\begin{align}\label{sys_0}
    e^{ay_t} = \frac{\beta t+\sqrt{(\beta t)^2+\frac{1}{t_1^2}\int_0^te^{ay_s}ds\int_0^te^{-ay_s}ds}}{\frac{1}{t_1}\int_0^te^{-ay_s}ds},
    e^{-ay_t} = \frac{-\beta t+\sqrt{(\beta t)^2+\frac{1}{t_1^2}\int_0^te^{ay_s}ds\int_0^te^{-ay_s}ds}}{\frac{1}{t_1}\int_0^te^{ay_s}ds}.
\end{align}
Then, let $f(t)=\int_0^te^{ay_s}ds\int_0^te^{-ay_s}ds$ be defined on $\R_+^*$. We see that $f$ solves
\begin{align*}
    \frac{1}{t_1}f'(t) = 2\sqrt{(\beta t)^2+\frac{1}{t_1^2}f(t)}
\end{align*}
on $\R_+^*$. Moreover $f$ can be extended continuously at $0$ with $f(0)=0$ and $f'(0)=0$ given the integrability condition. The solution to this ordinary differential equation is unique and given by $f(t)=((\frac{1}{2}+\frac{1}{2}\sqrt{1+4(\beta t_1)^2})^2-(\beta t_1)^2)t^2$ on $\R_+$. Plugging this in \ref{sys_0} and integrating we get the wanted result.
\end{proof}}

This means that if a meta-order takes place and the market maker adapts his quotes accordingly without benefiting from any additional information given by another trader, the mid-price increases logarithmically in time (or in the traded volume). This is a bit different from Corollary \ref{cor::imp} with $\sigma=0$ as now the trading speed $\beta$ appears only in the total traded volume in the logarithm.\\

\paragraph{Slow meta-order} We now consider the other limit regime where $\beta t_1<<1$ and $T=+\infty$ \textit{i.e.} the meta-order is very slow. We make some simplifications. In \eqref{eq::genhatx} we formally take $\sigma_0\rightarrow+\infty$ and we replace the jump processes by their compensators and $N^\beta_t$ by $\beta t$. Setting $te^{aS_0}u(t)=\int_0^t e^{a\hat x_s}ds$ and $te^{-aS_0}v(t)=\int_0^t e^{-a\hat x_s}ds$ with $u$ and $v$ two differentiable functions on $\R_+$ and solving in $e^{a\hat x_t}=e^{aS_0}u(t)+te^{aS_0}u'(t)$ and in $e^{-a\hat x_t}=e^{-aS_0}v(t)+te^{-aS_0}v'(t)$ leads us to the system of equations
\begin{align}
    \begin{split}\label{sys1}
    \frac{1}{t_1}(uv+tu'v)&=(\beta+\frac{1}{2t_1}(v-u))+\sqrt{(\beta+\frac{1}{2t_1}(v-u))^2+\frac{uv}{t_1^2}},\\
    \frac{1}{t_1}(uv+tuv')&=-(\beta+\frac{1}{2t_1}(v-u))+\sqrt{(\beta+\frac{1}{2t_1}(v-u))^2+\frac{uv}{t_1^2}}
\end{split}
\end{align}

on $\R_+^*$. Taking $\beta t_1<<1$ leads us to consider the approximate problem
\begin{align}
    \begin{split}\label{sys2}
    \frac{1}{t_1}(uv+tu'v)&=(\beta+\frac{1}{2t_1}(v-u))+\frac{1}{2t_1}(v+u)+\beta\frac{v-u}{v+u},\\
    \frac{1}{t_1}(uv+tuv')&=-(\beta+\frac{1}{2t_1}(v-u))+\frac{1}{2t_1}(v+u)+\beta\frac{v-u}{v+u}
\end{split}
\end{align}
on $\R_+^*$. 
Solving this system leads to the following property.

\begin{proposition}\label{prop::slow}
The system \eqref{sys2} has a unique solution given by
\begin{align*}
    u(t) = 1+\beta t_1, v(t)=1-\beta t_1.
\end{align*}
\end{proposition}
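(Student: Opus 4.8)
The plan is to reduce the coupled nonlinear system \eqref{sys2} to two uncoupled linear first-order ODEs by passing to sum and difference variables, and then to pin down the resulting integration constants using the regularity of $u$ and $v$ at the origin. First I would clean up the right-hand sides: combining the two $\frac{1}{2t_1}$ terms in each line collapses them to a single term ($\frac{v}{t_1}$ in the first line, $\frac{u}{t_1}$ in the second), and rewriting $1+\frac{v-u}{v+u}=\frac{2v}{u+v}$ and $1-\frac{v-u}{v+u}=\frac{2u}{u+v}$ lets me factor out $v$ from the first equation and $u$ from the second. Since $u(t)=\frac{1}{t}\int_0^t e^{a(\hat x_s-S_0)}\,ds$ and $v(t)=\frac{1}{t}\int_0^t e^{-a(\hat x_s-S_0)}\,ds$ are time-averages of strictly positive integrands, both are strictly positive and I may divide, obtaining the equivalent system
$$tu' = (1-u) + \frac{2\beta t_1}{u+v}, \qquad tv' = (1-v) - \frac{2\beta t_1}{u+v}.$$

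Next I would set $s=u+v$ and $d=u-v$. Adding the two equations annihilates the singular $\frac{2\beta t_1}{u+v}$ term and yields the closed linear equation $ts'=2-s$, i.e.\ $(ts)'=2$, with general solution $s=2+C/t$. Subtracting gives $td'=-d+\frac{4\beta t_1}{s}$, which decouples once $s$ is known. The crucial input is that $u$ and $v$ are differentiable, hence finite, at $t=0$ (they are the normalized integrals above), so $s$ and $d$ remain bounded as $t\to 0^+$; this forces $C=0$ and thus $s\equiv 2$. Substituting $s=2$ into the difference equation gives $(td)'=2\beta t_1$, hence $d=2\beta t_1+E/t$, and the same boundedness argument forces $E=0$, so $d\equiv 2\beta t_1$. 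Recovering $u=\frac{s+d}{2}=1+\beta t_1$ and $v=\frac{s-d}{2}=1-\beta t_1$ finishes the argument, and a direct substitution confirms these constants indeed solve \eqref{sys2}.

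The only genuinely delicate point is the boundary condition at $t=0$: the reduced system is singular there (the coefficient of the derivative vanishes while the forcing can blow up), so without a constraint at the origin one obtains the one-parameter families $s=2+C/t$ and $d=2\beta t_1+E/t$. The elimination of $C$ and $E$ rests entirely on the fact that $u$ and $v$ arise as time-averages and therefore extend continuously to $t=0$ with finite values, and I would state this regularity explicitly as the selection principle that singles out the constant solution. Everything else is routine integration of linear equations, together with the observation (used to divide) that $u,v>0$.
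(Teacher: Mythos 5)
Your proof is correct, and its core step coincides with the paper's: both add the two equations so that the nonlinear term $\frac{2\beta t_1}{u+v}$ cancels, yielding $(t(u+v))'=2$ and hence $u+v=2+C/t$. Where you differ is in how the constants are eliminated. The paper keeps its constant $K$ general, substitutes $u+v=2+K/t$ back into each equation and integrates, which produces logarithmic antiderivatives $\mp\beta t_1\frac{K}{2}\log(t+\frac{K}{2})$ and three constants $K,R,R'$, and then asserts that consistency with $v=-u+2+\frac{K}{t}$ forces $K=R=R'=0$. You instead pass to $s=u+v$, $d=u-v$, use regularity at the origin to kill $C$ first (so $s\equiv 2$), after which the difference equation $(td)'=\frac{4\beta t_1}{s}=2\beta t_1$ integrates with no logarithms at all. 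Your route is cleaner, and, more importantly, your explicit invocation of the boundary behaviour at $t=0$ is not cosmetic but necessary: the consistency check alone only yields $R+R'=K$, and indeed $u=1+\beta t_1+\frac{R}{t}$, $v=1-\beta t_1-\frac{R}{t}$ solves \eqref{sys2} on $\R_+^*$ for every $R$, so uniqueness genuinely fails if one works only on $\R_+^*$. What singles out the constant solution is exactly what you state: $u$ and $v$ are time-averages of positive quantities (from $te^{aS_0}u(t)=\int_0^te^{a\hat x_s}ds$ and its counterpart), hence extend with finite values as $t\to 0^+$. The paper relies on this implicitly through its standing assumption that $u,v$ are differentiable on $\R_+$; making it the explicit selection principle, as you do, is the more complete way to present the argument.
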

\begin{proof}
See Appendix \ref{app::proof_slow}.
\end{proof}

In particular, going back to $\hat x$, we get $\hat x_t=S+\frac{\log(1+\beta t_1)}{a}$ from $u$ and $\hat x_t=S-\frac{\log(1-\beta t_1)}{a}$ from $v$, which is reasonable and tells us that the impact is finite and equal to $\frac{\beta t_1}{a}$.\\

\paragraph{Intermediate regime} The recursive formula \eqref{eq::rec_impact} can also be used to compute recursively the theoretical market impact in the case where $\beta t_1$ is fixed and $\sigma_0\rightarrow+\infty$, but the formulas are very hard to use. For example, if $N^a_t=N^b_t$ for all $t\leq T$:
\begin{align*}
    &\hat x_{\frac{1}{\beta}}-x_0\underset{\sigma_0\rightarrow+\infty}{\sim} \frac{1}{a}\textnormal{arcsinh}(\beta t_1)\\
    &\hat x_{\frac{2}{\beta}}-x_0\underset{\sigma_0\rightarrow+\infty}{\sim} \frac{1}{a}\textnormal{arcsinh}\big(\beta t_1(1+\sqrt{1-\frac{3}{2}\frac{1}{1+\sqrt{1+(\beta t_1)^2}}})\big).
\end{align*}
The impact of $Q$ orders at the speed $\beta$ per second takes the shape $\frac{1}{a}\textnormal{arcsinh}(f(Q)\beta t_1)$ for some sublinear function $f$, which is close to linear for small $Q$ and large $\beta t_1$.\\

\paragraph{Summary of the results on market impact in this section}In this section we have shown how various market impact shapes arise from meta-order splitting depending on the information structure, the market parameters and the nature of the meta-orders. If the spread is small (\textit{i.e.} in the approximation with $a$ small and $\sigma_0$ small) and the mid-price is fixed as the average of the posterior, we derive two types of impact depending on who drives the market. If the market is driven by the market maker ($N^a-N^b=0$), the market impact is linear if $\sigma>0$ while it is logarithmic in time and linear in the speed $\beta$ in the extreme case $\sigma=0$. If opportunistic market takers are present, the average market impact grows as $\frac{t\wedge T}{\frac{t_1}{\sigma_0^2a^2}+t}\frac{\beta t_1}{a}$ if $\sigma>0$ and as $\frac{\beta t_1}{a}e^{-\sigma_\infty^2\frac{a^2}{t_1}t}(e^{\sigma_\infty^2\frac{a^2}{t_1}(t\wedge T)}-1)$ if $\sigma=0$.\\

We also compute market impact shapes without using the approximation if the mid-price is fixed as the average of the posterior and $\sigma=0$. If $N^a-N^b=0$ the impact is logarithmic in the volume $\beta t$ if $\sigma_0^2/t_1\rightarrow+\infty$ and $t_1\rightarrow+\infty$. If we only take $\sigma_0\rightarrow+\infty$ we obtain a recursive formula in which the key quantity to compute is the $sinh$ of the impact.

\bibliographystyle{apalike}
\bibliography{biblio.bib}

\appendix
\section{Proof of Proposition \ref{prop::lamb}}\label{app:proof_prop_lamb}
\paragraph{Property (a)}
By a symmetry argument we prove the result for a trade on the ask side only.
Let $l$ be a deterministic, positive, non-increasing and exponentially bounded function. $\Mc^a_l$ is non-empty as it contains the Gaussian density functions. Let $m\in\Mc^a_l$.\\

The image of $m$ by the map is the function defined by $$\tilde m (z,x) = \frac{l(z-x)m(x)}{\int l(z-y)m(y) dy}$$ and is well-defined and positive by the definition of $\Mc^a_l$. Suppose $\tilde m$ does not depend on $z$. From $m>0$ and $\tilde m (z,x)=\tilde m (0,x)$, we get
$$l(z-x) = f(z)l(-x)$$
for any $x\in\R$, $z\in\R$ and for a positive function $f$ which is given by $f(z)=\frac{\int l(z-y)m(y) dy}{\int l(-y)m(y) dy}$ and does not depend on $x$. Taking $x=0$ we get $l(z)=f(z)l(0)$ so
$$l(0)l(z-x) = l(z)l(-x)$$
for all $(y,x)\in\R^2$. Using the monotonicity of $l$ we have $l\in\{x\mapsto \lambda_0 e^{-ax}, (\lambda_0,a)\in\R^2_+\}$. The other inclusion is straightforward.

\paragraph{Property (b)}
We prove the first inclusion, as the other one is obvious.
We look for functions $l$ which are continuous, strictly decreasing, exponentially bounded, convex and four times differentiable, such that there exist functions $f,g$ on $\R$ and $h$ on $\R^2$ such that, for any $x,s_a,s_b\in\R$,
\begin{equation*}
    \begin{split}
    -(l(s^a-x) +l(x-s^b) - 2) =h(s^a,s^b)-g(x-\frac{s^a+s^b}{2})f(\frac{s^a-s^b}{2}).
    \end{split}
\end{equation*}
However up to renaming we can take $\min g = g(0) = 0$ so $h(s^a,s^b)=-2(l(\frac{s^a-s^b}{2})-1)$ and
\begin{equation*}
    \begin{split}
    -(l(s^a-x) +l(x-s^b)) =-2l(\frac{s^a-s^b}{2}) -g(x-\frac{s^a+s^b}{2})f(\frac{s^a-s^b}{2}).
    \end{split}
\end{equation*}
We can suppose that $l^{(4)}(0)>0$. Indeed as $l$ converges and is decreasing $l'$ converges to $0$ as it is increasing and bounded. We can find some point $x$ such that $l^{(3)}(x)<0$. Indeed if $l^{(3)}\geq 0$, as $l''(z)>0$ for some $z\in\R$, $l$ is strongly convex on $[z,+\infty[$, so diverges, which contradicts the convergence of $l$. Then if $l^{(4)}(y)\leq 0$ for all $y\in\R$ then $l^{(3)}(y)\leq l^{(3)}(x)$ for all $y\geq x$ so $l'$ is strongly concave on $[x,+\infty[$, so diverges, which is absurd.\\

First we note that $f(x)\neq 0$ for all $x\in\R$. So, by taking $s^a=s^b=0$ and $s^a=-s^b$ we can write
\begin{equation*}
    \begin{split}
    l(y-x) +l(y+x)-2l(y)= (l(x) +l(-x)-2l(0))\frac{f(y)}{f(0)}
    \end{split}
\end{equation*}
for any $x,y\in\R$. Differentiating twice in the $x$ direction and taking $x=0$ we find $l''(x)=l''(0)\frac{f(x)}{f(0)}$ and 
\begin{equation*}
    \begin{split}
    l''(y-x) +l''(y+x)= (l''(x) +l''(-x))\frac{l''(y)}{l''(0)}.
    \end{split}
\end{equation*}
By differentiating again twice with respect to $x$ and taking $x=0$ we get
\begin{equation*}
    \begin{split}
    l^{(4)}(y)= l^{(4)}(0)\frac{l''(y)}{l''(0)},
    \end{split}
\end{equation*}
which yields the result given that $l$ is positive and strictly decreasing.
\comment{
We now prove that the set of deterministic, positive, non-increasing functions $l$ such that there exist two functions $f,g$ on $\R$ such that, for any $x,s_a,s_b\in\R$, $l(s_a-x)+l(x-s_b)=f(\frac{s_a-s_b}{2})g(x-\frac{s_a+s_b}{2})$ is equal to the set of non-increasing deterministic exponential functions $\{x\mapsto \lambda_0 e^{-ax}, (\lambda_0,a)\in\R^2_+\}$.\\

By linearity of the functional equation and positivity of $l$ we can suppose $l(0)=1$. Also $f(x)\neq0$ and $g(x)\neq 0$ for any $x\in\R$. Taking $x=0$ and $x=\frac{s_a+s_b}{2}$ we deduce that $l(x)+l(-x)=f(0)g(x)$ and $2l(x)=f(x)g(0)$ for all $x\in\R$. Plugging this back in the equation we get that $$l(x+y)+l(x-y)=l(x)(l(y)+l(-y))$$  for all $x,y\in\R$. Integrating with respect to $y$ we find that $l$ is $C^\infty$. Differentiating and setting $y=0$ we get that $$l''(x)=l(x)l''(0)$$  for all $x\in\R$. This yields the result as $l$ is positive and decreasing.
}
\comment{\section{Proof of Proposition \ref{prop::conv_post}}\label{app:proof_conv_post}
Let
\begin{align*}
    f_t(s) = e^{-t(\lambda(S^a-s)+\lambda(s-S^b))}\lambda(S^a-s)^{N^a_t}\lambda(s-S^b)^{N^b_t}
\end{align*}
be the likelihood, and
\begin{align*}
    R_t(s)=\frac{f_t(s)}{f_t(S)}
\end{align*}
the likelihood divided by the true likelihood. Note that
\begin{align*}
    h(s) &= -\frac{1}{t}\E[\log R_t(s)], 
\end{align*}
that $\inf h=h(S)=0$ and that $S$ is the only point where $h$ attains its minimum. Note also that
\begin{align*}
    \pi_t[\mathbf{1}_A] = \frac{\pi_0[R_t\mathbf{1}_A]}{\pi_0[R_t]}
\end{align*}
for any $A\subset\R$, and
\begin{align*}
    m_t(s)=m_0(s)\frac{R_t(s)}{\pi_0[R_t]}.
\end{align*}
We write
\begin{align*}
    \frac{1}{t}\log(m_t(s)) = \frac{1}{t}\log(m_0(s))+\frac{1}{t}\log(R_t(s))+\frac{1}{t}\log(\pi_0[R_t]).
\end{align*}
The law of large numbers gives 
\begin{align*}
    \frac{1}{t}\log(R_t(s))\underset{t\rightarrow+\infty}{\rightarrow}-h(s)
\end{align*}
a.s.. Also,
\begin{align*}
    \frac{1}{t}\log(\pi_0[R_t]) = \frac{1}{t}\log(\int R_t(s)m_0(s)ds).
\end{align*}
Let $\epsilon>0$. As the support of $m_0$ is bounded, almost surely  $-h(s)-\epsilon<\frac{\log(R_t(s))}{t}<-h(s)+\epsilon$ for all $s$ such that $m_0(s)\neq 0$  for t large enough. As a consequence
\begin{align*}
    \underset{t\rightarrow+\infty}{\liminf}\frac{1}{t}\log(\pi_0[R_t])&\leq\underset{t\rightarrow+\infty}{\limsup}\frac{1}{t}\log(\int e^{t(-h(s)+\epsilon)}m_0(s)ds)\\
    &=\epsilon+\underset{t\rightarrow+\infty}{\limsup}\frac{1}{t}\log(\int e^{-th(s)}m_0(s)ds)\\
    &\leq \epsilon+\underset{t\rightarrow+\infty}{\limsup}\frac{1}{t}\log(\int e^{-th(S)}m_0(s)ds)\\
    &=\epsilon,
\end{align*}
a.s., and conversely
\begin{align*}
    \underset{t\rightarrow+\infty}{\limsup}\frac{1}{t}\log(\pi_0[R_t])&\geq -\epsilon+\underset{t\rightarrow+\infty}{\liminf}\frac{1}{t}\log(\int e^{-t\underset{u,m_0(u)\neq 0}{\max}(h)}m_0(s)ds)\\
    &=-\epsilon
\end{align*}
a.s., as $h$ is non-negative. We deduce that 
\begin{align*}
    \frac{1}{t}\log(\pi_0[R_t]) \underset{t\rightarrow+\infty}{\rightarrow}0
\end{align*}
a.s., which yields the desired result.}

\section{Proof of Theorem \ref{prop::dirac}}\label{app::proof_dirac}
\comment{Let $X$ be a random variable associated with a probability measure steady state of the KS equation. The KS equation says that for any $f\in\C^2_c(\R)$, $Cov(f(X),\lambda(S^a-X)+\lambda^b(X-S^b))=0$. $\lambda(S^a-.)+\lambda(.-S^b)$ is increasing on $[\frac{S^a+S^b}{2},\infty)$. If we take $f$ increasing with support in $[\frac{S^a+S^b}{2},\infty)$ and use the equality case for the theorem about the covariance monotone functions, we get that $f$ and $\lambda(S^a-.)+\lambda(.-S^b)$ are colinear on any Borel set $A$ included in $[\frac{S^a+S^b}{2},\infty)$ such that $\P(X\in A) = \P(X\geq \frac{S^a+S^b}{2})$. As $f$ can be chosen arbitrarily, $X$ has at most one atom on $[\frac{S^a+S^b}{2},\infty)$ and no diffuse part. Doing the same thing on the other side we get that $X$ is composed of 2 atoms at most. Doing the same reasoning for those two atoms we find that they are symmetrical.\\
\\}
Let $m_0\in\Cc(\R)$ be the density of a probability measure on $\R$ with $m_0(\frac{S^a+S^b}{2})\neq 0$.
\paragraph{Proof of \textit{(i)}.} If $t<\tau$, then
\begin{align*}
    \hat m_t(x) &= m_0(x)e^{2t-\int_0^t(\lambda(S^a_u-x)+\lambda(x-S^b_u))du}\\
     &= m_0(x)e^{2t-t(\lambda(S^a_u-x)+\lambda(x-S^b_u))},
\end{align*}
which is integrable because $\lambda$ is non-negative. Now note that $\Phi_t: x\mapsto e^{-t(\lambda(S^a_u-x)+\lambda(x-S^b_u))}$ is strictly increasing on $]-\infty,\frac{S^a+S^b}{2}]$ and strictly decreasing on $[\frac{S^a+S^b}{2},+\infty[$. We look at the renormalized density $u_t$. At $y = \frac{S^a+S^b}{2}$, we have $m_0(y)\neq 0$ by the assumption, and 
\begin{align*}
    u_t(y) &= \frac{m_0(y)\Phi_t(y)}{\int m_0(x)\Phi_t(x)dx}= \frac{1}{\int \frac{m_0(x)}{m_0(y)}\frac{\Phi_t(x)}{\Phi_t(y)}dx}.
\end{align*}
Now $m_0$ is integrable and $\frac{\Phi_t(x)}{\Phi_t(y)}\longrightarrow 0$ for any $x\neq y$. By dominated convergence theorem we deduce that $u_t(y)\longrightarrow +\infty$.\\
Now take $y\neq \frac{S^a+S^b}{2}$ such that $m_0(y)\neq 0$. Without loss of generality we suppose $y<\frac{S^a+S^b}{2}$. Observe that for any $t$, $\Phi_t$ is symmetric around $\frac{S^a+S^b}{2}$. So 
\begin{align*}
    u_t(y) &= \frac{m_0(y)\Phi_t(y)}{\int m_0(x)\Phi_t(x)dx}\\
    &= \frac{1}{\int \frac{m_0(x)}{m_0(y)}\frac{\Phi_t(x)}{\Phi_t(y)}dx}\\
    & = \frac{1}{\int_{-\infty}^y \frac{m_0(x)}{m_0(y)}\frac{\Phi_t(x)}{\Phi_t(y)}dx+\int_y^{S^a+S^b-y} \frac{m_0(x)}{m_0(y)}\frac{\Phi_t(x)}{\Phi_t(y)}dx+\int_{S^a+S^b-y}^{+\infty} \frac{m_0(x)}{m_0(y)}\frac{\Phi_t(x)}{\Phi_t(y)}dx}.
\end{align*}
By the same reasoning as before the first and third terms in the denominator tend to $0$. Now note that there exists a closed interval of the form $[\frac{S^a+S^b}{2}-\epsilon,\frac{S^a+S^b}{2}+\epsilon]$ in $]y,S^a+S^b-y[$ with $\epsilon>0$ and $m_0(x)>\frac{m_0(\frac{S^a+S^b}{2})}{2}$ for any $x$ in this interval. So
\begin{align*}
    \int_y^{S^a+S^b-y} \frac{m_0(x)}{m_0(y)}\frac{\Phi_t(x)}{\Phi_t(y)}dx&\geq \int_{\frac{S^a+S^b}{2}-\epsilon}^{\frac{S^a+S^b}{2}+\epsilon} \frac{m_0(x)}{m_0(y)}\frac{\Phi_t(x)}{\Phi_t(y)}dx\\
    &\geq 2\epsilon\frac{m_0(\frac{S^a+S^b}{2})}{2m_0(y)}\frac{\Phi_t(\frac{S^a+S^b}{2}-\epsilon)}{\Phi_t(y)}\longrightarrow +\infty
\end{align*}
and 
\begin{align*}
    u_t(y)\longrightarrow 0.
\end{align*}

\paragraph{Proof of \textit{(ii)}}
Let $y\in\R$. Observe that
\begin{align*}
    u_t(y)& = \frac{m_0(y)\Phi_t(y)}{\int m_0(x)\Phi_t(x)}=\frac{m_0(y)\frac{\Phi_t(y)}{\Phi_t(\frac{S^a+S^b}{2})}}{\int m_0(x)\frac{\Phi_t(x)}{\Phi_t(\frac{S^a+S^b}{2})}dx},
\end{align*}
and 
\begin{align*}
    &m_0(y)\frac{\Phi_t(y)}{\Phi_t(\frac{S^a+S^b}{2})}=m_0(y)e^{-2\lambda_0e^{-a\frac{S^a-S^b}{2}}t(\cosh(y-\frac{S^a+S^b}{2})-1)}=m_0(y)e^{-\frac{t}{t_1}(\cosh(y-\frac{S^a+S^b}{2})-1)}
\end{align*}
where $t_1 = \frac{e^{a\frac{S^a-S^b}{2}}}{2\lambda_0}$. We split the integral as
\begin{align*}
    &\int m_0(x)\frac{\Phi_t(x)}{\Phi_t(\frac{S^a+S^b}{2})}dx=\int_{-\infty}^{\frac{S^a+S^b}{2}} m_0(x)\frac{\Phi_t(x)}{\Phi_t(\frac{S^a+S^b}{2})}dx+\int_{\frac{S^a+S^b}{2}}^{+\infty} m_0(x)\frac{\Phi_t(x)}{\Phi_t(\frac{S^a+S^b}{2})}dx
\end{align*}
and we study the second term. We get
\begin{align*}
    \int_{\frac{S^a+S^b}{2}}^{+\infty} m_0(x)\frac{\Phi_t(x)}{\Phi_t(\frac{S^a+S^b}{2})}dx&=\frac{t_1}{t}\int_0^{+\infty}m_0(\frac{S^a+S^b}{2}+\text{arccosh}(1+u\frac{t_1}{t}))\frac{e^{-u}}{\sqrt{(1+u\frac{t_1}{t})^2-1}}du\\
    &=\sqrt{\frac{t_1}{t}}\int_0^{+\infty}m_0(\frac{S^a+S^b}{2}+\text{arccosh}(1+u\frac{t_1}{t}))\frac{e^{-u}}{\sqrt{2u+u^2\frac{t_1}{t}}}du\\
    &\sim\sqrt{\frac{t_1}{t}}m_0(\frac{S^a+S^b}{2})\int_0^{+\infty}\frac{e^{-u}}{\sqrt{2u}}\\
    &=\sqrt{\frac{t_1}{t}}m_0(\frac{S^a+S^b}{2})\frac{\sqrt{\pi}}{2}
\end{align*}
and we conclude that
\begin{align*}
    u_t(y)\sim \sqrt{\frac{t}{\pi t_1}}\frac{m_0(y)}{m_0(\frac{S^a+S^b}{2})}e^{-\frac{t}{t_1}(\cosh(y-\frac{S^a+S^b}{2})-1)}.
\end{align*}

\section{Proof of Proposition \ref{prop::slow}}\label{app::proof_slow}
The system \eqref{sys2} can be rewritten as
\begin{align}\label{sys_dem}
    \begin{split}
    \frac{1}{t_1}(u(t)+tu'(t))&=2\beta\frac{1}{v(t)+u(t)} +\frac{1}{t_1},\\
    \frac{1}{t_1}(v(t)+tv'(t))&=-2\beta\frac{1}{v(t)+u(t)} +\frac{1}{t_1},
\end{split}
\end{align}
so 
\begin{align*}
    (tv(t))'=(tu(t))'+2.
\end{align*}
This implies that
\begin{align*}
    v(t)=-u(t)+2+\frac{K}{t}
\end{align*}
for some constant $K$. Plugging this into \ref{sys_dem} and integrating we get
\begin{align*}
    u(t)t=&t(1+\beta t_1)-\beta t_1 \frac{K}{2}\log(t+\frac{K}{2})+R\\
    v(t)t=&t(1-\beta t_1)+\beta t_1 \frac{K}{2}\log(t+\frac{K}{2})+R'
\end{align*}
for constant $R,R'$. This complies with the previous inequality only for $K=R=R'=0$.
\end{document}